\DeclarePairedDelimiter\floor{\lfloor}{\rfloor}
\newcommand{\beq}{\begin{equation}}
\newcommand{\eeq}{\end{equation}}
 \newcommand\ben{\begin{equation*}}
 \newcommand\ebn{\end{equation*}}
\def\bU{{\bf U}}
\def\ep{\epsilon}
\def\eq{\begin{equation}}
\def\endeq{\end{equation}}
\def\bbm{\begin{bmatrix}}
\def\ebm{\end{bmatrix}}
\def\bpm{\begin{pmatrix}}
\def\epm{\end{pmatrix}}
\def\bvm{\begin{vmatrix}}
\def\evm{\end{vmatrix}}
\def\sg{\sigma}
\def\a{\alpha}
\def\ts{\widetilde{\sigma}}
\newcommand\lb{\left(}
\newcommand\rb{\right)}
\newtheorem{theorem}{Theorem}
\newtheorem{conjecture}{Conjecture}
\newtheorem{lemma}{Lemma}
\theoremstyle{definition}
\newtheorem{remark}{Remark}
\title[Characteristic polynomials and Painlev\'e equations]{A representation of joint moments of CUE characteristic polynomials in terms of Painlev\'e functions}
\author[E. Basor]{Estelle Basor}
\address[E. Basor]{American Institute of Mathematics, 600 E. Brokaw Rd, San Jose, CA 95112, USA}
\email{ebasor@aimath.org}
\author[P. Bleher]{Pavel Bleher}
\address[P. Bleher]{Department of Mathematical Sciences,
 Indiana University-Purdue University Indianapolis,
 402 N. Blackford St.,
 Indianapolis, IN 46202-3267, USA}
\email{pbleher@iupui.edu}
\author[R. Buckingham]{Robert Buckingham}
\address[R. Buckingham]{Department of Mathematical Sciences\\ University of Cincinnati\\ PO Box 210025\\ Cincinnati, OH 45221, USA.}
\email{buckinrt@uc.edu}
\author[T. Grava]{Tamara Grava}
\address[T. Grava]{Area of Mathematics, SISSA, via Bonomea 265 - 34136, Trieste, Italy, School of Mathematics, University of Bristol, Bristol, BS8 1TW, United Kingdom, and Aix-Marseille Universit\'e, CNRS, Soci\'et\'e Math\'ematique de France, CIRM (Centre International de Rencontres Math\'ematiques), Marseille, France}
\email{grava@sissa.it}
\author[A. Its]{Alexander Its}
\address[A. Its]{Department of Mathematical Sciences,
 Indiana University-Purdue University Indianapolis,
 402 N. Blackford St.,
 Indianapolis, IN 46202-3267, USA}
\email{aits@iupui.edu}
\author[E. Its]{Elizabeth Its}
\address[E. Its]{Department of Mathematical Sciences,
 Indiana University-Purdue University Indianapolis,
 402 N. Blackford St.,
 Indianapolis, IN 46202-3267, USA}
\email{enits@iupui.edu}
\author[J.P. Keating]{Jonathan P. Keating}
\address[J.P. Keating]{School of Mathematics, University of Bristol, Bristol, BS8 1TW, United Kingdom}
\email{j.p.keating@bristol.ac.uk}
\begin{document}
\begin{abstract}
We establish a representation of the joint moments of the characteristic polynomial of a CUE random 
matrix and its derivative in terms of a solution of the $\sigma$-Painlev\'e V equation.  The derivation involves the analysis of a formula for the joint moments in terms of a determinant of generalised Laguerre polynomials using the Riemann-Hilbert method.  We use this connection with the  $\sigma$-Painlev\'e V equation to derive explicit formulae for the joint moments and to show that in the large-matrix limit the joint moments are related to a solution of the $\sigma$-Painlev\'e III$^\prime$ equation.  Using the conformal block expansion of the $\tau$-functions associated with the $\sigma$-Painlev\'e V  and the $\sigma$-Painlev\'e III$^\prime$ equations leads to general conjectures for the joint moments.
\end{abstract}
\maketitle

\section{Introduction}
Let $U\in \bU(N)$ be taken from the Circular Unitary Ensemble (CUE) of random matrices. Consider its characteristic polynomial,
\begin{equation}\label{charact}
Z_{U}(\theta) := \prod_{n=1}^{N}\Bigl(1 - e^{i(\theta_n - \theta)}\Bigr),
\end{equation}
where $e^{i\theta_1}, ..., e^{i\theta_{N}}$ are the eigenvalues of $U$  and $\theta_i\in[0,2\pi)$. Put
\begin{equation}\label{Vdef}
V_{U}(\theta) := \exp\left(iN\frac{\theta+\pi}{2} - i\sum_{n=1}^{N}\frac{\theta_n}{2}\right)Z_{U}(\theta),
\end{equation}
so that $V_{U}(\theta) $ is real-valued for $\theta \in [0, 2\pi)$. The objects of our study  are the joint moments 
of the function $V_{U}(\theta) $ and its derivative,
\begin{equation}\label{FNdef}
F_{N}(h,k) := \int_{\bU(N)}|V_{U}(0)|^{2k-2h}|V'_{U}(0)|^{2h}d\mu^{\mbox{Haar}},
\end{equation}
where it is assumed that
$$
h > -\frac{1}{2}\quad \mbox{and}\quad k > h - \frac{1}{2}.
$$
Here $d\mu^{\mbox{Haar}}$ is Haar measure, the unique probability 
measure over the $N\times N$ unitary matrices that is invariant under the 
action of the unitary group \cite{Conway:1990}.

These joint moments have been the focus of a number of previous studies: when $h=0$ they can be computed in several different ways -- see, for example \cite{KeatingS:2000, BumpG}; similarly, when $h=k$ they can be computed using standard techniques \cite{ConreyRS:2006}; the general mixed moments for $h, k \in {\mathbb N}$ have been analysed by combining these approaches \cite{Hughes1, Hughes2, Dehaye:2008, Dehaye:2010, Winn:2012, Riedtmann}.  The results obtained suggest that in general $F_{N}(h,k)$ grows like $N^{k^2+2h}$ as $N\rightarrow\infty$ and it is a key problem to  prove this and then to evaluate the limit
\begin{equation}\label{Fdef}
F(h, k) := \lim_{N\rightarrow \infty}\frac{1}{N^{k^2 + 2h}} F_{N}(h,k).
\end{equation}

For $h, k \in {\mathbb N}$, $k > h - 1/2$, an expression for $F_{N}(h, k)$ 
was obtained in \cite{Dehaye:2008}  in terms of certain  sums over partitions.
A similar answer was also found in the case 
$h = (2m -1)/2$, $m \in {\mathbb N}$, $k \in {\mathbb N}$, $k > h - 1/2$  in 
\cite{Winn:2012} (see \cite{Winn:2012} also for a survey on other related 
results). However, these formulae do not allow for easy computation beyond the first few values of $k$ and $h$, in part because they are not recursive; also, they do not lead straightforwardly to formulae that extend to non-integer or non-half-integer values of $k$ and $h$.  The goal remains to analyse the large-$N$ limit of $F_{N}(h, k)$
and evaluate the quantity $F(h, k)$ for arbitrary real $k$ and $h$.

One motivation for studying the joint moments is as follows.  In 1973, Montgomery \cite{Montgomery:1973} conjectured that, assuming the 
Riemann hypothesis, the distances between appropriately normalised pairs of 
zeros of the Riemann zeta function follow a certain distribution previously 
shown by Dyson \cite{Dyson:1962} to describe spacings between pairs of 
eigenvalues of unitary random matrices.  Further evidence of a connection 
between number theory and random matrix theory was given when Keating and 
Snaith \cite{KeatingS:2000,KeatingS:2000b} used results for the 
characteristic polynomial of a random unitary matrix to formulate conjectures 
about moments of the zeta function that are supported by number-theoretic and 
numerical results (see, for instance, the review articles \cite{Keating2005, Snaith:2010, KeatingS:2011, Keating2017}).
Since then, a number of more general results on the joint moments of the 
characteristic polynomial and its derivative have been proven and used to 
formulate conjectures about the joint moments of $L$-functions and their 
derivatives 
\cite{ConreyFKRS:2003,ConreyFKRS:2005,ConreyFKRS:2008,ConreyRS:2006,Dehaye:2008,HughesKO:2000,HughesKO:2001,Winn:2012}.

Our objective is to connect the joint moments of characteristic 
polynomials with the theory of Painlev\'e equations (see also 
\cite{BornemannFM:2017} for a recent result connecting Painlev\'e functions 
with random matrix spacing distributions related to zeta zeros in a different 
direction).  Solutions to the  
Painlev\'e equations play an important role in many aspects of random matrix 
theory (see, for instance, \cite{ForresterW:2015,Its:2011}) and are  
amenable to asymptotic analysis \cite{DeiftZ:1995,FokasIKN:2006}.   

The results presented here relate $F_{N}(h,k)$ to a particular solution of the $\sigma$-Painlev\'e V equation and the  limiting function \eqref{Fdef}
to a particular solution of the $\sigma$-Painlev\'e III$^\prime$ equation.
 We see them as the first step in a longer-term project to use 
asymptotic analysis of Painlev\'e functions and related objects from the 
theory of integrable systems to obtain stronger asymptotic results on 
joint moments of characteristic polynomials.

\subsection{Results}
Let $L_n^{(\alpha)}(s)$ be the generalised Laguerre polynomial
\begin{equation}\label{laugerre1}
L^{(\alpha)}_n(s) := \frac{e^s}{s^{\alpha}n!}\frac{d^n}{ds^n}\Bigl(s^{\alpha +n}e^{-s}\Bigr) = \sum_{j=0}^{n}\frac{\Gamma(n+\alpha +1)}{\Gamma(j+\alpha +1)(n-j)!}\frac{(-s)^j}{j!}
\end{equation}
and define 
\eq\label{jc4}
K_n(\ep, y):=\frac {(-1)^n}{\pi} \frac{\partial^n}{\partial\ep ^n}\left(\frac{\ep}{\ep^2+y^2}\right).
\endeq
As we will explain in \S\ref{sec-int-rep}, $F_N(h,k)$ is related to the 
generalised Laguerre polynomials by 
\begin{equation}\label{FNint2}
F_N(h,k)=\lim_{\ep \rightarrow 0} (-1)^\frac{k(k-1)}{2}2^{-2h}\int_{-\infty}^{\infty}
K_{2h} (\ep,y)e^{{-N|y|} }\det\left[L^{(2k-1)}_{N+k-1-(i+j)}(-2|y|)\right]_{i,j=0,\dots,k-1} dy,
\end{equation}
with $N>k-1$.
Our main theorem relates the Laguerre determinant to a specific solution of 
the equation
\beq\label{painleveV}
\begin{split}
\left ( x\frac{d^2\sigma}{d x^2}\right) ^2= & \left ( \sigma -x\frac {d\sigma }{dx}+
2\left ( \frac{d \sigma}{dx}\right )^2 -2N\frac {d\sigma }{dx} \right ) ^2 \\
& -4\frac {d\sigma }{dx}\left (-N+\frac {d\sigma }{dx} \right) \left (-k-N +\frac {d\sigma }{dx}\right)\left(k+\frac {d\sigma }{dx}\right).
\end{split}
\eeq
Equation \eqref{painleveV} is a special case of the $\sigma$-Painlev\'e 
V equation with three parameters given in \eqref{sigma_jeq} (see Okamoto 
\cite{Okamoto:1980a,Okamoto:1980b,Okamoto:1987} and Jimbo and 
Miwa \cite{JM}).  For further examples of solutions of Painlev\'e equations 
expressed as Wronskian determinants of generalised Laguerre polynomials or 
confluent hypergeometric functions see 
\cite{ForresterW:2002,KM,Masuda:2004,NoumiY:1998}.

\begin{theorem}
\label{thm-PainleveV-Laguerre}
We have the representation 
\beq\label{Sum1}
\det \left[L_{N+k-1-(i+j)}^{(2k-1)} (-2|y|)\right] _{i,j=0,\cdots ,k-1}
=\frac{e^{-2k|y|}}{(2\pi i)^k} H_k[w_{0}],
\eeq
where $H_k [w_{0} ]=H_ n [w_{0} ]|_{n=k}$, and $H_n[w_0]$ is the Hankel determinant
\beq\label{Sum3}
H_n[w_0] := {\mbox{det}}\left[\int_C w_{0} (s) s^{i+j} ds \right] _{i,j=0,\cdots,n-1}
\eeq
with the weight
\beq\label{Sum2}
w_{0}(s) := \frac{e^{\frac{x}{1-s}}}{(1-s)^{2k} s^{N+k}},\;\; x=2|y|.
\eeq
Here $C$ is a small  (radius less than 1) positively oriented  circle around zero.  Furthermore, 
\beq\label{Sum4}
\frac{d}{d x} \log H_k=\frac{\sigma(x)+kx+Nk}{x},
\eeq
where $\sigma (x) $ is a  solution of the $\sigma$-Painlev\'e V equation \eqref{painleveV} with asymptotics
\beq
\label{s_asym}
\sigma(x)=-Nk + \frac{N}{2}x + \mathcal{O}(x^2), \quad x\rightarrow 0.
\eeq
\end{theorem}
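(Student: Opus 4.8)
The plan is to prove the three assertions in order, letting a Riemann--Hilbert / isomonodromy analysis of the Hankel determinant \eqref{Sum3} do the central work.

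For the algebraic identity \eqref{Sum1} I would start from the generating function of the generalised Laguerre polynomials,
\[
\sum_{n\ge 0} L_n^{(\alpha)}(s)\,t^n=\frac{1}{(1-t)^{\alpha+1}}\exp\!\left(-\frac{st}{1-t}\right),\qquad |t|<1,
\]
which by Cauchy's formula yields the contour representation $L_n^{(\alpha)}(s)=\frac{1}{2\pi i}\oint \frac{\exp(-st/(1-t))}{(1-t)^{\alpha+1}}\,\frac{dt}{t^{n+1}}$ over a small circle about $t=0$. Putting $\alpha=2k-1$, $s=-x$ with $x=2|y|$, and $n=N+k-1-(i+j)$, and using $-st/(1-t)=x/(1-t)-x$ together with $t^{-(n+1)}=t^{i+j}t^{-(N+k)}$, each entry of the Laguerre matrix becomes $\frac{e^{-x}}{2\pi i}\int_C w_0(t)\,t^{i+j}\,dt$ with $w_0$ as in \eqref{Sum2}. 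Factoring the common scalar $e^{-x}/(2\pi i)$ out of each of the $k$ rows gives \eqref{Sum1}. This step is routine.

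The substance is the connection with \eqref{painleveV}. I would introduce the monic polynomials $\pi_n(s;x)$ satisfying the (non-Hermitian) orthogonality $\int_C \pi_i(s)\pi_j(s)w_0(s)\,ds=h_j\delta_{ij}$ along $C$, so that $H_n=\prod_{j=0}^{n-1}h_j$, and set up the Fokas--Its--Kitaev $2\times2$ problem $Y(z;x)$ with jump $\begin{pmatrix}1 & w_0\\ 0 & 1\end{pmatrix}$ on $C$ and normalisation $Y(z)\sim(I+O(z^{-1}))z^{n\sigma_3}$. Because $\partial_s\log w_0=x(1-s)^{-2}+2k(1-s)^{-1}-(N+k)s^{-1}$ and $\partial_x\log w_0=(1-s)^{-1}$ are rational, differentiating $Y$ in $z$ and in $x$ produces a compatible Lax pair $\partial_zY=AY$, $\partial_xY=BY$ with $A,B$ rational in $z$, whose singularity data consist of a rank-one irregular point at $z=1$ (from the essential singularity $e^{x/(1-s)}$ together with $(1-s)^{-2k}$), a regular singular point at $z=0$ (from $s^{-(N+k)}$), and a regular singular point at $z=\infty$ (from the normalisation). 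This is exactly the monodromy configuration of Painlev\'e~V, so by the Jimbo--Miwa--Ueno theory $H_n$ is the corresponding isomonodromic $\tau$-function in the deformation variable $x$. Reading off the Hamiltonian from the residues of $A$ and $B$, I would then verify that the quantity $\sigma$ defined through \eqref{Sum4} satisfies \eqref{painleveV}; the four roots $\sigma'\in\{0,N,N+k,-k\}$ of the product term and the coefficient $-2N$ of the linear term are the formal monodromy exponents attached to $z=0,1,\infty$.

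For the asymptotics \eqref{s_asym} I would expand at $x=0$. Setting $x=0$ leaves the weight $(1-s)^{-2k}s^{-(N+k)}$, whose moments $\int_C s^{i+j}(1-s)^{-2k}s^{-(N+k)}\,ds=2\pi i\binom{N+3k-2-(i+j)}{2k-1}$ are explicit binomials, so $H_k(0)$ is finite and nonzero. Since $\sigma(x)=x\frac{d}{dx}\log H_k-kx-Nk$ by \eqref{Sum4}, finiteness of $\frac{d}{dx}\log H_k$ at $x=0$ already gives $\sigma(0)=-Nk$. Differentiating $w_0$ once in $x$ brings down a factor $(1-s)^{-1}$, and evaluating $\frac{d}{dx}\log H_k\big|_{x=0}=\operatorname{tr}(M^{-1}\dot M)\big|_{x=0}$ from the moment matrix $M$ gives $k+\tfrac N2$, hence $\sigma'(0)=\tfrac N2$. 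As a check, inserting $\sigma=-Nk+ax+O(x^2)$ into \eqref{painleveV} forces $(2a^2-2Na-Nk)^2=4a(a-N)(a-N-k)(a+k)$, which $a=\tfrac N2$ indeed satisfies.

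I expect the real difficulty to lie in the second step: writing the Lax pair explicitly and, above all, matching the output of the isomonodromy deformation to the precise coefficients of \eqref{painleveV}. The confluent structure at $z=1$, the bookkeeping of residues at the three singular points, and fixing the normalisation so that the Hamiltonian reproduces exactly the combination $x\frac{d}{dx}\log H_k-kx-Nk$ are where the genuine work is concentrated.
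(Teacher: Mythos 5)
Your proposal follows essentially the same route as the paper: the contour-integral representation of the Laguerre polynomials gives \eqref{Sum1} exactly as in the paper's \S3.2, the Fokas--Its--Kitaev Riemann--Hilbert problem for the weight $w_0$ on $C$ yields an isomonodromic Lax pair of Painlev\'e V type whose Hamiltonian is $\frac{d}{dx}\log H_k$ up to elementary terms, and the initial data follow from $\sigma(0)=-Nk$ (finiteness of $\frac{d}{dx}\log H_k$ at $x=0$) together with the degeneracy of \eqref{painleveV} at $x=0$, which, as you check, forces $\sigma'(0)=N/2$ --- precisely the paper's Lemma \ref{lemma_exp}. The only steps you compress are the two places where the paper spends its real effort, and you correctly flag them as such: it first maps $t\mapsto z=1/(1-t)$ and then gauges away scalar factors so that the irregular singularity sits at $z=\infty$ and the Lax pair literally matches the Jimbo--Miwa normal form with $(\theta_0,\theta_1,\theta_\infty)=(-2k+3k+N,\,-N-k,\,-2k)$ (rather than working with the irregular point at $s=1$ as you propose), and it establishes the identification $\frac{d}{dx}\log H_n=-(\phi_1)_{11}+\frac{N+k}{2}$ not by citing general $\tau$-function theory but as a separate lemma (Lemma \ref{Hn-m-lemma}) proved via a difference-equation argument in $n$ together with an explicit computation at $n=1$.
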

Theorem \ref{thm-PainleveV-Laguerre} is proven in 
\S\ref{sec-PainleveV-Laguerre}.

The  solution of the $\sigma$-Painlev\'e V equation considered here is a rational solution. Rational solutions of Painlev\'e equations have been obtained   in \cite{Clarkson2005,Clarkson2013,Masuda} in terms of Wronksian determinants of confluent hypergeometric functions which include also the case of generalised Laguerre polynomials.
 The relation between the Wronksian determinant  in \cite{Clarkson2013} and our Hankel determinant formula \eqref{Sum3} is not immediate.  For this reason 
 we here provide an alternate proof that is quite straightforward and algorithmic to show that such determinant is a particular solution of the  Painlev\'e V equation.
 Furthermore, we show that $F_N(h,k)$ can be evaluated recursively from equation \eqref{painleveV} for integer values of $h$, giving formulae that extend to all $k$.
Using the conformal block expansion of the  $\tau$ function of the Painlev\'e V  equation introduced by Lisovyy,  Nagoya, and Roussillon \cite{Lisovyy}, we give a combinatorial expression of the coefficients $F_N(h,k)$ (see~\S\ref{conformal}).

Our second main result concerns the evaluation of the function $F(h,k)$ in 
\eqref{Fdef} for $h, k \in {\mathbb N}$, $k > h - 1/2$.  
Let $\xi(t)$ be the particular solution of the equation
\begin{equation}\label{sl40}
\begin{aligned}
\left(t\frac{d^2\xi}{dt^2}\right)^2&=- 4t\left(\frac{d\xi}{dt}\right)^3+\left(4k^2+4\xi\right)\left(\frac{d\xi}{dt}\right)^2 + t\frac{d\xi}{dt}-\xi,
\end{aligned}
\end{equation}
with initial conditions
\begin{equation}
\xi(0)=0\,,\quad \xi'(0)=0
\end{equation}
where prime denotes derivative with respect to $t$.
Equation \eqref{sl40} is a special case of the $\sg$-Painlev\'e III$^\prime$ equation 
with two parameters (cf. \eqref{sigma-Painleve-III}).  For background on 
the $\sigma$-Painlev\'e III$^\prime$ equation see Okamoto 
\cite{Okamoto:1980a,Okamoto:1980b,Okamoto:1987b} and Jimbo and Miwa \cite{JM}.
We show in Theorem~\ref{Theorem2}, which we state in \S\ref{CTM}, that 
\begin{equation}
F(h,k)
=(-1)^{h}\,\dfrac{G(k+1)^2}{G(2k+1)}\frac{d^{2h}}{dt^{2h}}
\left[\exp\int_{0}^t\left(\frac{\xi(s)}{s}ds\right)\right]\Bigg|_{t=0},
\end{equation}
where  $G$ is the Barnes function (see Appendix~\ref{Barnes}).
Furthermore,  introducing the $\tau$-function of the  Painlev\'e III    equation  defined as  
$t\dfrac{d}{dt}\log \tau_{III}(t)=\xi(t)$,  we have 
\[
F(h,k)=(-1)^{h}\frac{d^{2h}}{dt^{2h}}\tau_{III}(t)|_{t=0}.
\]
Using the conformal block expansion of $\tau_{III}(t)$   near $t=0$ we arrive at the conjectural  expression
\begin{equation}
F(h,k)=(-1)^{h}\dfrac{G(k+1)^2}{G(2k+1)}(2h)!\sum\limits_{\overset{\lambda\in\mathbb{Y}}{|\lambda|=2h,\lambda_1\leq k}}\prod_{(i,j)\in\lambda}
 \frac{ \left(2k+i-j\right)(k+i-j)}{
 h_{\lambda}^2(i,j)\left(\lambda'_j-i-j+1+2k\right)^2},\quad k>h-\frac{1}{2},
 \end{equation}
where the  sum is taken over all   Young diagrams   with  $2h$  boxes   and first row with at most $k$ boxes,  $ h_{\lambda}(i,j)$ is the hook length of the box $(i,j)$ associated to the diagram  $\lambda$, and 
 $\lambda_j'$  is the number of boxes  in the $j$  column of the   transpose diagram $\lambda'$   (see \S\ref{conformal}).
The values of the above coefficients for small values of $h$  are consistent with the results obtained  by Dehaye in \cite{Dehaye:2008}.

The connection between moments of characteristic polynomials of the unitary group and the Painlev\'e III$^\prime$ equation has already appeared in the literature.
In particular, defining  the 
characteristic polynomial of  the unitary matrix $A$ to be 
\eq
\Lambda_A(s):=\det[I-s A^*] = \prod_{n=1}^N(1-se^{-i\theta_n}),
\endeq
Conrey, Rubinstein, and Snaith proved the following.

\vspace{.1in}

\noindent
{\bf Previous result} (Conrey, Rubinstein, and Snaith \cite{ConreyRS:2006}).  
{\it For fixed $k$ and $N\rightarrow\infty$,}
\eq
\int_{U(N)}|\Lambda_A'(1)|^{2k}dA_N = b_k N^{k^2+2k} + \mathcal{O}\left(N^{k^2+2k-1}\right),
\endeq
{\it where}
\eq
\label{crs-result}
b_k:=(-1)^{k(k+1)/2}\sum_{h=0}^k\begin{pmatrix} k \\ h \end{pmatrix} \left.\left(\frac{d}{dt}\right)^{k+h}\left(e^{-t}t^{-k^2/2}\det_{k\times k}[I_{i+j-1}(2\sqrt{t})]\right)\right\vert_{t=0}.
\endeq
{\it Here $I_\nu(z)$ is the modified Bessel function of the first kind.}

\vspace{.1in}

Forrester and Witte \cite{ForresterW:2006} then showed that 

\vspace{.1in}

\eq
b_k = \frac{(-1)^k}{k!}\prod_{j=1}^k\frac{j!}{(j+k-1)!}\sum_{h=0}^k\begin{pmatrix} k \\ h \end{pmatrix} \left.\left(\frac{d}{dt}\right)^{k+h} \tau_k(t) \right\vert_{t=0},
\endeq
where $\tau_k(t)$ is a $\tau$ function of the Painlev\'e III$^\prime$ 
equation.

In work independent of ours, the method of \cite{ConreyRS:2006} has recently been extended to the joint moments of $\Lambda_A(s)$ in \cite{BBBCPRS}, yielding a result equivalent to our Theorem 2.

\subsection{Next steps}
As mentioned already, we see our result as providing a new starting point from which one can attempt to derive large-$N$ asymptotics and so seek to prove (\ref{Fdef}) and evaluate $F(h, k)$ for general $h$ and $k$.  In  \S\ref{conformal}
 we present the conformal block expansion of  the  $\tau$ function of the Painlev\'e  V   equation   \cite{Lisovyy}  and Painlev\'e III equation \cite{Lisovyy1}. Such expansions  could be potentially used to obtained formulae for $F_{N}(h,k)$ and $F(h,k)$   beyond integer values of $k$.

\

\noindent
{\bf Acknowledgements.}  The authors thank the American Institute of 
Mathematics for hospitality during the February 2017 workshop on 
\emph{Painlev\'e Equations and Their Applications}, at which this project 
began.  We thank Oleg Lisovyy for providing the Mathematica code for 
calculating the conformal blocks of the Painlev\'e equations, Peter Clarkson 
for useful discussions, and the anonymous referees for helpful 
comments.
P. Bleher was supported by NSF Grant DMS-1565602.
R. Buckingham was supported by the Charles Phelps Taft Research Center by a 
Faculty Release Fellowship and by NSF grant 
DMS-1615718.  
T. Grava was supported by H2020-MSCA-RISE-2017 PROJECT No. 778010 IPADEGAN. 
A. Its was supported by NSF Grant DMS-1700261 and Russian Science Foundation grant No.17-11-01126.
E. Its was supported by  Russian Science Foundation grant No.17-11-01126.
J.P. Keating was supported by a Royal Society Wolfson Research Merit Award, EPSRC Programme Grant EP/K034383/1
LMF: $L$-Functions and Modular Forms, and ERC Advanced Grant 740900 (LogCorRM).

\section{Integral representations for \texorpdfstring{$F_{N}(h,k)$}{TEXT}}
\label{sec-int-rep}

Our starting formulae are integral representations for $F_{N}(h, k)$ obtained in \cite{Winn:2012} for integer $k$ and integer or half-integer $h$.
The first formula involves an $(N+1)$-fold integral. 

\noindent
{\bf Proposition 1} (Proposition 1 of \cite{Winn:2012}). Let  $n \in {\mathbb N}_0 \equiv {\mathbb N} \cup \{0\}$, and define $K_n(\epsilon,y)$ by 
\eqref{jc4}.
Then, if $2h \in {\mathbb N}_0$ and $k > h -\frac{1}{2}$,
\begin{equation}\label{FNint1}
F_{N}(h,k) =\lim_{\ep \rightarrow 0} \frac{2^{N^2+2kN-2h}}{(2\pi )^N N!}
\int_{-\infty}^{\infty}\int_{-\infty}^{\infty}...\int_{-\infty}^{\infty}
K_{2h}(\ep, y) \prod_{j=1}^{N}\frac{e^{iyx_j}}{(1+x^2_j)^{N+k}}\Delta^2({\bf x})d{\bf x}dy,
\end{equation}
where $\Delta^2({\bf x}) \equiv  \Delta^2(x_1, ..., x_N)$ is the Vandermond determinant,
$$
\Delta^2({\bf x}) := \prod_{1\leq j < k \leq N}(x_k - x_j).
$$
We emphasise that formula (\ref{FNint1}) holds for any real 
$k > h -\frac{1}{2}$. As is also shown in \cite{Winn:2012}, in the case of 
integer $k$
the ${\bf x}$-integral  in the right-hand side of (\ref{FNint1}) can
be evaluated in terms of Laguerre  polynomials.  Put
$$
H(k,y) : = \int_{-\infty}^{\infty}...\int_{-\infty}^{\infty} \prod_{j=1}^{N}\frac{e^{iyx_j}}{(1+x^2_j)^{N+k}}\Delta^2({\bf x})d{\bf x}.
$$
{\bf Proposition 2} (Proposition 4 of \cite{Winn:2012}). For $k \in {\mathbb N}$ and $y \in {\mathbb R}$,
\begin{equation}\label{Hint}
H(k,y) = 
(-1)^\frac{k(k-1)}{2} \frac{(2\pi)^N N!}{2^{2kN+N^2}}e^{-N|y|} \det\left[L^{(2k-1)}_{N+k-1-(i+j)}(-2|y|)\right]_{i,j=0,\dots,k-1}.
\end{equation}
This in turn implies that for integer $k > h - \frac{1}{2}$ the $(N+1)$-fold  integral representation (\ref{FNint1}) for $F_N(h,k)$ 
can be transformed into the single integral formula \eqref{FNint2}.
This equation is our starting formula. It was also the starting point of \cite{Winn:2012}, which, for example, computed $F_{N}(h,k)$ explicitly in the case $h=\frac{1}{2} $ and $k=1$, proving the limit in (\ref{Fdef}) exists and showing that
\eq\label{jc22}
F\left(\frac{1}{2},1\right)=\frac{e^2-5}{4\pi}.
\endeq

\section {Painlev\'e V and the Laguerre determinant}
\label{sec-PainleveV-Laguerre}
In this section we  first introduce the Painlev\'e V equation and its $\tau$ function.
The goal of the section is to show that the determinant  \eqref{Sum3}  corresponds to a particular solution of the Painlev\'e V equation.

\subsection{The Painlev\'e V equation and its \texorpdfstring{$\tau$}{TEXT}-function}
In this section we summarise the main properties of the Painlev\'e V equation that can be found in \cite{JM}.
The general  Painlev\'e V  equation for a complex function $y=y(x)$ takes  the form 
\begin{equation}\label{PVy}
   \frac{d^2y}{dx^2} = \Big( \frac{1}{2y} + \frac{1}{y-1} \Big) \left(\frac{dy}{dx}\right)^2
         - \frac{1}{x} \frac{dy}{dx}
         + \frac{(y-1)^2}{x^2} \Big( \alpha y + \frac{\beta}{y} \Big) 
         + \gamma\frac{y}{x} + \delta\frac{y(y+1)}{y-1}.
\end{equation}
The coefficients $\alpha$,  $\beta$, $\gamma$, and $\delta$ are complex constants.
One can fix   $\delta = - \frac{1}{2}$ because 
the general Painlev\'e V equation with $\delta \ne 0$ can be reduced to the case
with $\delta = - \frac{1}{2}$ by the mapping $x \mapsto \sqrt{-2 \delta} x$.

The equation \eqref{PVy}  has a Lax pair, namely  it  can be written as the compatibility condition
of two linear systems  of ODEs for the $2\times 2 $ matrix function $\Phi(z,x)$, $z,x,\in\mathbb{C}$, that satisfies the equations
\begin{align}
\label{Lax1}
\dfrac{\partial\Phi}{\partial z}&=\left(\dfrac{x}{2}\sigma_3+\dfrac{A_0}{z}+\dfrac{A_1}{z-1}\right)\Phi(z,x), \quad \sigma_3:=\begin{pmatrix} 1 & 0 \\ 0 & -1 \end{pmatrix},\\
\label{Lax2}
\dfrac{\partial\Phi}{\partial x}&=\left(\dfrac{z}{2}\sigma_3+\dfrac{B_0}{x}\right)\Phi(z,x),
\end{align}
with
\beq\label{TP21}
A_0 := \begin{pmatrix}
w+\frac{\theta_0}{2}& -u(w+\theta_0) \\
u^{-1}w & -w-\frac{\theta_0}{2}
\end{pmatrix},
\eeq
\beq\label{TP23}
A_1:=\begin{pmatrix}
-w-\frac{\theta_0+\theta_\infty}{2} & uy \left( w+\frac{\theta_0-\theta_1+\theta_\infty}{2}\right) \\
-(uy)^{-1}\left(w+\frac{\theta_0+\theta_1+\theta_\infty}{2}\right) & w+\frac{\theta_0+\theta_\infty}{2}
\end{pmatrix},
\eeq
and 
\begin{equation}\label{B0}
B_0 := 
\begin{pmatrix}
0& -u(w+\theta_0)+uy \left( w+\frac{\theta_0-\theta_1+\theta_\infty}{2}\right) \\
u^{-1}w-(uy)^{-1}\left(w+\frac{\theta_0+\theta_1+\theta_\infty}{2}\right) & 0
\end{pmatrix}\frac{1}{x},
\end{equation}
where $\theta_j$, $j=0,1,\infty$ are  constant  parameters and $u\equiv u(x)$, $w\equiv w(x)$, and $y\equiv y(x)$.
As shown in \cite{JM}, the compatibility condition     of equations \eqref{Lax1} and \eqref{Lax2}, namely
\begin{equation}\label{Compatibility}
\dfrac{\partial}{\partial x}\dfrac{\partial\Phi}{\partial z}=\dfrac{\partial}{\partial z}\dfrac{\partial\Phi}{\partial x},
\end{equation}
 implies that the functions  $w, y, u$ 
 satisfy the following  $3\times3$ system of  first-order ordinary differential equations (cf. \cite[(C.40)]{JM}):
\beq\label{dydt}
x\frac{dy}{dx} = xy -2w(y-1)^2 -(y-1)\left(\frac{\theta_0-\theta_1+\theta_{\infty}}{2}y - \frac{3\theta_0+\theta_1+\theta_{\infty}}{2}\right),
\eeq
\beq\label{dxdt}
x\frac{dw}{dx} = yw\left(w+ \frac{\theta_0-\theta_1+\theta_{\infty}}{2}\right) -\frac{1}{y}(w+\theta_0)
\left(w+ \frac{\theta_0+\theta_1+\theta_{\infty}}{2}\right),
\eeq
\beq\label{dudt}
x\frac{d}{dx}\log u = -2w -\theta_0 +y\left(w+ \frac{\theta_0-\theta_1+\theta_{\infty}}{2}\right) +   
\frac{1}{y}
\left(w+ \frac{\theta_0+\theta_1+\theta_{\infty}}{2}\right).
\eeq
Equation (\ref{dudt}) just gives $u$ in terms of $y$ and $w$, while the system of two equations (\ref{dydt})--(\ref{dxdt}) is, in fact, 
equivalent to the fifth Painlev\'e equation \eqref{PVy}  for the function $y(x)$,
where
\beq\label{alpha}
\alpha= \frac{1}{2}\left(\frac{\theta_0-\theta_1 + \theta_{\infty}}{2}\right)^2,\quad \beta= -\frac{1}{2}\left(\frac{\theta_0-\theta_1 - \theta_{\infty}}{2}\right)^2, \quad \gamma= 1-\theta_0 -\theta_1, \quad \delta =- \frac{1}{2}.
\eeq

We observe that the matrices $A_0$, $A_1$, and $B_0$ in \eqref{TP21}, \eqref{TP23}, and \eqref{B0} are traceless and the eigenvalues of $A_0$ and $A_1$ are constants:
\beq\label{TP19}
{\mbox{Spect}}\, A_0=\left\{ \pm \frac{\theta_0}{2}\right\}
\eeq
and
\beq\label{TP20}
{\mbox{Spect}} \,A_1=\left\{ \pm \frac{\theta_1}{2}\right\}.
\eeq
It follows that 
  the solution  of  equation \eqref{Lax1}    in the neighbourhood of the   regular singular  points $z=0$ and $z=1$ takes the form
\beq\label{TI0}
\Phi(z)=  \widehat{\Phi}_0(z) z^{\frac{\theta_0}{2}\sigma_3},
\quad z\sim 0,
\eeq

\beq\label{TI1}
\Phi(z)=\widehat{\Phi}_{1}(z)( z-1)^{\frac{\theta_1}{2}\sigma_3},\quad z \sim 1,
\eeq
where $\widehat{\Phi}_0 (z)$ and $\widehat{\Phi}_1 (z)$ are 
holomorphic and invertible in neighbourhoods of the respective points. Regarding the behaviour of the solution of \eqref{Lax1} near the irregular singular point $z=\infty$ of Poincar\'e rank $1$, one has to consider the diagonal part, $\mbox{diag}(A_0+A_1)=-\frac{\theta_{\infty}}{2}\sigma_3$.  It follows that the formal solution  of $\Phi(z)$  in the neighbourhood of $z=\infty$  takes the form  \cite{JM}
\beq\label{TIin}
\Phi(z)=\widehat{\Phi}_{\infty}(z) e^{\frac{xz}{2}\sigma _3} z^{\frac{\theta_{\infty}}{2}\sigma_3},\quad z\sim \infty,
\eeq
where $\widehat{\Phi}_\infty(z)$ 
has the formal asymptotic expansion  at $z=\infty$
\beq\label{Phihatinfty0}
\widehat{\Phi}_{\infty}(z) = I + \frac{\phi_1}{z} +\dots,\quad \quad z\sim \infty,
\eeq
where $\phi_1$ is a matrix independent of $z$. We define the Hamiltonian $ \mathscr{H}$ as
\begin{equation}\label{HamP5}
\mathscr{H}:=-(\phi_1)_{11},
\end{equation}
where $(\phi_1)_{11}$ denotes the $11$ entry of the matrix $\phi_1$. In fact, one has
\begin{equation}
\label{phi_H}
\mathscr{H}:=-\dfrac{1}{x}\left(w-\dfrac{1}{y}\left(w+\frac{\theta_0+\theta_1+\theta_{\infty}}{2}\right)\right)
\left(w+\theta_0-y\left(w+\frac{\theta_0-\theta_1+\theta_{\infty}}{2}\right)\right)-w-\dfrac{\theta_0+\theta_\infty}{2}.
\end{equation}
Up to linear terms in $x$, the function $x \mathscr{H}$ satisfies a second-order ODE which is called the 
$\sigma$-form of the Painlev\'e V equation.
More precisely, defining the function 
\begin{equation}
\label{sigma}
\sigma:=x \mathscr{H}+\dfrac{1}{2}(\theta_0+\theta_{\infty})x+\dfrac{1}{4}(\theta_0+\theta_{\infty})^2-\dfrac{\theta_1^2}{4},
\end{equation}
then 

%
%
%
\begin{multline}\label{sigma_jeq}
  \left(x\frac{d^2\sigma}{dx^2}\right)^2 = \left[ \sigma - x \frac{d\sigma}{dx} + 2\left(\frac{d\sigma}{dx}\right)^2 
  -(2\theta_0+\theta_{\infty}) \frac{d\sigma}{dx} \right]^2 \\
  - 4 \frac{d\sigma}{dx}\left( \frac{d\sigma}{dx}-\theta_0\right)\left(\frac{d\sigma}{dx}-\frac{\theta_0-\theta_1+\theta_{\infty}}{2}\right)
  \left( \frac{d\sigma}{dx}-\theta_0-\frac{\theta_0+\theta_1+\theta_{\infty}}{2}\right) .
\end{multline}


%
%
Finally, the $\tau$-function is defined (cf. \cite{JM, JMU}) in terms of the 
Hamiltonian  $\mathscr{H}$  by
\begin{equation}\label{tau}
  \mathscr{H} =: \frac{d}{dx} \log \tau,
\end{equation}
so that, by \eqref{sigma},
\[
\sigma=x \dfrac{d}{dx}\log\left (x^{(\frac{1}{4}(\theta_0+\theta_{\infty})^2-\frac{1}{4}\theta_1^2)  }e^{\frac{1}{2}(\theta_0+\theta_{\infty})x}\tau(x)\right).
\]

In the next subsection we show that the Hankel determinant \eqref{Sum3} is  a  tau-function of the 
Painlev\'e V equation. We will proceed as follows:
\begin{itemize}
\item[1.] We  formulate a Riemann-Hilbert problem for the generalised Laguerre polynomials \eqref{laugerre1} and we derive a system of ODEs related to this Riemann-Hilbert problem;
\item[2.] We  introduce a series of rational and gauge transformations to reduce the   system of ODEs  obtained   in 1. to the Lax pair \eqref{Lax1}--\eqref{Lax2};
\item[3.] Finally, we identify the   Hankel determinant \eqref{Sum3} with  the $\tau$-function of the Painlev\'e V equation via the relation \eqref{sigma}.
\end{itemize}
\subsection{Laguerre determinant}
We will analyse  the principal ingredient of the starting formula (\ref{FNint2}), i.e. the 
{\it Laguerre determinant}
\beq\label{PV2}
\det\left[L_{N+k-1-(i+j)}^{(2k-1)} (-2|y|)\right]_{i,j=0,\cdots , k-1}.
\eeq

Recall the classical formula 
\beq\label{PV1}
L_n^{(\alpha)}(x)=\int_C\frac{e^{-xt/(1-t)}}{(1-t)^{\alpha+1}t^{n+1}}dt
\eeq
for the generalised Laguerre polynomials, where C is a closed contour around 
$0$ (for instance, the positively oriented circle around $0$
with radius $\frac{1}{2}$).  In our case,  we have 
\beq\label{PV3}
\begin{split}
L_{N+k-1-(i+j)}^{(2k-1)} (x) & = \frac{1}{2\pi i}\int_C\frac{e^{-\frac{xt}{1-t}}}{(1-t)^{2k}
t^{N+k-(i+j)}}dt \\
  & =\frac{1}{2\pi i}\int_C\frac{e^{-\frac{xt}{1-t}}}{(1-t)^{2k}
t^{N+k}}t^{i+j}dt =\frac{1}{2\pi i}e^x\int_C\frac{e^{-\frac{x}{1-t}}}{(1-t)^{2k}
t^{N+k}}t^{i+j}dt.
\end{split}
\eeq
Hence
\beq\label{PV5}
\det\left[L_{N+k-1-(i+j)}^{(2k-1)} (-2|y|)\right]_{i,j=0,\cdots , k-1}=\frac{e^{-2k|y|}}
{(2\pi i)^k} H_k[w_0],
\eeq
where
\beq\label{hankeldef}
H_{n}[w_0]:= \det\left[\int_Ct^{i+j}w_0(t)dt\right]_{i,j = 0,\dots,n-1}
\eeq
is the Hankel determinant with the weight

\beq\label{PV6}
w_0(t)=\frac{e^{2|y|/(1-t)}}{(1-t)^{2k}t^{N+k}}.  
\eeq
Following the general Riemann-Hilbert  scheme in the theory of Hankel  
determinants (see e.g. \cite{BleherIts:1999,its}), we consider the system of monic orthogonal polynomials
with weight $w_0(t)$ on $C$,
$$
P_n(t) = t^n + \dots,  \quad \quad\int_{C}P_n(t)t^{m}w_0(t)dt = h_n\delta_{nm}, \quad m = 0,\dots,  n, 
$$
and define the $2\times2$ matrix valued function
\beq\label{Ydef0}
Y(t) := \begin{pmatrix} P_n(t) & \frac{1}{2\pi i} \int_{C}\frac{P_n(t')w_0(t')dt'}{t'-t}\\\\
-\frac{2\pi i}{h_{n-1}}P_{n-1}(t)& -\frac{1}{h_{n-1}}\int_{C}\frac{P_{n-1}(t')w_0(t')dt'}{t'-t}
\end{pmatrix}.
\eeq
The defining property of the function $Y(t)$ is that it is the unique solution of the 
following matrix Riemann-Hilbert problem:

\beq\label{PV9}
Y(t) \in \mathcal{H} ({\mathbb C} \setminus C ),
\eeq

\beq\label{PV10}
Y_+(t)=Y_-(t) \begin{pmatrix}
1 & \frac{e^{\frac{x}{1-t}}}{(1-t)^{2k}t^{N+k}} \\
0 & 1
\end{pmatrix},
\quad t \in C,\quad x = 2|y|,
\eeq
\beq\label{PV11}
Y(t) =  \left(I + \mathcal{O}\left(\frac{1}{t}\right)  \right )t^{n\sigma_3},  \quad t \rightarrow \infty, 
\eeq
where $ { \mathcal H} ({\mathbb C} \setminus C )$ stands for the holomorphic 
$2\times 2$ matrix-valued functions in ${\mathbb C} \setminus C$.
The Hankel determinant $H_n[w_0]$ is related to the function $Y(t)$ via the equations
\beq\label{PV28}
\frac{H_{n+1}}{H_n}=h_n,\quad h_n=-2\pi i (m_1)_{12},\quad \frac{1}{h_{n-1}}=-\frac{1}{2\pi i}(m_1)_{21},
\eeq
where the matrix $m_1$ is  the first coefficient in the expansion (\ref{PV11}), i.e.,
\beq\label{PV110}
Y(t) \sim  \left(I + \frac{m_1}{t} + \cdots \right )t^{n\sigma_3},  \quad t \rightarrow \infty.
\eeq
It also should be noticed  that 
\beq
\det Y(t) \equiv 1.
\eeq

Let us change the variable $t$ to $z$ : $z=\frac{1}{1-t}$, $t=\frac{z-1}{z}$ so that the circle  C  maps to the 
new circle
\beq
\Gamma := \left\{ z: \left|z-\frac{4}{3}\right| = \frac{2}{3}\right\},
\eeq
oriented counterclockwise. Put
\beq\label{PV12}
X(z):=Y(t(z))\equiv Y\left(\frac{z-1}{z}\right).
\eeq
Then, in terms of the function $X(z)$, the Riemann-Hilbert problem  (\ref{PV9})--(\ref{PV11}) reads as follows:
\beq\label{PV13}
X(z) \in \mathcal{H} ({\mathbb C}P^1 \setminus (\Gamma\cup\{0\})),
\eeq
\beq\label{PV14}
X_+(z)=X_-(z) \begin{pmatrix}
1 & e^{xz}\frac{z^{3k+N}}{(z-1)^{N+k}} \\
0 & 1
\end{pmatrix},
\quad z \in \Gamma,
\eeq
\beq\label{PV15}
X(z) \sim \left(I + m^{X}_1z + \cdots \right )(-z)^{-n\sigma_3},  \quad z \rightarrow 0,
\eeq
where 
\beq\label{m1X}
m^{X}_1 := -m_1 -n\sigma_3,
\eeq
and also
\begin{equation}\label{detX}
\det X(t) \equiv 1.
\end{equation}
From the point of view of the modern theory of isomonodromic deformations, the Riemann-Hilbert 
problem (\ref{PV12})--(\ref{PV15}) indicates that we are dealing with the fifth Painlev\'e equation. In what follows we present the detailed derivation of this fact.

Define the function
\beq\label{PV16}
\Psi (z) :=X(z) \begin{pmatrix}
e^{\frac{xz}{2}} & 0 \\
0 & e^{-\frac{xz}{2}}\frac{(z-1)^{k+N}}{z^{N+3k}}
\end{pmatrix}.
\eeq
Then,

\beq\label{PV17}
\Psi(z) , \Psi^{-1}(z) \in {\mathcal H}  (\mathbb{C} \setminus ( \Gamma \cup \{0\} \cup\{1\} ))
\eeq
and the function $\Psi(z)$  has a constant jump across the circle $\Gamma$,
\beq\label{PV18}
\Psi_+(z)=\Psi_-(z)\begin{pmatrix}
1 & 1 \\
0 & 1
\end{pmatrix},
\quad z \in \Gamma.
\eeq
Moreover, in neighbourhoods of the points $\infty$, $0$, and $1$, the function $\Psi(z)$ exhibits the following behaviour:

\beq\label{PV22}
\Psi(z)=\widehat{\Psi}_\infty (z) \begin{pmatrix}
e^{\frac{xz}{2}} & 0 \\
0 & e^{-\frac{xz}{2} }
\end{pmatrix}
\begin{pmatrix}
1 & 0 \\
0 & z^{-2k}
\end{pmatrix},
\quad z \sim \infty,
\eeq

\beq\label{PV20}
\Psi(z)=\widehat{\Psi}_0(z)\begin{pmatrix}
z^{-n} & 0 \\
0 & z^{n-3k-N}
\end{pmatrix},
\quad z \sim 0,
\eeq

\beq\label{PV19}
\Psi(z)=\widehat{\Psi}_1(z)\begin{pmatrix}
1 & 0 \\
0 & (z-1)^{N+k}
\end{pmatrix},
\quad z \sim 1,
\eeq
where $\widehat{\Psi}_\infty (z)$, $\widehat{\Psi}_0 (z)$, and $\widehat{\Psi}_1 (z)$ are 
holomorphic and invertible in the neighbourhoods of the respective points. Also,
\beq\label{PV210}
\widehat{\Psi}_{\infty}(\infty) = X(\infty) = Y(1),
\eeq
\beq\label{PV21}
\widehat{\Psi}_0(0)= (-1)^{-n\sigma _3} \begin{pmatrix}
1 & 0 \\
0 & (-1)^{k+N}
\end{pmatrix} = \begin{pmatrix}
(-1)^{-n} & 0 \\
0 & (-1)^{k+N+n}
\end{pmatrix},
\eeq
\beq\label{PV211}
\widehat{\Psi}_{1}(1) = X(1)e^{\frac{x}{2}\sigma_3} = Y(0)e^{\frac{x}{2}\sigma_3}.
\eeq
We should mention that the invertibility of the functional factors $\widehat{\Psi}_{j}(z)$, $j = 0, 1, \infty$ follows 
from the equation
\beq
\det \Psi(z) = \frac{(z-1)^{k+N}}{z^{N+3k}},
\eeq
which in turn is a consequence of (\ref{detX}).

By standard arguments, the properties (\ref{PV17})--(\ref{PV19}) imply that the function $\Psi(z) \equiv \Psi(z,x)$ 
satisfies linear differential equations with respect to $z$ and $x$ of the form
\beq\label{PV23}
\frac{\partial\Psi}{\partial\,z}=\left( x\widehat{A}_\infty +\frac{\widehat{A}_0}{z}+\frac{\widehat{A}_1}{z-1} \right)\Psi,
\eeq
\beq\label{PV24}
\frac{\partial\Psi}{\partial\,x}=\left( z\widehat{B}_\infty +\widehat{B}_0\right)\Psi,
\eeq
where
\beq\label{PV25}
\widehat{B}_0=0,
\eeq
\beq\label{PV26}
\widehat{A}_\infty=\widehat{B}_\infty =\frac{1}{2}X(\infty)\sigma_3 X^{-1}(\infty )=\frac{1}{2}Y(1)\sigma_3 Y^{-1}(1),
\eeq
\beq\label{PV27}
\widehat{A}_0=\begin{pmatrix}
-n & 0 \\
0 &  n-3k-N
\end{pmatrix},
\eeq
\beq\label{PV36}
\widehat{A}_1=X(1)\begin{pmatrix}
0& 0 \\
0 & N+k
\end{pmatrix}
X^{-1}(1)
=Y(0)\begin{pmatrix}
0& 0 \\
0 & N+k
\end{pmatrix}
Y^{-1}(0).
\eeq
Indeed, since the jump matrix in (\ref{PV18}) is constant, the logarithmic derivatives 
$\frac{\partial\Psi}{\partial z}\Psi^{-1}(z)$ and $\frac{\partial\Psi}{\partial x}\Psi^{-1}(z)$
do not have jumps across $\Gamma$ and  hence are analytic in 
$\mathbb{C} \setminus ( \{0\} \cup\{1\} )$.  In addition, formulae (\ref{PV22})--(\ref{PV19}) tell us 
that $\frac{\partial\Psi}{\partial z}\Psi^{-1}(z)$ has simple poles at $z=0$,  $z=1$ and  is holomorphic at $z = \infty$ while
$\frac{\partial\Psi}{\partial x}\Psi^{-1}(z)$ is holomorphic at $z=0$,  $z=1$  and has a simple pole at $z=\infty$. These arguments
yield equations (\ref{PV23}) and (\ref{PV24}). Moreover, as $z \rightarrow \infty$,
\beq
\frac{\partial\Psi}{\partial z}\Psi^{-1}(z)= \frac{x}{2}\widehat{\Psi}_{\infty}(\infty)\sigma_3\widehat{\Psi}^{-1}_{\infty}(\infty) + \mathcal{O}\left(\frac{1}{z}\right)
\eeq
and 
\beq
\frac{\partial\Psi}{\partial x}\Psi^{-1}(z)= \frac{z}{2}\widehat{\Psi}_{\infty}(\infty)\sigma_3\widehat{\Psi}^{-1}_{\infty}(\infty) + \mathcal{O}\left(1\right)
\eeq
which, together with (\ref{PV210}), imply (\ref{PV26}). Similarly, as $z\rightarrow 0$,
$$
\frac{\partial\Psi}{\partial z}\Psi^{-1}(z)= \frac{1}{z}\widehat{\Psi}_{0}(0)\begin{pmatrix}-n&0\\
 0&n-3k-N\end{pmatrix} \widehat{\Psi}^{-1}_{0}(0) + \mathcal{O}\left(1\right)
$$ 
and, as $z \rightarrow 1$,
$$
\frac{\partial\Psi}{\partial z}\Psi^{-1}(z)= \frac{1}{z-1}\widehat{\Psi}_{1}(1)\begin{pmatrix}0&0\\
 0&N+k\end{pmatrix} \widehat{\Psi}^{-1}_{1}(1) + \mathcal{O}\left(1\right)
$$ 
which, together with (\ref{PV21}) and (\ref{PV211}), imply (\ref{PV27}) and (\ref{PV36}), respectively.
Finally, as $z \rightarrow 0$, we have that
$$
\frac{\partial \Psi}{\partial x}\Psi^{-1}(z)= \frac{\partial\widehat{\Psi}_0(z)}{\partial x}\widehat{\Psi}^{-1} _0(z) = 
\frac{\partial\widehat{\Psi}_0(0)}{\partial x}\widehat{\Psi}^{-1} _0(0) + \mathcal{O}(z) = \mathcal{O}(z),
$$ 
which implies equation  (\ref{PV25}).

The matrix equation  (\ref{PV23}) is a $2\times 2$ system with rational coefficients having three singular points: two Fuchsian points at $z =0$ and $z = 1$, and 
one irregular singular point with Poincar\'e index 1 at $z = \infty$. The presence of the second matrix equation (\ref{PV24}) shows that the $x$-dependence of the coefficients of system (\ref{PV23}) is monodromy-preserving. In other words, we are dealing with the 
Painlev\'e-type isomonodromy deformation of  (\ref{PV23}). In fact, the pair  of matrix equations (\ref{PV23})--(\ref{PV24}) 
 is  almost the Lax pair \eqref{Lax1}--\eqref{Lax2} for the fifth Painlev\'e  equation \eqref{PVy}
given by Jimbo-Miwa \cite{JM}. To make  it exactly the Jimbo-Miwa Painlev\'e V Lax pair a little extra work is needed. 
We observe that the matrices $\widehat{A}_1$ and $\widehat{A}_0$  are not traceless  since $$\mbox{Trace} \,\widehat{A}_0=-3k-N,\quad \mbox{Trace} \,\widehat{A}_1=N+k$$ and $\Psi(z,x)$ is not normalised to the identity at   $z=\infty$.
We make the transformation
\beq\label{TP1}
 \Phi(z,x):=z^{\frac{3k+N}{2} } (z-1)^{-\frac{N+k}{2}}\widehat{\Psi}_\infty^{-1}(\infty) \Psi(z,x)=z^{\frac{3k+N}{2} } (z-1)^{-\frac{N+k}{2}}Y^{-1}(1)\Psi(z,x)
\eeq
 that brings the original system (\ref{PV23})--(\ref{PV24})  to the normalised-at-infinity and traceless form  \eqref{Lax1}--\eqref{Lax2},
 where 
%
%

\begin{equation}\label{A0tilde}
A_0 = Y^{-1}(1)\widehat{A}_0Y(1) + \frac{3k+N}{2} I = 
Y^{-1}(1) \begin{pmatrix}
-n+\frac{3k+N}{2} & 0 \\
0 & n-\frac{3k+N}{2}
\end{pmatrix} Y(1),
\end{equation}
\beq\label{A1tilde}
A_1= Y^{-1}(1)\widehat{A}_1Y(1) - \frac{k+N}{2} I
= Y^{-1}(1)Y(0) \begin{pmatrix}
-\frac{k+N}{2} & 0 \\
0 & \frac{k+N}{2}
\end{pmatrix}  Y^{-1} (0) Y(1),
\eeq
and
\beq\label{B0tilde}
B_0 = -Y^{-1}(1)Y_x(1).
\eeq
We also notice that the tracelessness of the coefficients  of the matrix 
$B_0$ follows from the identity $\det Y(t) \equiv 1$.  To identify the matrices $A_0$ and $A_1$ in \eqref{A0tilde} and \eqref{A1tilde} with those defined in \eqref{TP21} and \eqref{TP23}
we need some extra work.
To this end we notice  that  the local equations (\ref{PV22})--(\ref{PV19}) in terms of the new function  $\Phi(z)$ read

\beq\label{TP14}
\Phi(z)=\widehat{\Phi}_{\infty}(z) e^{\frac{xz}{2}\sigma _3} z^{k\sigma_3},\quad z\sim \infty,
\eeq

\beq\label{TP15}
\Phi(z)=  \widehat{\Phi}_0(z) z^{\left(-n + \frac{3k+N}{2}\right)\sigma_3},
\quad z\sim 0,
\eeq

\beq\label{TP16}
\Phi(z)=\widehat{\Phi}_{1}(z)( z-1)^{-\frac{N+k}{2}\sigma_3},\quad z \sim 1,
\eeq
where $\widehat{\Phi}_\infty (z)$, $\widehat{\Phi}_0 (z)$, and $\widehat{\Phi}_1 (z)$ are 
holomorphic and invertible in neighbourhoods of the respective points. Also,
\beq\label{PV2100}
\widehat{\Phi}_{\infty}(\infty) = I,
\eeq
\beq
\widehat{\Phi}_0(0)= Y^{-1}(1) \begin{pmatrix}
(-1)^{-n} & 0 \\
0 & (-1)^{k+N+n}
\end{pmatrix},
\eeq
\beq\label{PV2110}
\widehat{\Phi}_{1}(1)=Y^{-1}(1) Y(0)e^{\frac{x}{2}\sigma_3}.
\eeq
Comparing (\ref{TP14})--(\ref{TP16}) with  \eqref{TI0}--\eqref{TIin}    we see that, in our case,  the formal monodromy
exponents $\theta_{\infty}$, $\theta_{0}$, and $\theta_{1}$  are 
\beq\label{TP17}
\theta_\infty=-2k ,\quad \theta_0=-2n+3k+N,\quad \theta_1=-N-k.
\eeq
Note that, simultaneously, these equations determine the diagonal part of the sum 
of the matrices $A_0$ and $A_1$ and also their  spectrums. Indeed, from (\ref{TP14})
we have that 
\beq\label{TP18}
\mbox{diag}\left(A_0+A_1\right)=k\sigma_3,
\eeq
and from \eqref{A0tilde}--\eqref{A1tilde}  or  (\ref{TP15})--(\ref{TP16})  we have that
\beq\label{TP190}
{\mbox{Spect}}\, A_0=\left\{ \pm \frac{\theta_0}{2}\right\}\equiv\left\{\pm \left(-n+\frac{3k+N}{2}\right)\right\},
\eeq
and
\beq\label{TP200}
{\mbox{Spect}} \,A_1=\left\{ \pm \frac{\theta_1}{2}\right\}\equiv\left\{\pm \left(\frac{N+k}{2}\right)\right\}.
\eeq
The last three relations mean that, with $k,n,N$ fixed, we can parameterise $A_0$ and $A_1$ by just {\it three}  parameters. We denote them $w,y,u$ and, following \cite{JM}, 
the matrices $A_0$ and $A_1$ can be parameterised in  the  form \eqref{TP21} and \eqref{TP23}.
%
Using the general identity{\footnote{ To derive this identity one substitutes 
the expansion
$$
\Phi(z)\equiv \widehat{\Phi}_{\infty}(z) e^{\frac{xz}{2}\sigma _3} z^{k\sigma_3}
=  \left( I + \frac{\phi_1}{z} + \cdots\right)e^{\frac{xz}{2}\sigma _3} z^{k\sigma_3}
$$
into the Lax pair system (\ref{Lax1})--(\ref{Lax2}).  This leads to the following  formulae for the
matrix coefficients $A_{0,1}$ and $B_{0}$ in terms of the same matrix coefficient 
$\phi_1$:
$$
A_0 +A_1 = k\sigma_3 + \frac{x}{2}[\phi_1, \sigma_3]
\quad\mbox{and}\quad B_0  = \frac{1}{2}[\phi_1, \sigma_3]. 
$$
Identity (\ref{A01B}) follows.}}
\begin{equation}\label {A01B}
A_0 + A_1 = k\sigma_3 + x B_0,
\end{equation}
we obtain the expression for $B_0$ in \eqref{B0}.

As shown in \cite{JM}, the compatibility condition of  \eqref{Lax1}--\eqref{Lax2}  
 implies that the parameters $w, y, u$  become functions of $x$ and
they satisfy the  $3\times3$ system of  first-order ordinary differential equations   \eqref{dydt}--\eqref{dudt} with parameters
 \beq\label{alpha0}
\alpha= \frac{1}{2}\left(\frac{\theta_0-\theta_1 + \theta_{\infty}}{2}\right)^2 = \frac{1}{2}(N-n +k)^2,
\eeq
\beq\label{beta0}
\beta= -\frac{1}{2}\left(\frac{\theta_0-\theta_1 - \theta_{\infty}}{2}\right)^2 = \frac{1}{2}(N-n +3k)^2,
\eeq
\beq\label{gamma0}
\gamma= 1-\theta_0 -\theta_1 = 1 +2n -2k.
\eeq

Let us now obtain the formula for the Hankel determinant $H_n[w_0]$ in terms of the functions $y(x)$ and $w(x)$.  We   use the relation \eqref{phi_H}  that in our case takes the form
\beq
\label{tildemywu}
\begin{split}
(\phi_1)_{11} & \equiv  -\mathscr{H}\\
& = \frac{1}{x}\left(w-\frac{1}{y}(w-n)\right)\Bigl(w-2n+3k +N-y(w+N +k-n)\Bigr) +w+\frac{N+k-2n}{2}.
\end{split}
\eeq
We have the following lemma.
\begin{lemma}
\label{Hn-m-lemma}
The following relation between $H_n[w_0]$ and $\phi_1$ 
holds:
\beq\label{Hm1tilde}
\frac{d}{dx}\log H_n[w_0] = -\phi_{1,11} + \frac{N+k}{2}.
\eeq
\end{lemma}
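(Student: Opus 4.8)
The plan is to reduce \eqref{Hm1tilde} to the single identity
\[
\frac{d}{dx}\log H_n[w_0] = \left(Y^{-1}(1)\,\partial_t Y(1)\right)_{11},
\]
where $\partial_t Y(1)$ denotes $\frac{dY}{dt}$ evaluated at $t=1$, and then to prove this identity by computing the two sides along two independent routes. The right-hand side will come from the explicit transformation chain $Y \to X \to \Psi \to \Phi$ that produces $\phi_1$; the left-hand side from the orthogonal-polynomial structure of the Hankel determinant. Reconciling the two is the heart of the matter.

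First I would extract $\phi_1$. Composing \eqref{PV12}, \eqref{PV16} and \eqref{TP1}, the normalising factors at infinity collapse to a scalar power: multiplying the diagonal exponentials of \eqref{PV16} by $z^{-k\sigma_3}e^{-\frac{xz}{2}\sigma_3}$ and by the prefactor $z^{\frac{3k+N}{2}}(z-1)^{-\frac{N+k}{2}}$ shows
\[
\widehat{\Phi}_\infty(z) = Y^{-1}(1)\,Y\!\left(1-\tfrac{1}{z}\right)\left(\frac{z}{z-1}\right)^{\frac{N+k}{2}\sigma_3}.
\]
Since the circle $C$ has radius less than $1$, the point $t=1$ (that is, $z=\infty$) lies off $C$, so $Y(1-1/z)=Y(1)-\frac1z\partial_tY(1)+O(z^{-2})$ is a genuine Taylor expansion; combined with $\left(z/(z-1)\right)^{\frac{N+k}{2}\sigma_3}=I+\frac{N+k}{2z}\sigma_3+O(z^{-2})$, reading off the $z^{-1}$ coefficient gives $\phi_1=-Y^{-1}(1)\partial_tY(1)+\frac{N+k}{2}\sigma_3$. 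Hence $-\phi_{1,11}+\frac{N+k}{2}=\left(Y^{-1}(1)\partial_tY(1)\right)_{11}$, which is exactly where the additive constant $\frac{N+k}{2}$ in \eqref{Hm1tilde} comes from. The only real care in this step is the bookkeeping of the diagonal powers and checking that they assemble into a pure $\sigma_3$-power so that $\widehat{\Phi}_\infty(\infty)=I$.

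Next I would compute the left-hand side. Because $P_n$ is monic, $\partial_x P_n$ has degree $<n$ and is orthogonal to $P_n$, so differentiating $h_n=\int_C P_n^2 w_0\,dt$ gives $\frac{d}{dx}\log h_n=\frac{1}{h_n}\int_C P_n^2\,\partial_x w_0\,dt$, where $\partial_x\log w_0=\frac{1}{1-t}$ by \eqref{PV6}. Summing over $j=0,\dots,n-1$ using $H_n=\prod_{j<n}h_j$ and the Christoffel–Darboux identity $\sum_{j<n}P_j^2/h_j=\frac{1}{h_{n-1}}\left(P_n'P_{n-1}-P_nP_{n-1}'\right)$ yields
\[
\frac{d}{dx}\log H_n=\frac{1}{h_{n-1}}\int_C\frac{w_0(t)}{1-t}\left(P_n'(t)P_{n-1}(t)-P_n(t)P_{n-1}'(t)\right)dt.
\]
On the other hand, writing $Y^{-1}(1)$ through its $2\times2$ adjugate and inserting the entries of $Y$ from \eqref{Ydef0}, the target $\left(Y^{-1}(1)\partial_tY(1)\right)_{11}$ becomes $\frac{1}{h_{n-1}}\int_C\frac{w_0(t)}{1-t}\left(P_n'(1)P_{n-1}(t)-P_{n-1}'(1)P_n(t)\right)dt$, the same integral but with the derivatives frozen at the exterior point $t=1$.

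The crux — and the step I expect to be the genuine obstacle — is to show these two contour integrals agree, i.e. that $\int_C\frac{w_0}{1-t}Q\,dt=0$ for $Q(t)=\left(P_n'(t)P_{n-1}(t)-P_n(t)P_{n-1}'(t)\right)-\left(P_n'(1)P_{n-1}(t)-P_{n-1}'(1)P_n(t)\right)$. I would do this by the algebraic regrouping $Q(t)=P_{n-1}(t)\left(P_n'(t)-P_n'(1)\right)-P_n(t)\left(P_{n-1}'(t)-P_{n-1}'(1)\right)$, which makes $Q(1)=0$ manifest, so $Q(t)=(t-1)\widetilde Q(t)$ with $\widetilde Q=P_{n-1}R_n-P_nR_{n-1}$ and $R_m(t)=\left(P_m'(t)-P_m'(1)\right)/(t-1)$ of degree $m-2$. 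The factor $(t-1)$ cancels the $\frac{1}{1-t}$, reducing the claim to $\int_C w_0\,\widetilde Q\,dt=0$, and this follows from orthogonality applied twice: $\int_C w_0 P_n R_{n-1}=0$ since $\deg R_{n-1}=n-3<n$, and $\int_C w_0 P_{n-1}R_n=0$ since $\deg R_n=n-2<n-1$. This gives $\frac{d}{dx}\log H_n=\left(Y^{-1}(1)\partial_tY(1)\right)_{11}=-\phi_{1,11}+\frac{N+k}{2}$, as claimed. The delicate points to verify along the way are the orientation and Plemelj conventions that fix the signs of the Cauchy transforms $Y_{12}(1),Y_{22}(1)$, and the degree counts in the final orthogonality argument.
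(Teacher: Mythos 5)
Your proposal is correct, and it reaches \eqref{Hm1tilde} by a genuinely different route from the paper. The two arguments share their first step: your identity $\phi_1=-Y^{-1}(1)\,\partial_tY(1)+\frac{N+k}{2}\sigma_3$ is precisely the paper's relation \eqref{DH7} between $\phi_1^{(n)}$ and the coefficient $\kappa^{(n)}=Y^{(n)-1}(1)Y^{(n)\prime}(1)$ of the Taylor expansion \eqref{DH6}, obtained from the same bookkeeping of the diagonal factors in the chain $Y\to X\to\Psi\to\Phi$. After that the strategies diverge. The paper exploits the $n$-independence of the jump to derive the difference equation \eqref{differequ}, extracts from the differential-difference zero-curvature relations the Toda-type identity $\frac{d}{dx}\log h_n=(\phi_1^{(n)})_{11}-(\phi_1^{(n+1)})_{11}$ (equation \eqref{T211}), telescopes using $H_{n+1}/H_n=h_n$ from \eqref{PV28}, and closes the argument with an explicit evaluation of the base case $n=1$ via \eqref{DH8}--\eqref{DH15}. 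You instead prove the identity directly for each fixed $n$: differentiating $h_j$ using $\partial_x\log w_0=\frac{1}{1-t}$ from \eqref{PV6}, summing with the confluent Christoffel--Darboux formula, and then showing by a two-fold orthogonality argument (the degree counts $\deg R_{n-1}=n-3<n$ and $\deg R_n=n-2<n-1$ are correct) that the derivatives in the Christoffel--Darboux kernel may be frozen at the exterior point $t=1$, which turns the integral into exactly $(Y^{-1}(1)\partial_tY(1))_{11}$ read off from \eqref{Ydef0}. Your version is more elementary and self-contained -- it needs no difference equation, no compatibility conditions, and no explicit $n=1$ computation -- while the paper's version establishes the discrete Lax structure \eqref{differequ}, \eqref{PV51}--\eqref{PV512} as a by-product, which is reused in the proof of Lemma \ref{Hk0-m-lemma} for the Bessel--Hankel determinant. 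The only hypothesis you should state explicitly is the nonvanishing of $H_1,\dots,H_n$, needed both for the existence of $P_0,\dots,P_n$ and for the three-term recurrence underlying Christoffel--Darboux in this non-Hermitian setting; this is the same solvability assumption already implicit in the Riemann--Hilbert problem \eqref{PV9}--\eqref{PV11}.
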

Lemma \ref{Hn-m-lemma} is proven 
in Appendix \ref{lemma-appendix}.

Combining \eqref{tildemywu} and  Lemma~\ref{Hn-m-lemma}, we arrive at the following expression for $\frac{d}{dx}\log H_n[w_0] $ in terms
of the Painlev\'e V function $y(x)$:
\beq\label{Hyw}
\frac{d}{dx}\log H_n[w_0]=\mathscr{H} + \frac{N+k}{2},
\eeq
observing that $w$   contained in $\mathscr{H}$ can be expressed as a function of $y$ and its first derivative using \eqref{dxdt}.
In what follows we will give some details of the asymptotic  expansion of $\frac{d}{dx}\log H_n[w_0]$ by introducing the  sigma function \eqref{sigma}, 
  \beq\label{sigmadef}
 \sigma_n(x):=x \frac{d}{dx}\log H_n[w_0]-nx  -n(N+k-n).
 \eeq
 The $\sigma_n$ function satisfies the equation \eqref{sigma_jeq}  with parameters \eqref{TP17},
namely\beq\label{sigmaeq1}
\begin{split}
\left ( x\frac{d^2\sigma_n}{d x^2}\right) ^2= & \left ( \sigma_n -x\frac {d\sigma_n }{dx}+
2\left ( \frac{d \sigma_n}{dx}\right )^2 +(4n-4k-2N)\frac {d\sigma_n }{dx} \right ) ^2 \\
& -4\frac {d\sigma_n }{dx}\left (n-k-N+\frac {d\sigma_n }{dx} \right) \left (2n-3k-N +\frac {d\sigma_n }{dx}\right)\left(n+\frac {d\sigma_n }{dx}\right),
\end{split}
\eeq
and for the particular case $n=k$ one obtains the equation in \eqref{painleveV}.

Recalling  formula (\ref{PV5}) for the Laguerre determinant we are studying, 
we have now arrived at the representation of the determinant in terms of a 
special solution of the fifth Painlev\'e equation (\ref{sigmaeq1}).  It 
remains to determine the $\mathcal{O}(x)$ term in the asymptotic expansion of 
$\sigma_k(x)$ as $x\to 0$. 
\begin{lemma}
\label{lemma_exp}
The solution of the $\sigma$-Painlev\'e V equation \eqref{painleveV} 
appearing in \eqref{Sum4} has the asymptotic expansion 
\eq
\label{ex_sigma}
\sigma_k(x) = -Nk + \frac{N}{2}x + \sum_{j=1}^k \alpha_{2j} x^{2j}+{\mathcal O}(x^{2k+1})\quad \mbox{as $x\to 0$},
\endeq
where the  coefficients $\alpha_{2j}$, $j=1,\dots,k$, are uniquely determined  recursively from the equation \eqref{painleveV}.
\end{lemma}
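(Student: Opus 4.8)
The plan is to substitute the formal series $\sigma(x)=\sum_{m\ge 0}c_mx^m$ directly into \eqref{painleveV} and read off the coefficients order by order, starting from $c_0=-Nk$ and $c_1=N/2$ already fixed by \eqref{s_asym}. Writing the equation as $(x\sigma'')^2=A^2+Q$ with $A:=\sigma-x\sigma'+2(\sigma')^2-2N\sigma'$ and $Q:=-4\sigma'(\sigma'-N)(\sigma'-k-N)(\sigma'+k)$, I would first establish the parity structure that produces the purely even tail in \eqref{ex_sigma}, and then carry out an indicial analysis of the resulting recursion to locate where it determines the coefficients and where it breaks.

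The heart of the argument is a symmetry special to the value $c_1=N/2$. Set $\rho(x):=\sigma(x)+Nk-\frac{N}{2}x$, so that $\sigma'=\frac{N}{2}+\rho'$. A short computation shows that with this value of $c_1$ the terms linear in $\rho'$ cancel in $A$, leaving $A=-Nk-\frac{N^2}{2}+\rho-x\rho'+2(\rho')^2$, while the quartic factorises as $Q=-4\,G\,(G-k(k+N))$ with $G:=(\sigma')^2-N\sigma'=(\rho')^2-\frac{N^2}{4}$; equivalently $Q$ is invariant under $\sigma'\mapsto N-\sigma'$, hence even in $\rho'$. Consequently, if $\rho$ is an even power series then $A$, $Q$ and $(x\sigma'')^2=x^2(\rho'')^2$ are all even in $x$, so the entire equation is even and every odd-order relation holds identically. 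This shows that a solution with a purely even tail is consistent, and motivates the form \eqref{ex_sigma}.

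To turn consistency into determination I would run the recursion. The same $c_1=N/2$ cancellation, together with the evenness of $Q$, removes the would-be top coefficient $c_{n+1}$ from the $x^n$-coefficient of both $A$ and $Q$, so the relation obtained at order $x^n$, call it $E_n=0$, has $c_n$ as its highest unknown: it is linear in $c_n$ for $n\ge 3$ and quadratic in $c_2$ for $n=2$. The $n=2$ relation factors, one root being $c_2=0$; this spurious branch is the exact linear solution $\sigma=-Nk+\frac{N}{2}x$ of \eqref{painleveV} (its left-hand side vanishes and the constant terms of $A^2$ and $Q$ cancel), so the Hankel solution, being genuinely nonlinear, is forced onto the other, nonzero root. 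For $n\ge 3$ the coefficient of $c_n$ in $E_n$ is a definite indicial quantity $I_n$ built from the lower coefficients; whenever $I_n\ne 0$ the relation $E_n=0$ determines $c_n$, and combining this with the parity computation shows that the odd coefficients $c_3,c_5,\dots$ are not merely allowed but \emph{forced} to vanish as long as $I_n\ne 0$.

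The main obstacle, and the step that fixes both the range $j=1,\dots,k$ and the remainder $\mathcal{O}(x^{2k+1})$, is the explicit evaluation of $I_n$. Carrying out the bookkeeping and simplifying with the selected value of $c_2$, one finds that $I_n$ is proportional to $(2k-n+1)(2k+n-1)$, which is nonzero throughout $2\le n\le 2k$ and vanishes first at the odd order $n=2k+1$. Thus the recursion uniquely determines $c_2,\dots,c_{2k}$, with the odd ones among them equal to zero, whereas at the resonant order $n=2k+1$ the coefficient $c_{2k+1}$ is left free by the local analysis (the parity computation guarantees the solvability condition there), which is exactly the ambiguity absorbed into the $\mathcal{O}(x^{2k+1})$ term. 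The only input beyond \eqref{painleveV} itself is the branch selection for $c_2$, which is settled by the explicit low-order behaviour of the Hankel determinant in \eqref{Sum4}, that not being the behaviour of the linear solution.
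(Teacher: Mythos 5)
Your proposal is correct and follows essentially the same route as the paper's proof of Lemma~\ref{lemma_exp}: subtract the linear part $-Nk+\tfrac{N}{2}x$ (your $\rho$ is the paper's $\widetilde\sigma$ in \eqref{zz7}--\eqref{zz8}), substitute a power series into the resulting degenerate equation, select the nonzero root of the quadratic relation at order $x^{2}$, and identify the first resonance of the recursion at order $x^{2k+1}$. Your parity argument and the general indicial coefficient $I_n\propto (n-1)^{2}-4k^{2}$ agree with, and in fact give a cleaner uniform justification of, the paper's order-by-order relations \eqref{zz11}--\eqref{zz17} and the vanishing \eqref{zz18} of the odd coefficients up to the resonant order.
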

Note that the odd-power coefficients $\alpha_{2j+1}$ will generically 
be non-zero for $j\geq k$.
\begin{proof}
The function $\sigma_k$ satisfies the equation \eqref{painleveV}, namely
\begin{equation}\label{zz4}
\begin{split}
\left ( x\frac{d^2\sigma_k}{d x^2}\right) ^2 = & \left ( \sigma_k -x\frac {d\sigma_k}{dx}+
2\left ( \frac{d \sigma_k}{dx}\right )^2 -2N\frac {d\sigma_k }{dx} \right ) ^2\\ 
 &  -4\frac {d\sigma_k }{dx}\left (-N+\frac {d\sigma_k }{dx} \right) \left (-k-N +\frac {d\sigma_k }{dx}\right)\left(k+\frac {d\sigma_k }{dx}\right),
\end{split}
\end{equation}
with the initial data
\begin{equation}\label{zz5}
\sg_k(0)=-Nk,\quad \sg_k'(0)=\frac{N}{2}
\end{equation}
(we recall prime denotes the derivative with respect to $x$).
It is important to notice that equation \eqref{zz4} is degenerate at $x=0$ 
and the Cauchy-Kovalevskaya theorem is not applicable to the initial value 
problem \eqref{zz4}--\eqref{zz5}. Moreover, at $x=0$ the term with 
$\sg_k''(0)$ vanishes and
we get a relation between $\sg_k'(0)$ 
and $\sg_k(0)$. Substituting $\sg_k(0)=-Nk$ into equation \eqref{zz4} and 
assuming $k\not=0$, we obtain that $\sg_k'(0)$,
hence the second initial condition in \eqref{zz5} is automatically satisfied for any solution $\sg_k(x)$ of equation \eqref{zz4} with $\sg_k(0)=-Nk$,
$k\not =0$. 
To determine the higher-order terms in the asymptotic expansion of the function $\sigma_k(x)$ as $x\to0$,  we introduce the function
$\ts(x)$ such that 
\begin{equation}\label{zz7}
\sg_k(x)=-Nk+\frac{Nx}{2}+\widetilde\sg(x).
\end{equation}
If we substitute the above expression in \eqref{painleveV},
then we obtain the equation
\begin{equation}\label{zz8}
\left(x\frac{d^2\widetilde\sg}{dx^2}\right)^2=-4x\left(\frac{d\widetilde\sg}{dx}\right)^3+(4k^2+x^2+4\widetilde\sg)\left(\frac{d\widetilde\sg}{dx}\right)^2+x(N^2+2Nk-2\widetilde\sg)\frac{d\widetilde\sg}{dx}+
(\widetilde\sg-N(N+2k))\widetilde\sg
\end{equation}
with the initial data
\begin{equation}\label{zz9}
\widetilde\sg(0)=0,\quad \widetilde\sg'(0)=0\,.
\end{equation}
Equation \eqref{zz8} is degenerate at $x=0$ and the Cauchy-Kovalevskaya theorem is not applicable here.
In fact, the initial value problem \eqref{zz8}--\eqref{zz9} has a trivial solution, $\widetilde \sg=0$, and a nontrivial solution.
We are looking for a nontrivial solution to equation \eqref{zz8} as a power series,
\begin{equation}\label{zz10}
\widetilde\sg(x)=\sum_{j=2}^\infty \alpha_{j} x^{j}\,,
\end{equation}
and we find recursively that
\begin{equation}\label{zz11}
\alpha_2(16k^2\alpha_2+N^2+2Nk-4\alpha_2)=0,\quad \alpha_2\not=0 \implies
\alpha_2=-\frac{N(N+2k)}{4(4k^2-1)}\,,
\end{equation}
\begin{equation}\label{zz12}
\frac{4\alpha_3 N(N+2k)(k^2-1)}{4k^2-1}=0,\quad k\not=1 \implies
\alpha_3=0\,,
\end{equation}
\begin{equation}\label{zz13}
\begin{aligned}
\alpha_4=\frac{N(N+2k)(2N+2k-1)(2N+2k+1)}{16(4k^2-1)^2(4k^2-9)},
\end{aligned}
\end{equation}
\begin{equation}\label{zz14}
\frac{4\alpha_5 N(N+2k)(k^2-4)}{4k^2-1}=0,\quad k\not=2 \implies
\alpha_5=0\,,
\end{equation}
\begin{equation}\label{zz15}
\begin{aligned}
\alpha_6=-\frac{N(N+2k)(2N+2k-1)(2N+2k+1)(6N^2+12Nk+4k^2-1)}{32(4k^2-1)^3(4k^2-9)(4k^2-25)},
\end{aligned}
\end{equation}
\begin{equation}\label{zz16}
\frac{4\alpha_7 N(N+2k)(k^2-9)}{4k^2-1}=0,\quad k\not=3 \implies
\alpha_7=0\,,
\end{equation}
and so on. An expression for $\alpha_8$ is too long to be presented here. 
Observe that 
the odd coefficients $\alpha_{2j+1}$ vanish as long as $k>j$. Indeed,
we have the equation
\begin{equation}\label{zz17}
\frac{4\alpha_{2j+1} N(N+2k)(k^2-j^2)}{4k^2-1}=0,
\end{equation}
hence
\begin{equation}\label{zz18}
 \alpha_{2j+1}=0,\quad j=1,2,\ldots, k-1,
\end{equation}
which is equivalent to \eqref{a13}. 
For $j=k$, equation \eqref{zz17} 
does not determine the coefficient $\alpha_{2j+1}=\alpha_{2k+1}$. This implies that the 
initial value problem \eqref{zz8}--\eqref{zz9} has a one-parameter family of solutions, corresponding
to different values of the coefficient $\alpha_{2k+1}$.
\end{proof}
\section{Calculating  the moments}
\label{CTM}
The goal of this section is to calculate the quantity  $F_N(h,k)$ defined in \eqref{FNint2}  for  $k$ and $h$ non-negative  integers.  
As  explained in the introduction, $F_N(h,k)$ is related to the 
generalised Laguerre polynomials by 
\begin{equation}\label{a31}
F_N(h,k)=\lim_{\ep \rightarrow 0} (-1)^\frac{k(k-1)}{2}2^{-2h}\int_{-\infty}^{\infty}
K_{2h} (\ep,y)e^{{-N|y|} }\det\left[L^{(2k-1)}_{N+k-1-(i+j)}(-2|y|)\right]_{i,j=0,\dots,k-1} dy,
\end{equation}
where $K_{2h} (\ep,y)$ is defined in \eqref{jc4}.
It is convenient to rewrite the latter formula as
\begin{equation}\label{a4}
F_N(h,k)=(-1)^\frac{k(k-1)}{2}\lim_{\ep \rightarrow 0} 2^{-2h}\int_{0}^{\infty}
K_{2h} (\ep,\frac{x}{2})f_k(x) dx,
\end{equation}
where
\begin{equation}\label{a5}
f_k(x):=(-1)^\frac{k(k-1)}{2}e^{{-\frac{N}{2} x} }\det\left[L^{(2k-1)}_{N+k-1-(i+j)}(-x)\right]_{i,j=0,\dots,k-1} .
\end{equation}
\begin{lemma}\label{Lemma_F}
The following identity is satisfied:
\begin{equation}\label{a15}
F_N(h,k)
=(-1)^{h}\,f_k^{(2h)}(0),\quad h\in \mathbb{N}_0,
\end{equation}
where $f_k(x)$ is given in \eqref{a5}.
\end{lemma}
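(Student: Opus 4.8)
The plan is to reduce \eqref{a15} to a single statement about the weak limit of the kernel $K_{2h}(\epsilon,\cdot)$ as $\epsilon\to0^+$, and then to supply the one structural property of $f_k$ that makes this limit finite and explicit. Writing $I(\epsilon):=\int_0^\infty K_{2h}(\epsilon,\tfrac{x}{2})f_k(x)\,dx$, it suffices to prove
\[
\lim_{\epsilon\to0^+}I(\epsilon)=(-1)^h\,2^{2h}\,f_k^{(2h)}(0),
\]
since inserting this into \eqref{a4} and collecting the explicit constants yields \eqref{a15}. Because $K_{2h}(\epsilon,y)$ is even in $y$, I would first rewrite $I(\epsilon)=\int_{-\infty}^\infty K_{2h}(\epsilon,y)F(y)\,dy$ with the even extension $F(y):=f_k(2|y|)$, which decays exponentially since $f_k(x)=c_k\,e^{-(N/2+k)x}H_k(x)$ by \eqref{PV5} (with $c_k$ a nonzero constant). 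The target is then the assertion that $K_{2h}(\epsilon,\cdot)\to(-1)^h\delta^{(2h)}$ as $\epsilon\to0^+$: because $\tfrac1\pi\tfrac{\epsilon}{\epsilon^2+y^2}$ is the Poisson kernel, $\int K_{2h}(\epsilon,y)\phi(y)\,dy=\partial_\epsilon^{2h}u_\phi(0,\epsilon)$ for the harmonic extension $u_\phi$ of $\phi$, and harmonicity converts $\partial_\epsilon^{2h}$ into $(-1)^h\partial_{x_0}^{2h}$, giving $\lim_{\epsilon\to0}=(-1)^h\phi^{(2h)}(0)$ for every $\phi$ that is $C^{2h}$ near $0$ and decaying.

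The decisive point — and the reason the half–line integral converges at all — is that $F$ must be $C^{2h}$ at the origin, i.e.\ the odd Taylor coefficients of $f_k$ up to order $2h-1$ must vanish; a surviving term $\propto|y|^{2m+1}$ with $2m+1<2h$ would produce a contribution diverging like $\epsilon^{2m+1-2h}$. I would establish this evenness directly from the Painlev\'e data. From \eqref{Sum4} and the relation $f_k(x)=c_k e^{-(N/2+k)x}H_k(x)$ one computes
\[
\frac{d}{dx}\log f_k(x)=-\Big(\tfrac N2+k\Big)+\frac{\sigma_k(x)+kx+Nk}{x},
\]
and feeding in the expansion \eqref{ex_sigma} of Lemma~\ref{lemma_exp}, whose odd coefficients $\alpha_{2j+1}$ vanish for $j\le k-1$, gives $\tfrac{d}{dx}\log f_k(x)=\sum_{j\ge1}\alpha_{2j}x^{2j-1}+\mathcal O(x^{2k})$, an odd function through order $x^{2k-1}$. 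Hence $\log f_k$, and therefore $f_k$, is even modulo $\mathcal O(x^{2k+1})$, so $f_k^{(2j+1)}(0)=0$ for $0\le j\le k-1$. As the hypothesis $k>h-\tfrac12$ forces $h\le k$, all odd derivatives through order $2h-1$ vanish, and near $y=0$ one has $F(y)=\sum_{m=0}^{k}\tfrac{f_k^{(2m)}(0)\,2^{2m}}{(2m)!}y^{2m}+\mathcal O(|y|^{2k+1})$; in particular $F\in C^{2h}$ with $F^{(2h)}(0)=2^{2h}f_k^{(2h)}(0)$.

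Combining the two ingredients gives $\lim_{\epsilon\to0}I(\epsilon)=(-1)^hF^{(2h)}(0)=(-1)^h2^{2h}f_k^{(2h)}(0)$, which is the required identity. To make the kernel limit rigorous I would localize with an even cutoff $\chi\equiv1$ near $0$: on the complement of its support $K_{2h}(\epsilon,y)=\mathcal O(\epsilon)$ uniformly, so that part drops out, while on $\mathrm{supp}\,\chi$ one may either invoke the harmonic–extension identity above or argue concretely by the rescaling $x=2\epsilon s$, which turns $I(\epsilon)$ into $2\epsilon^{-2h}\int_0^\infty K_{2h}(1,s)f_k(2\epsilon s)\,ds$ and reduces everything to the kernel moments $\mu_m:=\int_0^\infty s^m K_{2h}(1,s)\,ds$, where $\mu_{2m}=0$ for $m<h$ and $\mu_{2h}=(-1)^h(2h)!/2$, while the (nonzero) odd moments multiply the vanishing odd derivatives of $f_k$. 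I expect the genuine obstacle to be exactly this interchange of $\lim_{\epsilon\to0}$ with integration: the term–by–term expansion produces individually divergent moment integrals and negative powers of $\epsilon$, and the argument must show that the evenness of $f_k$ cancels every such divergence, leaving only the finite $(-1)^h2^{2h}f_k^{(2h)}(0)$; the harmonic–extension route is attractive precisely because it never introduces these divergent intermediate quantities.
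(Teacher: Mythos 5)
Your proof is correct, but it takes a genuinely different route from the paper's. The paper also starts from \eqref{a4}, but then rescales $x\mapsto \ep y$ so that the $2h$ derivatives in $\ep$ land on $f_k$, writes $y^{2h}=(y^2+1)(y^{2h-2}-y^{2h-4}+\cdots)+(-1)^{h+1}(1)+(-1)^h$ as in \eqref{a8}, and integrates the polynomial-moment pieces by parts to produce the explicitly divergent boundary terms $\ep^{-2j-1}(2j)!\,f_k^{(2h-2j-1)}(0)$ of \eqref{a10}--\eqref{a12}; it then kills these by observing that $F_N(h,k)$ is a priori finite (a convergent Haar integral), which \emph{forces} the vanishing \eqref{a13}, leaving $(-1)^h f_k^{(2h)}(0)$. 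You instead (i) compute the limit via the harmonic-extension identity $\partial_\ep^{2}u=-\partial_y^{2}u$ for the Poisson integral, which packages the same cancellations into the clean statement that $K_{2h}(\ep,\cdot)$ converges to $(-1)^h\delta^{(2h)}$ on even $C^{2h}$ test functions without ever writing a divergent intermediate quantity, and (ii) supply the needed $C^{2h}$ regularity of $f_k(2|y|)$ at the origin from the Painlev\'e side, through \eqref{Sum4}, \eqref{f}, and the vanishing of the odd coefficients in Lemma~\ref{lemma_exp}. Point (ii) reverses the paper's logic: the paper \emph{deduces} \eqref{a13} from finiteness of $F_N$ inside this very proof and only later remarks (at \eqref{zz18} and \eqref{zz23}) that it is equivalent to the vanishing of the odd Painlev\'e coefficients, whereas you take the Painlev\'e recursion as input. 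Since Lemma~\ref{lemma_exp} is proved independently from the recursion for \eqref{painleveV} and precedes Lemma~\ref{Lemma_F} in the paper, there is no circularity; your version avoids relying on the external finiteness of $F_N(h,k)$ and makes the distributional mechanism transparent, at the cost of front-loading Theorem~\ref{thm-PainleveV-Laguerre} and Lemma~\ref{lemma_exp}, which the paper's argument does not need at this point. In a full write-up you should record that $f_k(0)\neq 0$ (needed to pass to $\log f_k$; it follows from the explicit evaluation \eqref{z1}) and carry out the standard localization estimates for the Poisson kernel that you already indicate; with those details your argument is complete.
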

\begin{proof}
By \eqref{a4},
\begin{equation}\label{a7}
\begin{aligned}
F_N(h,k)
&=\lim_{\ep \rightarrow 0} 2^{-2h}\frac{\partial^{2h}}{\partial\ep^{2h}}\,\int_{0}^{\infty}
\frac{\ep f_k(x) dx}{\pi(\ep^2+(\frac{x}{2})^2)}\\
&=2\lim_{\ep \rightarrow 0} \frac{\partial^{2h}}{\partial\ep^{2h}}\,\int_{0}^{\infty}
\frac{ f_k(\ep y) dy}{\pi(1+y^2)}\\
&=2\lim_{\ep \rightarrow 0} \,\int_{0}^{\infty}
\frac{y^{2h} f_k^{(2h)}(\ep y) dy}{\pi(1+y^2)}\,.
\end{aligned}
\end{equation}
We have that
\begin{equation}\label{a8}
y^{2h}=(y^2+1)\left(y^{2h-2}-y^{2h-4}+\ldots+(-1)^{h+1}\right)+(-1)^{h},
\end{equation}
hence
\begin{equation}\label{a9}
\int_{0}^{\infty}
\frac{y^{2h} f_k^{(2h)}(\ep y) dy}{1+y^2}
=\int_{0}^{\infty}\left(y^{2h-2}-y^{2h-4}+\ldots+(-1)^{h+1}+\frac{(-1)^{h}}{y^2+1}\right)
 f_k^{(2h)}(\ep y) dy.
\end{equation}
Integrating by parts $2j$ times, we obtain that
\begin{equation}\label{a10}
\begin{aligned}
\int_{0}^{\infty}y^{2j} f_k^{(2h)}(\ep y) dy
&=\ep^{-2j}(2j)!\int_{0}^{\infty} f_k^{(2h-2j)}(\ep y) dy\\
&= -\ep^{-2j-1}(2j)!f_k^{(2h-2j-1)}(0),\quad 0\le j\le h-1,
\end{aligned}
\end{equation}
hence
\begin{equation}\label{a12}
\lim_{\ep \rightarrow 0}\int_{0}^{\infty}
\frac{y^{2h} f^{(2h)}(\ep y) dy}{1+y^2}
=(-1)^h\frac{\pi}{2}\,f_k^{(2h)}(0)+\lim_{\ep \rightarrow 0}\sum_{j=0}^{h-1}(-1)^{h-j} \ep^{-2j-1}(2j)!f_k^{(2h-2j-1)}(0).
\end{equation}
Since $F_N(h,k)$ is finite, all terms in the latter sum vanish, so that
\begin{equation}\label{a13}
f_k^{(2h-2j-1)}(0)=0,\quad 0\le j\le h-1.
\end{equation}
Thus,
\begin{equation}\label{a14}
\lim_{\ep \rightarrow 0}\int_{0}^{\infty}
\frac{y^{2h} f^{(2h)}(\ep y) dy}{1+y^2}
=(-1)^h\frac{\pi}{2}\,f_k^{(2h)}(0).
\end{equation}
Therefore we get the statement  of the lemma.
\end{proof}

\subsection{Evaluation of \texorpdfstring{$F_N(0,k)$}{TEXT}}

From \eqref{a14} with $h=0$ we have that
\begin{equation}\label{z1}
\begin{aligned}
F_N(0,k)&=(-1)^\frac{k(k-1)}{2} \det\left[L^{(2k-1)}_{N+k-1-(i+j)}(0)\right]_{i,j=0,\dots,k-1}\\
&=(-1)^\frac{k(k-1)}{2} \det\left[\frac{(N+3k-2-i-j)!}{(2k-1)!(N+k-1-i-j)!}\right]_{i,j=0,\dots,k-1}\\
&=\frac{(-1)^\frac{k(k-1)}{2}}{[(2k-1)!]^k} \det\left[\frac{(N+3k-2-i-j)!}{(N+k-1-i-j)!}\right]_{i,j=0,\dots,k-1}\,.
\end{aligned}
\end{equation}
In particular,
\begin{equation}\label{z2}
\begin{aligned}
F_N(0,1)=N+1,
\end{aligned}
\end{equation}
\begin{equation}\label{z3}
\begin{aligned}
F_N(0,2)&=-\frac{1}{36} 
\det\left[
\begin{matrix}
(N+2)(N+3)(N+4) & (N+1)(N+2)(N+3) \\
(N+1)(N+2)(N+3) & N(N+1)(N+2)
\end{matrix}\right]\\
&=-\frac{1}{36} \,(N+1)(N+2)^2(N+3)
\det\left[
\begin{matrix}
N+4 & N+1 \\
N+3 & N
\end{matrix}\right]\\
&=\frac{1}{12}\,(N+1)(N+2)^2(N+3)\,,
\end{aligned}
\end{equation}
\begin{equation}\label{z4}
\begin{aligned}
F_N(0,3)=&-\frac{1}{(5!)^3} \,(N+1)(N+2)^2(N+3)^3(N+4)^2(N+5)\\
&\times\det\left[
\begin{matrix}
(N+6)(N+7) & (N+2)(N+6) & (N+1)(N+2) \\
(N+5)(N+6) & (N+1)(N+5) & N(N+1) \\
(N+4)(N+5) & N(N+4) & (N-1)N
\end{matrix}\right]\\
=&\frac{1}{3\cdot 4!\cdot 5!}\,(N+1)(N+2)^2(N+3)^3(N+4)^2(N+5)\,,
\end{aligned}
\end{equation}
and so on. In general,
\begin{equation}\label{z5}
\begin{aligned}
F_N(0,k)
=&C_k(N+1)(N+2)^2\cdots (N+k-1)^{k-1}(N+k)^k\\
&\times(N+k+1)^{k-1}\cdots (N+2k-2)^2(N+2k-1)\,.
\end{aligned}
\end{equation}
To find the constant $C_k$, consider $N=0$:
\begin{equation}\label{z6}
\begin{aligned}
F_0(0,k)
=C_k\cdot 1\cdot 2^2 \cdots (k-1)^{k-1}k^k(k+1)^{k-1}\cdots (2k-2)^2(2k-1)\,.
\end{aligned}
\end{equation}
On the other hand, by \eqref{z1},
\begin{equation}\label{z7}
F_0(0,k)=\frac{(-1)^\frac{k(k-1)}{2}}{[(2k-1)!]^k} \det\left[\frac{(3k-2-i-j)!}{(k-1-i-j)!}\right]_{i,j=0,\dots,k-1}=1\,,
\end{equation}
hence
\begin{equation}\label{z8}
C_k=\frac{1}{1\cdot 2^2 \cdots (k-1)^{k-1}k^k
(k+1)^{k-1}\cdots (2k-2)^2(2k-1)}\,.
\end{equation}
Thus, we have the formula  in \cite{KeatingS:2000},
\beq
\begin{aligned}
\label{z9}
F_N(0,k)=&\dfrac{G(N+2k+1)G(N+1)G(k+1)^2}{G(N+k+1)^2G(2k+1)}\\
=&\frac{1}{1\cdot 2^2 \cdots (k-1)^{k-1}k^k
(k+1)^{k-1}\cdots (2k-2)^2(2k-1)}\\
&\times(N+1)(N+2)^2\cdots (N+k-1)^{k-1}(N+k)^k\\
&\times(N+k+1)^{k-1}\cdots (N+2k-2)^2(N+2k-1)\,,
\end{aligned}
\eeq
where $G(z)$ is the Barnes $G$-function (see Appendix~\ref{Barnes}).
In particular, this implies that
\begin{equation}\label{z10}
F(0,k)=\lim_{N\to\infty} \frac{F_N(0,k)}{N^{k^2}}=\dfrac{G(k+1)^2}{G(2k+1)},
\end{equation}
as shown in \cite{KeatingS:2000}.

\subsection{Evaluation of \texorpdfstring{$F_N(h,k)$}{TEXT}}

In order to evaluate the function $F_N(h,k)$ we use the following identities between
the Hankel determinant $H_k$ in \eqref{Sum3} and the function $f_k(x)$ defined in \eqref{a5},
\begin{equation}\label{zz3}
H_k(x)=(2\pi i)^k e^{kx} \det\left[L^{(2k-1)}_{N+k-1-(i+j)}(- x)\right]_{i,j=0,\dots,k-1}=(2\pi i)^k e^{(k+\frac{N}{2})x}
(-1)^{\frac{k(k-1)}{2}+h}\, f_k(x).
\eeq
Furthermore, the function $H_k(x)$ is related to  a  solution of the $\sg$-Painlev\'e V equation \eqref{painleveV}
\begin{equation}\label{zz6}
\dfrac{d}{dx}\log H_k=\frac{\sg_k(x)+kx+Nk}{x}\,,
\end{equation}
so that combining the above two relations we obtain
\beq
\label{f}
x\dfrac{d}{dx}\log f_k(x)=\sg_k(x)-\frac{N}{2}x+Nk=\ts(x)=\sum_{j=2}^{\infty}\alpha_jx^j\,,
\eeq
where $\ts(x)$ has been defined in \eqref{zz7} and the first few coefficient $\alpha_j$ have been evaluated in \eqref{zz11}--\eqref{zz16}.
Integrating the above equation, we obtain that
\begin{equation}\label{zz22}
\begin{aligned}
f_k(x)=&f_k(0)
\exp\left(\sum_{j=2}^\infty \frac{\alpha_{j} x^{j}}{j}\right)=f_k(0) \left[\sum_{j=0}^{k}\beta_{2j}x^{2j}+\sum_{j=2k+1}^{\infty}\beta_jx^j\right]\\
=&\left[1+\frac{1}{2}\,\alpha_2x^2+\left(\frac{\alpha_4}{4}+\frac{\alpha_2^2}{8}\right)x^4+\left(\frac{\alpha_6}{6}+\frac{\alpha_2\alpha_4}{8}+\frac{\alpha_2^3}{48}\right)x^6\right.\\
&\left.+
\left(\a_8+\a_2\a_6+\frac{1}{2}\a_4\a_4^2+\frac{1}{2}\a_4^2+\frac{1}{24}\a_2^4\right)x^8+\cdots \right]\,
\end{aligned}
\end{equation}
where  the coefficients $\beta_j$ are obtained from the above expansion  and using the explicit expression of $\alpha_j$  in \eqref{zz11}--\eqref{zz16} as 
\begin{equation}
\label{beta}
\beta_2=\frac{1}{2}\,\alpha_2= \frac{N(N+2k)}{4(4k^2-1)},\quad \beta_4=\frac{\alpha_4}{4}+\frac{\alpha_2^2}{8}=\frac{N(N+2k)(N^2+2kN+2)}{128(4k^2-1)(4k^2-9)},\quad \dots
\end{equation}
and so on.
Observe that by \eqref{zz16}, all odd powers $x^{2j+1}$ will be missing on the right-hand side as long as $j<k$. In other words,
\begin{equation}\label{zz23}
\begin{aligned}
f_k^{2j+1}(0)=0,\quad j=0,1,\ldots, k-1,
\end{aligned}
\end{equation}
which is equivalent to equation \eqref{a13}.
From Lemma~\ref{Lemma_F} we have the relation
\begin{equation}
\label{zz24}
F_N(h,k)
=(-1)^{h}\,f_k^{(2h)}(0)
\end{equation}
which implies, using the explicit expressions of $\alpha_j$   in \eqref{zz11}--\eqref{zz16},
\begin{equation}
\label{zz25}
\begin{aligned}
F_N(1,k)=& \frac{N(N+2k)}{4(4k^2-1)}\,F_N(0,k)\,,\\
F_N(2,k)=& \frac{3N(N+2k)(N^2+2kN+2)}{16(4k^2-1)(4k^2-9)}\,F_N(0,k),\\
F_N(3,k)=&-\dfrac{15}{64}\frac{N (2 k + N) }{(4 k^2-25) (k^2-9) (4k^2-1)^2}(-16 + 64 k^2 + 20 k N + 48 k^3 N + 10 N^2  \\
& \qquad \qquad -12 k^2 N^2 + 16 k^4 N^2 - 36 k N^3 + 16 k^3 N^3 - 9 N^4 + 4 k^2 N^4)\,F_N(0,k),
\end{aligned}
\end{equation}
and so on. A formula for $F_N(4,k)$ is too long to be presented here. 
\subsection{Scaling limit of Painlev\'e V as \texorpdfstring{$N\to\infty$}{TEXT} and the second main result}

From equations \eqref{zz11}, \eqref{zz13}, and \eqref{zz15} we find  
\begin{equation}\label{sl1}
\begin{aligned}
\xi_2&:=\frac{\alpha_2}{N^2}=-\frac{N(N+2k)}{4(4k^2-1)N^2}\underset{N\to\infty}{\longrightarrow} -\frac{1}{4(4k^2-1)}\,,\\
\xi_4&:=\frac{\alpha_4}{N^4}\underset{N\to\infty}{\longrightarrow}\frac{1}{4(4k^2-1)^2(4k^2-9)}\,,\\
\xi_6&:=\frac{\alpha_6}{N^6}\underset{N\to\infty}{\longrightarrow} -\frac{3}{4(4k^2-1)^3(4k^2-9)(4k^2-25)}\,,\\
\end{aligned}
\end{equation}
and so forth. Define, therefore, the function
\begin{equation}\label{sl2}
\begin{aligned}
\xi(t)=\sum_{j=2}^\infty \xi_j t^j:=\sum_{j=0}^\infty \frac{\alpha_j t^j}{N^j}= \widetilde{\sg}\left(\frac{t}{N}\right).
\end{aligned}
\end{equation}
From equation \eqref{zz8} we obtain that $\xi(t)$ satisfies the equation
\begin{equation}\label{sl3}
\begin{aligned}
\left(t\frac{d^2\xi}{dt^2}\right)^2= & - 4t\left(\frac{d\xi}{dt}\right)^3+\left(4k^2+4\xi+N^{-2}t^2\right)\left(\frac{d\xi}{dt}\right)^2+t(1+2N^{-1}k-2N^{-2}\xi)\frac{d\xi}{dt}\\
&-(1+2N^{-1}k-N^{-2}\xi)\xi.
\end{aligned}
\end{equation}
In the limit $N\to\infty$ it reduces to the  equation
\begin{equation}\label{sl4}
\begin{aligned}
\left(t\frac{d^2\xi}{dt^2}\right)^2&=- 4t\left(\frac{d\xi}{dt}\right)^3+\left(4k^2+4\xi\right)\left(\frac{d\xi}{dt}\right)^2 + t\frac{d\xi}{dt}-\xi.
\end{aligned}
\end{equation}
This equation can be identified as a special case of the 
$\sg$-Painlev\'e III$^\prime$ equation 
\eq
\label{sigma-Painleve-III}
\left(t\frac{d^2v}{dt^2}\right)^2 + \theta_0\theta_\infty\frac{dv}{dt} - \left(4\left(\frac{dv}{dt}\right)^2-1\right)\left(v-t\frac{dv}{dt}\right)-\frac{1}{4}(\theta_0^2+\theta_\infty^2)=0
\endeq
considered by Okamoto \cite[Proposition 1.7]{Okamoto:1987b} by choosing 
$\xi=v-k^2$, $\theta_0=0$, and $\theta_\infty=-2k$.
The initial data for \eqref{sl4} are
\begin{equation}\label{sl5}
\begin{aligned}
\xi(0)=0,\quad \xi'(0)=0.
\end{aligned}
\end{equation}
As before, the condition on $\xi'(0)$ follows from $\xi(0)=0$ and equation \eqref{sl3}.
We are looking for a solution 
\begin{equation}\label{sl6}
\xi(t)=\sum_{j=2}^\infty \xi_jt^j
\end{equation}
 to equation  \eqref{sl4} 
such that 
\begin{equation}\label{sl7}
\xi_2\not=0.
\end{equation}
We have that
\begin{equation}\label{sl8}
 \xi_{2j+1}=0, \quad j=1,2,\ldots,k-1. 
\end{equation}
The even coefficients, $\xi_{2j}$,
$j=1,2,\ldots,k$, can be found recursively from equation \eqref{sl4}.
From equations \eqref{z10}, \eqref{zz22}, and  \eqref{zz24}  we can consider the rescaled quantity
\beq
\frac{F_N(h,k)}{N^{k^2+2h}}=(-1)^{h}\left.\frac{d^{2h}}{dt^{2h}}f_k\left(\frac{t}{N}\right)\right|_{t=0}.
\eeq
We arrive at our second main result. 
\begin{theorem}
\label{Theorem2}
 The limit
\begin{equation}\label{sl10}
F(h,k)=\lim_{N\to\infty} \frac{F_N(h,k)}{N^{k^2+2h}}
\end{equation}
exists, and it is given by the formula 
\begin{equation}\label{sl9}
F(h,k)
=(-1)^{h}\,F(0,k)\frac{d^{2h}}{dt^{2h}}
\left[\exp\left(\sum_{j=2}^\infty \frac{\xi_{j} t^{j}}{j}\right)\right]\Bigg|_{t=0},
\end{equation}
 where $F(0,k)$ is given in \eqref{z10} and   the power
series \eqref{sl6}  solves equation \eqref{sl4}.
\end{theorem}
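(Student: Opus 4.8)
The plan is to reduce the whole statement to the single-variable identity of Lemma~\ref{Lemma_F}, $F_N(h,k)=(-1)^h f_k^{(2h)}(0)$, and then to track the Taylor coefficients of $f_k$ under the scaling $x=t/N$. First I would record the special case $h=0$ of Lemma~\ref{Lemma_F}, namely $f_k(0)=F_N(0,k)$, and set $g_N(t):=f_k(t/N)/f_k(0)$. Since $g_N^{(2h)}(0)=N^{-2h}f_k^{(2h)}(0)/f_k(0)$, Lemma~\ref{Lemma_F} gives the exact factorisation
\[
\frac{F_N(h,k)}{N^{k^2+2h}}=(-1)^h\,\frac{F_N(0,k)}{N^{k^2}}\,g_N^{(2h)}(0),
\]
so that it suffices to prove that each factor has a limit. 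The first factor tends to $F(0,k)$ by \eqref{z10}, which is already proven, so the entire problem becomes the computation of $\lim_{N\to\infty}g_N^{(2h)}(0)$.

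For that second factor I would invoke the closed form \eqref{zz22}, which yields
\[
g_N(t)=\exp\!\left(\sum_{j=2}^\infty\frac{\alpha_j}{j\,N^j}\,t^j\right).
\]
The crucial observation is that $g_N^{(2h)}(0)=(2h)!\,[t^{2h}]g_N(t)$, and that the coefficient of $t^{2h}$ in the exponential of a power series is a fixed universal polynomial (an exponential Bell polynomial) in the finitely many coefficients $\alpha_2/N^2,\dots,\alpha_{2h}/N^{2h}$. Hence it is enough to establish the coefficient-wise convergence $\alpha_j/N^j\to\xi_j$ for $2\le j\le 2h$; continuity of that polynomial then gives
\[
g_N^{(2h)}(0)\longrightarrow (2h)!\,[t^{2h}]\exp\!\left(\sum_{j=2}^\infty\frac{\xi_j}{j}\,t^j\right),
\]
which is precisely the $2h$-th derivative at $t=0$ appearing in \eqref{sl9}. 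To obtain the convergence of the coefficients I would observe that the rescaled finite-$N$ series $\ts(t/N)=\sum_j(\alpha_j/N^j)t^j$ solves \eqref{sl3}, which differs from the limiting equation \eqref{sl4} only by terms of order $N^{-1}$ and $N^{-2}$, and argue by induction on the order: the odd coefficients vanish on both sides by \eqref{zz18} and \eqref{sl8}, while each even coefficient is produced by a recursion whose inhomogeneous part involves only lower coefficients and powers of $N^{-1}$.

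The main obstacle, and the step that carries the real content, is the non-degeneracy of this recursion in the limit. A direct computation from \eqref{sl4} shows that the coefficient multiplying $\xi_m$ in the order-$t^m$ recursion equals $\bigl(4k^2-(m-1)^2\bigr)/(4k^2-1)$, which is nonzero exactly when $m\ne 2k+1$; the corresponding finite-$N$ coefficient from \eqref{sl3} converges to this and is therefore nonzero for $m\le 2k$ and $N$ large, so the induction closes and $\alpha_j/N^j\to\xi_j$ for every $j\le 2k$. Two points must be handled with care. First, the equation is nonlinear, so the leading coefficient $\alpha_2$ is fixed by the quadratic relation \eqref{zz11} rather than a linear one; I would select the branch $\xi_2\ne0$ of \eqref{sl7} uniformly in $N$, after which all higher steps are linear. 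Second, the limiting recursion leaves $\xi_{2k+1}$ undetermined (a one-parameter family, exactly mirroring the free $\alpha_{2k+1}$ of Lemma~\ref{lemma_exp}); this would make \eqref{sl9} ambiguous if it entered, but under the standing hypothesis $k>h-\frac12$ for integers, that is $h\le k$, one has $2h\le 2k<2k+1$, so only the determined coefficients $\xi_2,\dots,\xi_{2h}$ occur and the free parameter is invisible. Combining the two limits in the displayed factorisation yields simultaneously the existence of $F(h,k)$ and the formula \eqref{sl9}.
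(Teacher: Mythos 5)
Your proposal is correct and follows essentially the same route as the paper: both rest on Lemma~\ref{Lemma_F}, the exponential representation \eqref{zz22} of $f_k$, the rescaling $x=t/N$, and the termwise convergence of the coefficients of the rescaled equation \eqref{sl3} to those of \eqref{sl4}. The points you elaborate --- the Bell-polynomial continuity argument for the $2h$-th Taylor coefficient, the non-degeneracy of the recursion for $m\le 2k$, and the observation that $h\le k$ keeps the undetermined coefficient $\xi_{2k+1}$ out of the formula --- are exactly the details the paper leaves implicit, and they are handled correctly.
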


It is noteworthy that equation \eqref{sl4} is closely related to the 
$\sg$-Painlev\'e III (as opposed to III$^\prime$) equation.
Namely, let
\begin{equation}\label{sl11}
\sigma_{III}(s):=2\xi(s^2)+k^2\,.
\end{equation}
Then $\sigma_{III}(s)$ solves the $\sg$-Painlev\'e III equation (cf. 
\cite[(C.29)]{JM}),
\beq
\begin{split}
\left(s\frac{d^2\sigma_{III}}{ds^2} - \frac{d\sigma_{III}}{ds}\right)^2
 = & 4\left(2\sigma_{III} - s \frac{d\sigma_{III}}{ds}\right)\left(\left(\frac{d\sigma_{III}}{ds}\right)^2 -4s^2\right) \\
  & +2(\theta_0^2 + \theta_{\infty}^2)\left(\left(\frac{d\sigma_{III}}{ds}\right)^2 +4s^2\right) - 16\theta_0\theta_{\infty}s\frac{d\sigma_{III}}{ds},
\end{split}
\eeq
for $\theta_0=0$ and $\theta_{\infty}=-2k$
with the initial conditions
\begin{equation}\label{sl13}
\sigma_{III}(0)=k^2\,,\quad \sigma_{III}'(0)=0.
\end{equation}

We write the first few values of $F(h,k)$:
\begin{equation}\label{zz26}
\begin{aligned}
F(1,k)&= \frac{F(0,k)}{4(4k^2-1)}\,,\\
F(2,k)&= \frac{3F(0,k)}{16(4k^2-1)(4k^2-9)}\,,\\
F(3,k)&= \frac{15F(0,k)}{64(4k^2-1)^2(4k^2-25)}\,,\\
F(4,k)&= \frac{105(4k^2-33)F(0,k)}{256(4k^2-1)^2(4k^2-9)(4k^2-25)(4k^2-49)}\,,\\
F(5,k)&= \frac{925(16k^4-360k^2+1497)F(0,k)}{1024(4k^2-1)^2(4k^2-9)^2(4k^2-25)(4k^2-49)(4k^2-81)}\,,
\end{aligned}
\end{equation}
and so on.
In \cite{Dehaye:2008}  an explicit  formula for the above expansions has been derived and it takes the form
\beq
\label{FD}
F(h,k)=\dfrac{(2h)!}{h!2^{3h}}F(0,k)\dfrac{\widetilde{X}_{2h}(2k)}{Y_{2h}(2k)},\quad 
\eeq
where the polynomials $\widetilde{X}_{2h}(s)$ are obtained in a combinatorial way and we report the first   few  (see Table 4 in \cite{Dehaye:2008}):
\beq
\widetilde{X}_{2}(s)=1,\quad \widetilde{X}_{4}(s)=1,\;\;\widetilde{X}_{6}(s)=s^2-9,\;\;\widetilde{X}_{8}(s)=s^2-33,\;\;\widetilde{X}_{10}(s)=s^4-90s^2+1497,
\eeq
and the polynomial $Y_{2h}(s)$ is given by the expression
\beq
Y_r(s)=\prod_{\stackrel{1\leq a\leq r-1}{a \;\text{odd}}}(s^2-a^2)^{\eta_a(r)},\quad \eta_a(r)=\floor*{ \dfrac{-a+\sqrt{a^2+4r}}{2}},
\eeq
where the symbol $\floor*{\,z\,} $ denotes the integer part  of $z$.
We have
\beq
\begin{split}
Y_2(s)=s^2-1,\;\;Y_4(s) & =(s^2-9)(s^2-1),\;\;Y_6(s)=(s^2-1)^2(s^2-9)(s^2-25),\;\;\\
Y_8(s) & =(s^2-1)^2(s^2-9)(s^2-25)(s^2-49), \\
Y_{10}(s)  =(&s^2-1)^2(s^2-9)^2(s^2-25)(s^2-49)(s^2-81).
\end{split}
\eeq
Combining the above expressions we  can verify that the formula \eqref{FD} reproduces the terms obtained in \eqref{zz26}.
\section{Scaling Limit of the  Riemann-Hilbert Problem as \texorpdfstring{$N \to \infty$}{TEXT}}

In this section we will supplement  the result of the previous section by performing the large-$N$ scaling limit
directly in the $X$-Riemann-Hilbert problem (\ref{PV13})--(\ref{PV15}).  We start by changing the jump contour of the 
problem.

Let $\Gamma_N$ be the positively oriented  circle of radius $N$ centered  at $z =0$ and let us pass from  the original function $X\equiv X_{N}(z,x; k,n))$ to the
new  matrix valued function $\widetilde{X} \equiv \widetilde{X}_{N}(z, x; k,n)$ according to the following  rule.
\begin{itemize} 
\item For all $z$ inside the small circle $\Gamma$  and outside the big circle $\Gamma_N$, we put
\beq
\widetilde{X}_N (z) := X_N(z).
\eeq
\item For all $z$ between the circles we define
\beq
\widetilde{X}_N (z):= X_N(z)\begin{pmatrix}
1 & e^{xz}\frac{z^{3k+N}}{(z-1)^{N+k}} \\
0 & 1
\end{pmatrix}.
\eeq
\end{itemize}
The new Riemann-Hilbert problem reads
\beq\label{PV13100}
\widetilde{X}_N(z) \in \mathcal{H} ({\mathbb C}P^1 \setminus (\Gamma_N\cup\{0\} )),
\eeq
\beq\label{PV14100}
\widetilde{X}_{N +}(z)=\widetilde{X}_{N -}(z) \begin{pmatrix}
1 & e^{xz}\frac{z^{3k+N}}{(z-1)^{N+k}} \\
0 & 1
\end{pmatrix},
\quad z \in \Gamma_N,
\eeq
\beq\label{PV15100}
\widetilde{X}_N(z) \sim \left(I + \mathcal{O}(z) + \cdots \right )(-z)^{-n\sigma_3},  \quad z \rightarrow 0.
\eeq
The $\widetilde{X}$-Riemann-Hilbert problem is ready for the large-$N$ 
scaling limit.
Put
\beq
Z_N(z, t; k,n):=  \widetilde{X}_N\left(zN, \frac{t}{N}; k,n\right)N^{n\sigma_3}.
\eeq
Then, the Riemann-Hilbert problem (\ref{PV13100})--(\ref{PV15100}) 
transforms to the following Riemann-Hilbert problem for the function $Z_N$ 
posed on the unit circle $\Gamma_1 := \{z: |z| = 1\}$:
\beq\label{PV13100b}
Z_N(z) \in \mathcal{H} ({\mathbb C}P^1 \setminus (\Gamma_1\cup\{0\} )),
\eeq
\beq\label{PV14100b}
Z_{N +}(z)=Z_{N -}(z) \begin{pmatrix}
1 & e^{tz}\frac{z^{2k} N^{2(k-n)}}{\left(1-\frac{1}{zN}\right)^{N+k}} \\
0 & 1
\end{pmatrix},
\quad z \in \Gamma_1,
\eeq
\beq\label{PV15100b}
Z_N(z) \sim \left(I + \mathcal{O}(z) + \cdots \right )(-z)^{-n\sigma_3},  \quad z \rightarrow 0.
\eeq
Assume now, and this is the case of our main concern, that 
$$
n=k.
$$
Then, as $N \to \infty$ the  new jump matrix converges to the matrix
$$
 \begin{pmatrix}
1 & z^{2k}e^{tz + \frac{1}{z}} \\
0 & 1
\end{pmatrix}
$$
uniformly for $z\in \Gamma_1$ and $t\in K$, where $K$ is a compact set in $\mathbb C$.
By standard Riemann-Hilbert arguments, this implies the convergence
of the function $Z_{N}(z, t)$ to the function $Z(z,t)$ satisfying 
Riemann-Hilbert problem
\beq\label{PIII1}
Z(z) \in \mathcal{H} ({\mathbb C}P^1 \setminus (\Gamma_1\cup\{0\} )),
\eeq
\beq\label{PIII2}
Z_{+}(z)=Z_ {-}(z)
 \begin{pmatrix}
1 & z^{2k}e^{tz + \frac{1}{z}} \\
0 & 1
\end{pmatrix}, 
\quad z \in \Gamma_1,
\eeq
\beq\label{PIII3}
Z(z) = \left(I + \mathcal{O}(z) \right )(-z)^{-k\sigma_3},  \quad z \rightarrow 0.
\eeq
In fact, the estimate 
\begin{equation}\label{ZZN}
Z_N(z,t) =\left(I + \mathcal{O}\left(\frac{1}{N(1 +|z|)}\right)\right)Z(z,t), \quad N \to \infty,
\end{equation}
holds uniformly for all $z \in {\mathbb C}P^1 $ and $t\in K$, where $K$ is a compact set in $\mathbb C$.

Denote by $\phi^{Z_N}_1\equiv \phi^{Z_N}_1(t)$ and  $\phi^{Z}_1\equiv \phi^{Z}_1(t)$ the first matrix coefficients in the expansions 
near $z=\infty$ of the functions $Z_{N}(z)$ and  $Z(z)$, respectively:
\begin{equation}
Z_{N}(z) = Z_N(\infty)\left(I + \frac{\phi^{Z_N}_1}{z} + ....\right), \quad z\to \infty
\end{equation}
and 
\begin{equation}
Z(z) = Z(\infty)\left(I + \frac{\phi^{Z}_1}{z} + ....\right), \quad z\to \infty.
\end{equation}
Estimate (\ref{ZZN}) implies that
\begin{equation}\label{phi1ZNZ}
\phi^{Z_N}_1(t)\underset{N\to\infty}{\longrightarrow} \phi^{Z}_1(t)
\end{equation}
uniformly for all  $t\in K$, where $K$ is a compact set in $\mathbb C$. At the same time,  
recalling the connection of $Z_{N}(z)$ with the function $X(z)$ and the connection of the latter with $\Phi(z) \equiv \Phi_{N}(z,t;k)$,
we arrive at the relation
\begin{equation}\label{phi1Nphi} 
\phi_{1,11}\left(\frac{t}{N}\right) = N\phi^{Z_N}_{1,11}(t) + \frac{N+k}{2}.
\end{equation}
From this relation and (\ref{phi1ZNZ}) we have
\begin{equation}\label{phi1NphiZ} 
\frac{1}{N}\phi_{1,11}\left(\frac{t}{N}\right)\underset{N\to\infty}{\longrightarrow} \phi^{Z}_{1,11}(t) + \frac{1}{2}.
\end{equation}
Together with Lemma \ref{Hn-m-lemma}, the last limit allows us to find the large-$N$ scaling limit of the Hankel determinant $H_k[w_0] \equiv H(x)$ in
terms of the solution of the $Z$-Riemann-Hilbert problem: 
\begin{equation}\label{HphiZ}
\frac{d}{dt } \log H_k\left(\frac{t}{N}\right)\underset{N\to\infty}{\longrightarrow} -\phi^{Z}_{1,11}(t).
\end{equation}
In the previous subsection we have already connected  this limit with the special solution $\xi(z)$  of the third Painlev\'e equation
(\ref{sl4}). Let us show how the same result can be derived from the $Z$-Riemann-Hilbert problem 
(\ref{PIII1})--(\ref{PIII3}).

Similar to the finite-$N$ case, we introduce  the function (cf. (\ref{PV16})),
\begin{equation}\label{PhiIIIdef}
\Phi^{(III)}(z,s) := s^{k\sigma_3}Z^{-1}(\infty)Z\left(\frac{z}{s}, s^2\right)\left(\frac{z}{s}\right)^{k\sigma_3}
e^{\frac{s}{2}\left(z + \frac{1}{z}\right)\sigma_3}.
\end{equation}
The Riemann-Hilbert problem (\ref{PIII1})--(\ref{PIII3}) in terms of $\Phi^{(III)}(z,t)$ reads
\beq\label{PIII11}
\Phi^{(III)}(z) \in \mathcal{H} ({\mathbb C} \setminus (\Gamma_1\cup\{0\} )),
\eeq
\beq\label{PIII21}
\Phi^{(III)}_{+}(z)=\Phi^{(III)}_ {-}(z)
 \begin{pmatrix}
1 & 1 \\
0 & 1
\end{pmatrix},
\quad z \in \Gamma_1,
\eeq
\beq\label{PIII33}
\Phi^{(III)}(z) = P_0\Bigl(I + \mathcal{O}(z) \Bigr )e^{\frac{s}{2z}\sigma_3},  \quad z \rightarrow 0,
\eeq
\beq\label{PIII34}
\Phi^{(III)}(z) = \left(I + \frac{\phi^{(III)}_1}{z}  + ....\right )z^{k\sigma_3}e^{\frac{sz}{2}\sigma_3},  \quad z \rightarrow \infty,
\eeq
where 
\begin{equation}\label{phi1III}
\phi^{(III)}_1(s) = ss^{k\sigma_3}\phi^{Z}_{1}(s^2)s^{-k\sigma_3} + \frac{s}{2}\sigma_3
\end{equation}
and 
\begin{equation}\label{P0}
P_0 := (-s)^{k\sigma_3}Z^{-1}(\infty).
\end{equation}
Repeating now the same standard argument based on Liouville's theorem as in \S3 which led us to the Painlev\'e V  Lax pair 
(\ref{PV23})--(\ref{PV24}),  we arrive at the following Lax pair for the function $\Phi^{(III)}(z,s)$:
\begin{equation}
\label{Lax100}
\dfrac{\partial\Phi^{(III)}}{\partial z}=\left(\dfrac{s}{2}\sigma_3+\dfrac{A_0}{z}+\dfrac{A_{-1}}{z^2}\right)\Phi^{(III)}(z,s), 
\end{equation}
\begin{equation}\label{Lax200}
\dfrac{\partial\Phi^{(III)}}{\partial s}=\left(\dfrac{z}{2}\sigma_3+B_0+ \dfrac{B_{-1}}{z}\right)\Phi^{(III)}(z,s),
\end{equation}
which is {\it exactly} the Jimbo-Miwa Lax pair for the Painlev\'e III equation (see (C.18), (C.19) of  \cite{JM}). 
A comparison of the asymptotics (\ref{PIII33}) and (\ref{PIII34}) with the  formulae (C.26) and (C.25)
of \cite{JM}, respectively, shows that the Jimbo-Miwa formal monodromy parameters $\theta_{\infty}$ and
$\theta_0$ are
\begin{equation}\label{thetas}
\theta_{\infty} = -2k, \quad \theta_0 =0.
\end{equation}
Following \cite{JM}, we parameterise the matrix coefficients $A_0, A_{-1}$ and $B_0, B_{-1}$ by  
the functional parameters $y, w, u, v$  according to the equations ($z^{JM} \equiv w$, $t^{JM} = s$)
\beq\label{5P21}
A_0 := \begin{pmatrix}
k& u \\
v & -k
\end{pmatrix} = sB_0 + k\sigma_3,
\eeq
\beq\label{5P23}
A_1=\begin{pmatrix}
w-\frac{s}{2} & \frac{u}{y}\\
-\frac{yw}{u}(w-s)& -w +\frac{s}{2}
\end{pmatrix} = -sB_{-1}.
\eeq
We will also need the following formula for the matrix coefficient $\phi^{(III)}_1$ which can be obtained 
by  the substitution of the expansion (\ref{PIII34}) into the equation (\ref{Lax100}):
\begin{equation}\label{phiIII1uv}
\phi^{(III)}_1 = \begin{pmatrix}
-s^{-1}uv - w +\frac{s}{2}& -s^{-1}u\\
s^{-1}v &s^{-1}uv + w -\frac{s}{2}
\end{pmatrix}.
\end{equation}
The compatibility condition of equations (\ref{Lax100}) and (\ref{Lax200}) yields the following deformation equations on
$y(s)$ and $w(s)$ (cf. \cite[(C.23)]{JM}):
\begin{equation}\label{defyP3}
s\frac{dy}{ds} = 4wy^2 -2sy^2 + (2\theta_{\infty} -1)y + 2s
\end{equation} 
and
\begin{equation}\label{defwP3}
s\frac{dw}{ds} = -4yw^2 -2sy^2 + (4sy -2\theta_{\infty} +1)w + (\theta_0 + \theta_{\infty})s.
\end{equation} 
We recall that in our case $\theta_0 = 0$ and $\theta_{\infty} = -2k$. In turn, this system implies a single
second-order ODE, the Painlev\'e III (as opposed to III$^\prime$) equation 
for the function $y(s)$ (cf. \cite[(C.23)]{JM} or \cite{Okamoto:1987b}),
\begin{equation}\label{yP3}
\frac{d^2y}{ds^2}= \frac{1}{y}\left(\frac{dy}{ds}\right)^2 - \frac{1}{s}\frac{dy}{ds} +\frac{1}{s}(\alpha y^2 + \beta)
+ \gamma y^3 + \frac{\delta}{y},
\end{equation}
with
$$
\alpha = 4\theta_0 = 0, \quad \beta = 4(1-\theta_{\infty}) = 4(1+2k), \quad \gamma = -\delta = 4.
$$
Moreover,  from \cite{JM} one can extract the following addition expression  for the 11-entry of the matrix coefficient 
$\phi^{(III)}_1$ in the expansion (\ref{PIII34}) of the solution $\Phi^{(III)}(z)$ at $z = \infty$ (cf. (\ref{HamP5})):
\begin{equation}\label{phi1III11}
(\phi^{(III)}_1)_{11} = -\frac{1}{2}\mathscr{H}_{III} + \frac{k^2}{2t},
\end{equation}
where
\begin{equation}\label{P3Ham}
\mathscr{H}_{III} \equiv \mathscr{H}_{III}(y, w;s) = \frac{1}{s}\Bigl( 2w^2y^2 + 2w(s-sy^2-2ky) +2ksy - s^2 +k^2\Bigr)
\end{equation}
is the Hamiltonian of the  dynamical system (\ref{defyP3})--(\ref{defwP3}). Correspondingly, the tau function $\tau_{III}(s)$  and the sigma function
$\sigma_{III}(s)$ are defined by the equations
\begin{equation}\label{tauP3}
\frac{d\log\tau_{III}(s)}{ds} = \mathscr{H}_{III}(y(s), w(s); s) = \frac{\sigma_{III}(s)}{s},
\end{equation}
and the sigma-form of the third Painlev\'e equation (\ref{yP3}) reads (cf. \cite[(C.29)]{JM})
\beq
\begin{split}
\left(s\frac{d^2\sigma_{III}}{ds^2} - \frac{d\sigma_{III}}{ds}\right)^2
 = & 4\left(2\sigma_{III} - s \frac{d\sigma_{III}}{ds}\right)\left(\left(\frac{d\sigma_{III}}{ds}\right)^2 -4s^2\right)\\ 
  & +2(\theta_0^2 + \theta_{\infty}^2)\left(\left(\frac{d\sigma_{III}}{ds}\right)^2 +4s^2\right) - 16\theta_0\theta_{\infty}
s\frac{d\sigma_{III}}{dt}
\end{split}
\eeq
with
\beq
\theta_0 = 0, \quad \theta_{\infty} = -2k.
\eeq
Equations (\ref{tauP3}), (\ref{phi1III11}), and (\ref{phi1III}) yield the following formula for the matrix entry
$(\phi^{Z}_{1})_{11}$:
\begin{equation}\label{phiZ1sigma}
(\phi^{Z}_{1}(t))_{11} = \frac{k^2 - \sigma_{III}(\sqrt{t})}{2t}  - \frac{1}{2}.
\end{equation}
Noticing that 
$$
\sigma_{III}(\sqrt{t}) = 2\xi(t) + k^2,
$$
where $\xi(t)$ is the solution of the $\xi$-form of the third Painlev\'e equation (\ref{sl4}), we conclude that
\begin{equation}\label{phi1IIIxi}
(\phi^{Z}_{1}(t))_{11} = -\frac{\xi(t)}{t}  - \frac{1}{2}.
\end{equation}
This, together with (\ref{HphiZ}), yields the limit formula
\begin{equation}\label{Hkxi}
\frac{d}{dt } \log H_k\left(\frac{t}{N}\right)\underset{N\to\infty}{\longrightarrow}\frac{\xi(t)}{t}  + \frac{1}{2}.
\end{equation}
For our principal object, the  function $f_k(x)$ (see (\ref{zz3})), we have that
\begin{equation}\label{fkxi}
\frac{d}{dt } \log f_k\left(\frac{t}{N}\right)\underset{N\to\infty}{\longrightarrow}\frac{\xi(t)}{t}.
\end{equation}
This is our second main result, i.e. Theorem \ref{Theorem2}.

It is worth noticing that the importance of the $Z$-Riemann-Hilbert problem lies in the fact that it  can be used for the large-$t$ asymptotic analysis of the function $\xi(t)$ which will be needed in the case of the half integer $h$. Indeed, in that case the  expression 
for $F_N(h,k)$ in terms of $\sigma(s)$ and hence the expression for $F(h,k)$ in terms of $\xi(t)$ are not local -- see
the next section -- and therefore global information about the behavior of $\xi(t)$ is essential.

We conclude this section by making the following interesting observation concerning the  $Z$-Riemann-Hilbert
problem. Put
\begin{equation}\label{YIIIdef}
Y^{(III)}(z): = \Phi^{(III)}(z,s)e^{-\frac{s}{2}\left(z + \frac{1}{z}\right)\sigma_3}= s^{k\sigma_3}Z^{-1}(\infty)Z\left(\frac{z}{s}, s^2\right)\left(\frac{z}{s}\right)^{k\sigma_3}.
\end{equation}
Then, the Riemann-Hilbert problem (\ref{PIII11})--(\ref{PIII34}) in terms of $Y^{(III)}(z,t)$ reads
\beq\label{YIII11}
Y^{(III)}(z) \in \mathcal{H} ({\mathbb C} \setminus \Gamma_1),
\eeq
\beq\label{YIII21}
Y^{(III)}_{+}(z)=Y^{(III)}_ {-}(z)
 \begin{pmatrix}
1 & e^{s\left(z + \frac{1}{z}\right)} \\
0 & 1
\end{pmatrix},
\quad z \in \Gamma_1,
\eeq
\beq\label{YIII31}
Y^{(III)}(z) = \left(I + \frac{\phi^{YIII}_1}{z}  + ....\right )z^{k\sigma_3},  \quad z \rightarrow \infty,
\eeq
where 
\begin{equation}\label{phi1YIII}
\phi^{YIII}_1(s) = ss^{k\sigma_3}\phi^{Z}_{1}(s^2)s^{-k\sigma_3}.
\end{equation}
This is again an example of a Riemann-Hilbert  problem  from the theory of Hankel determinants. 
The corresponding contour and 
weight, this time, are the unit circle $\Gamma_1$ and the Bessel type weight
\begin{equation}\label{w0III}
w^{(III)}_0(z) = e^{s\left(z + \frac{1}{z}\right)},
\end{equation}
respectively. In fact, the Hankel determinant associated with the problem (\ref{YIII11})--(\ref{YIII31}) is the determinant
\begin{equation}\label{hankelB}
H^{(III)}_k(s) = \det\left[\int_{\Gamma_1}z^{i+j}e^{s\left(z+\frac{1}{z}\right)}dz\right]_{i,j = 0,...,k-1}
= (2\pi i)^k\det\left[I_{i+j +1}(2s)\right]_{i,j = 0,...,k-1}.
\end{equation}
The relations similar to (\ref{PV28}) take the form
\begin{equation}\label{PIII28}
\frac{H^{(III)}_{k+1}}{H^{(III)}_k}=h_k,\quad h_k=-2\pi i (\phi^{YIII}_1)_{12},\quad \frac{1}{h_{k-1}}=-\frac{1}{2\pi i}(\phi^{YIII}_1)_{21},
\end{equation}
where now $h_k := \int_{\Gamma_1}P^2_k(z)w^{(III)}_0(z)dz$ and $P_k(z)$  are monic polynomials orthogonal on $\Gamma_1$
with respect to the weight (\ref{w0III}). These relations, as in the case of our original Laguerre-Hankel determinant $H_n$, can be used to prove the following analogue of Lemma \ref{Hn-m-lemma}.
\begin{lemma}
\label{Hk0-m-lemma}
The following relation between $H^{(III)}_k(s)$ and $\phi^{YIII}_1$ 
holds:
\beq\label{Hk0m1tilde}
\frac{d}{ds}\log H_k^{(III)}(s) = -2\phi^{YIII}_{1,11}(s) +\frac{k^2}{s}.
\eeq
\end{lemma}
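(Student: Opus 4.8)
The plan is to mirror the proof of Lemma~\ref{Hn-m-lemma}, exploiting the factorisation of the Hankel determinant into the squared norms of the monic orthogonal polynomials $P_j(z)$ associated with the weight \eqref{w0III} on $\Gamma_1$. Writing $H_k^{(III)}=\prod_{j=0}^{k-1}h_j$ with $h_j=\int_{\Gamma_1}P_j^2(z)w_0^{(III)}(z)\,dz$, I would differentiate in $s$. Since $P_j$ is monic, $\partial_s P_j$ has degree at most $j-1$, so orthogonality kills $\int_{\Gamma_1}P_j\,(\partial_s P_j)\,w_0^{(III)}\,dz$, and because $\partial_s\log w_0^{(III)}=z+z^{-1}$ one obtains $\tfrac{d}{ds}\log h_j=A_j+B_j$, where $A_j:=h_j^{-1}\int_{\Gamma_1}P_j^2\,z\,w_0^{(III)}\,dz$ and $B_j:=h_j^{-1}\int_{\Gamma_1}P_j^2\,z^{-1}w_0^{(III)}\,dz$.

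The positive-moment piece is standard: the bilinear pairing $\langle f,g\rangle=\int_{\Gamma_1}fg\,w_0^{(III)}\,dz$ makes multiplication by $z$ self-adjoint, so the $P_j$ obey a three-term recurrence and $A_j=p_j-p_{j+1}$, where $p_j$ denotes the subleading coefficient of $P_j$ (the coefficient of $z^{j-1}$). Consequently $\sum_{j=0}^{k-1}A_j=p_0-p_k=-p_k$. Because the problem \eqref{YIII11}--\eqref{YIII31} is precisely the standard orthogonal-polynomial Riemann--Hilbert problem (its $(1,1)$ entry is $P_k$, and its off-diagonal entries reproduce \eqref{PIII28}), comparing $P_k(z)z^{-k}=1+p_k/z+\cdots$ with the expansion \eqref{YIII31} at infinity identifies $p_k=(\phi^{YIII}_1)_{11}$, so that $\sum_{j=0}^{k-1}A_j=-(\phi^{YIII}_1)_{11}$.

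The crux, and the only place where the circular (rather than real-line) nature of the problem enters, is the negative-moment piece $B_j$. Here I would use the ``string equation'' coming from the fact that $\Gamma_1$ is closed: $\int_{\Gamma_1}\partial_z\bigl(z P_j^2 w_0^{(III)}\bigr)\,dz=0$. Using $\partial_z\log w_0^{(III)}=s(1-z^{-2})$ together with the elementary identity $\int_{\Gamma_1}zP_jP_j'\,w_0^{(III)}\,dz=j\,h_j$ (valid because $zP_j'=jP_j+\text{(lower order)}$), this collapses to $(2j+1)h_j+s A_jh_j-sB_jh_j=0$, i.e.\ $B_j=A_j+(2j+1)/s$. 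Hence $A_j+B_j=2A_j+(2j+1)/s$, and summing over $j$, using $\sum_{j=0}^{k-1}(2j+1)=k^2$, gives $\tfrac{d}{ds}\log H_k^{(III)}=2\sum_{j=0}^{k-1}A_j+k^2/s=-2(\phi^{YIII}_1)_{11}+k^2/s$, which is \eqref{Hk0m1tilde}.

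I expect the main obstacle to be the careful justification of this negative-moment computation: one must confirm that $z P_j^2 w_0^{(III)}$ is single-valued in an annulus containing $\Gamma_1$ so that the exact-differential identity holds with no residual contribution, and keep track of the orientation of $\Gamma_1$ so that the boundary term genuinely vanishes. Granting this, the mechanism is transparent---the doubling of $A_j$ and the explicit $k^2/s$ both come from the string equation---and the remaining steps are the same bookkeeping as in Lemma~\ref{Hn-m-lemma}.
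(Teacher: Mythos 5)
Your proof is correct, and it takes a genuinely different route from the one in the paper. The paper's argument stays inside the isomonodromy framework: it passes to the rescaled function $\widetilde{\Phi}^{(III)}(z,s)=s^{-k\sigma_3}\Phi^{(III)}(zs,s)$, derives the difference equation \eqref{Phitildedif} in $k$ via Liouville's theorem, extracts from its compatibility with the $s$-equation \eqref{Laxtilde} the relation $-\tfrac{db_k}{ds}=2sb_k(a_{k+1}-a_k)$, telescopes to get $\tfrac{d}{ds}\log H^{(III)}_k=-2sa_k+\tfrac{k^2-1}{s}+c(s)$, and then pins down the integration "constant" $c(s)=s+\tfrac1s$ by explicitly evaluating the $k=1$ determinant $H^{(III)}_1=2\pi i I_1(2s)$ and invoking a Bessel identity. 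You instead work directly with the orthogonal-polynomial structure: the factorisation $H^{(III)}_k=\prod_{j=0}^{k-1}h_j$, the identity $\tfrac{d}{ds}\log h_j=A_j+B_j$ coming from $\partial_s\log w_0^{(III)}=z+z^{-1}$, the recurrence-coefficient telescoping $\sum_j A_j=-p_k=-(\phi^{YIII}_1)_{11}$ (legitimate, since by uniqueness the problem \eqref{YIII11}--\eqref{YIII31} is the standard OP Riemann--Hilbert problem for the weight \eqref{w0III}, so its $(1,1)$ entry is $P_k$), and the string equation $\int_{\Gamma_1}\partial_z\bigl(zP_j^2w_0^{(III)}\bigr)dz=0$, which is valid because $w_0^{(III)}(z)=e^{s(z+1/z)}$ is single-valued and analytic in $\mathbb{C}\setminus\{0\}$, and which yields $B_j=A_j+(2j+1)/s$ and hence the $k^2/s$ term. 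Your computation checks out line by line ($\sum_{j=0}^{k-1}(2j+1)=k^2$, $\int_{\Gamma_1}zP_jP_j'w_0^{(III)}dz=jh_j$ by orthogonality). What your approach buys is economy: no difference Lax pair, no explicit $k=1$ evaluation, no Bessel identities; the $k^2/s$ term emerges structurally from the string equation rather than from assembling $\tfrac{k^2-1}{s}$ with a separately computed $\tfrac1s$. What the paper's approach buys is uniformity with the rest of the text: the differential-difference machinery is exactly the one developed in Appendix \ref{lemma-appendix} for Lemma \ref{Hn-m-lemma}, so the two lemmas are proved by the same mechanism. Both arguments implicitly assume $h_j\neq 0$ for $j<k$ so that the orthogonal polynomials (equivalently, the Riemann--Hilbert solution) exist.
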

\begin{proof}
It is convenient to make yet another change-of-variable  transformation of the function $\Phi^{(III)}(z,s)$, namely,
\begin{equation}\label{Phitilde}
\Phi^{(III)}(z,s) \to  \widetilde{\Phi}^{(III)}(z,s) = s^{-k\sigma_3}\Phi^{(III)}(zs,s).
\end{equation}
The motivation for this transformation is that the $s$-equation of the Lax pair for the function $\widetilde\Phi^{(III)}(z,s)$ is considerably 
simpler. The new coefficient matrix is a linear function of $z$ while the coefficient matrix for $\Phi^{(III)}(z,s)$
also has a simple pole at $z = 0$;  see (\ref{Lax200}). Indeed, the $s$-equation for  $\widetilde{\Phi}^{(III)}(z,s)$
is a combination of both equations in the Lax pair (\ref{Lax100})--(\ref{Lax200}) for $\Phi^{(III)}(z,s)$
so that the pole at $z =0$ cancels out and we have
\begin{equation}\label{Laxtilde}
\dfrac{\partial\widetilde{\Phi}^{(III)}}{\partial s}=\Bigl(zs\sigma_3+\widetilde{B}_0\Bigr)\widetilde{\Phi}^{(III)}(z,s)\equiv \widetilde{B}(z)\widetilde{\Phi}^{(III)}(z,t),
\end{equation}
where
\beq
\widetilde{B}_0 = 2s^{-k\sigma_3}B_0s^{k\sigma_3} = 
\frac{1}{s}\begin{pmatrix}0&2us^{-2k}\\
2vs^{2k}&0\end{pmatrix}.
\eeq
One also can notice that the expansion (\ref{PIII34}) of the function $\Phi^{(III)}(z,s)$ at $z=\infty$ transforms
to the following expansion of the function $\widetilde{\Phi}^{(III)}(z,s)$ at $z=\infty$:
\beq\label{PIII34tilde}
\widetilde{\Phi}^{(III)}(z) = \left(I + \frac{\widetilde{\phi}^{(III)}_1}{z}  + ....\right )z^{k\sigma_3}e^{\frac{s^2z}{2}\sigma_3},  \quad z \rightarrow \infty,
\eeq
where 
\begin{equation}\label{phi1IIItilde}
\widetilde{\phi}^{(III)}_1(s) = \frac{1}{s}s^{-k\sigma_3}\phi^{(III)}_{1}(s)s^{k\sigma_3}=
\phi^{Z}_{1}(s^2) + \frac{1}{2}\sigma_3.
\end{equation}

The function $\widetilde{\Phi}^{(III)}(z,s)$ satisfies a differential equation with respect to $z$ as well, however we will not need it.
What we will need is the difference equation for $\widetilde{\Phi}^{(III)}(z,s)$ associated with the  shift $k \to k+1$. To derive
this equation we indicate explicitly the dependence of $\widetilde{\Phi}^{(III)}(z,s)$  on $k$,
\beq
\widetilde{\Phi}^{(III)}(z,s) \equiv \widetilde{\Phi}^{(III)}_k(z,s),
\eeq
and consider the discrete logarithmic derivative  $\widetilde{\Phi}^{(III)}_{k+1}\Bigl[\widetilde{\Phi}^{(III)}_{k}\Bigr]^{-1}(z)$. Since the jump matrix of the $\Phi^{(III)}$-Riemann-Hilbert problem does not depend on $k$ we, using again Liouville's theorem, will arrive
at the following difference equation:
\begin{equation}\label{Phitildedif}  
 \widetilde{\Phi}^{(III)}_{k+1}(z) =\Bigl( zU_0 + U_1^{(k)}\Bigr)\widetilde{\Phi}^{(III)}_{k}(z)\equiv U_k(z)\widetilde{\Phi}^{(III)}_{k}(z).
\end{equation}  
The matrix coefficients $U_0$ and $U_1$ can be determined via the substitution of the expansion (\ref{PIII34tilde}) into
(\ref{Phitildedif}). One finds 
\begin{equation}\label{U0}
U_0 = \begin{pmatrix} 1&0\\
0&0\end{pmatrix}
\end{equation}
and
\begin{equation}\label{U1}
U_1^{(k)} = \widetilde{\phi}^{(III,k+1)}_1\begin{pmatrix} 1&0\\
0&0\end{pmatrix} - \begin{pmatrix} 1&0\\
0&0\end{pmatrix}\widetilde{\phi}^{(III, k)}_1 \equiv
 \begin{pmatrix} a_{k+1} - a_k&-b_k\\
c_{k+1}&0\end{pmatrix},
\end{equation}
where $\widetilde{\phi}^{(III, k)}_1$ is the matrix coefficient from the expansion (\ref{PIII34tilde}) with the explicit
indication of the dependence on the integer $k$, and $a_k$, $b_k$, $c_k$  are temporary notations for 
the matrix entries of $\widetilde{\phi}^{(III)}_1$, i.e., 
\begin{equation}\label{phitilde1a}
a(s) := (\widetilde{\phi}^{(III)}_{1}(s))_{11} = (\phi^{Z}_{1}(s^2))_{11}  + \frac{1}{2},
\end{equation}
\begin{equation}\label{phitilde1b}
b(s) := (\widetilde{\phi}^{(III)}_{1}(s))_{12} = (\phi^{Z}_{1}(s^2))_{12} = -s^{-2-2k}u(s), 
\end{equation}
and
\begin{equation}\label{phitilde1c}
c(s) := (\widetilde{\phi}^{(III)}_{1}(s))_{21} = (\phi^{Z}_{1}(s^2))_{21} = -s^{-2+2k}v(s). 
\end{equation}
In presenting these formulae we have also taken into account equations (\ref{phiIII1uv}) and (\ref{phi1IIItilde})

The next step is to consider the compatibility condition of equations (\ref{Phitildedif}) and (\ref{Laxtilde}), that is, the
differential-difference equation
\beq
\frac{dU_{k}(s)}{ds} = \widetilde{B}_{k+1}(z) U_k(z) - U_k(z)\widetilde{B}_{k}(z).
\eeq
This equation, in particular, means that
\begin{equation}\label{difb}
-\frac{db_k}{ds} = 2s b_k (a_{k+1} - a_k).
\end{equation} 
From formulae (\ref{PIII28}), (\ref{phi1YIII}), and (\ref{phitilde1b}) it follows that
\beq
h_k(s) = -2\pi i (\phi^{YIII}_{1}(s))_{12} = -2\pi i s^{1+2k}(\phi^{Z}_{1}(s^2))_{12} =  -2\pi i s^{1+2k}b_k(s),
\eeq
and hence (\ref{difb}) becomes
\begin{equation}\label{difh}
\frac{d}{ds}\log h_k = \frac{1+2k}{s} -2s(a_{k+1} - a_k).
\end{equation}
By virtue of the first equation in (\ref{PIII28}), we immediately derive from (\ref{difh}) the differential identity 
\begin{equation}\label{difH000}
\frac{d}{ds}\log H^{(III)}_k(s) = -2sa_k + \frac{k^2-1}{s} + c(s)
\end{equation}
for the Hankel determinant $H^{(III)}_k$, where
\beq
c(s) = \frac{d}{ds}\log H^{(III)}_1 +2sa_1.
\eeq
Because of (\ref{phitilde1a}) and (\ref{phi1YIII}), to complete the proof of the Lemma we only need to show that
\begin{equation}\label{cs}
c(s) =  s + \frac{1}{s}.
\end{equation}
In order to see (\ref{cs}) we notice first that
\begin{equation}\label{HIII1}
H^{(III)}_1(s) = 2\pi i I_{1}(2s).
\end{equation}
Secondly, we use the fact that
\begin{equation}\label{a1alpha}
a_1 = s^{-1}c_0 + \frac{1}{2}, 
\end{equation}
where $c_0$ is the zero-degree coefficient in the orthogonal polynomial
\beq
P_1(z) = z + c_0, \quad \int_{\Gamma_1}P_1(z)w^{(III)}_0(z) dz =0,
\eeq
and hence
\begin{equation}\label{alpha000}
c_0 = -\frac{I_2(2s)}{I_1(2s)}.
\end{equation}
Equation (\ref{cs}) follows from (\ref{HIII1}), (\ref{a1alpha}), (\ref{alpha000}), and 
one of the classical differential identities for  the Bessel functions,
$$
\frac{dI_1(s)}{ds} = \frac{1}{s}I_1(s) + I_2(s).
$$
\end{proof}
Combining \eqref{phi1YIII}, (\ref{phi1IIIxi}), and  Lemma~\ref{Hk0-m-lemma}, we arrive at the following expression of $\frac{d}{dt}\log H^{(III)}_k(\sqrt{t}) $ in terms
of the Painlev\'e III function $\xi(t)$:
\beq\label{HIIIxi}
\frac{d}{dt}\log H^{(III)}_k(\sqrt{t})=\frac{k^2}{t} +\frac{1}{2} + \frac{\xi(t)}{t}.
\eeq
A comparison of this relation and the asymptotics  (\ref{Hkxi}) implies the following transition formula from 
the Laguerre-Hankel determinant $H_k$ to the Bessel-Hankel determinant $H^{(III)}$:
\begin{equation}\label{HkHIII}
\frac{d}{dt } \log H_k\left(\frac{t}{N}\right)\underset{N\to\infty}{\longrightarrow}
\frac{d}{dt}\log H^{(III)}_k(\sqrt{t}) -\frac{k^2}{t},
\end{equation}
or, more explicitly,
\begin{equation}\label{HkHIII2}
\frac{d}{dt } \log \det\left[L^{2k-1}_{N+k-1-(i+j)}\left(-\frac{t}{N}\right)\right]\underset{N\to\infty}{\longrightarrow}
\frac{d}{dt}\log \det\left[I_{i+j +1}(2\sqrt{t})\right] -\frac{k^2}{t}.
\end{equation}
\begin{remark}\label{FW} Formula (\ref{HIIIxi}), in slightly different but 
equivalent form, was first  obtained by  Forrester and Witte in 
\cite{ForresterW:2006}. If one could prove the asymptotic relation 
(\ref{HkHIII2}) directly, then our second main result, i.e.
Theorem \ref{Theorem2}, may be obtained via a simple reference to 
\cite{ForresterW:2006}.  However, a direct asymptotic
analysis of the Laguerre-Hankel determinant is not a simple matter; 
one can easily see, for instance,  that all the matrix entries have  the same 
leading behavior as $N\to \infty$.  As it stands at the moment,
(\ref{HkHIII2}) is a non-trivial by-product of our 
Riemann-Hilbert analysis.  We note that \cite{BBBCPRS} employs a different 
analysis which does lead directly to (\ref{HkHIII2}) and so does enable the 
results of \cite{ForresterW:2006} to be applied straightforwardly.  But 
this method is not so well-suited to giving exact formulae for finite $N$, 
so should be considered complementary to ours. 
We also note that the confluence of confluent hypergeometric function 
solutions of Painlev\'e V to Bessel function solutions of Painlev\'e III has 
been studied by Masuda \cite{Masuda:2004}.  It would be interesting to see
if the degeneration method of Masuda could provide an alternative proof of 
\eqref{HkHIII2}, which  does not seem to be an immediate corollary of the 
constructions of \cite{Masuda:2004}.
\end{remark}

\section{Conformal block expansion of the \texorpdfstring{$\tau$}{TEXT}-function \label{conformal}}
The    function  $\tau_L$ introduced in \cite{Lisovyy} is defined as 
\beq
\label{tau_L}
x\frac{d}{dx}\log\tau_L := \sigma_L+\frac{\theta_*\lb x+\theta_*\rb}{2},
\eeq
where $\sigma_L\lb x\rb$ satisfies the $\sigma$-form of the Painlev\'e V equation
  \beq\label{Ls}
  \lb x\frac{d^2\sigma_L}{dx^2}\rb^2=\lb \sigma_L-x\frac{d\sigma_L}{dx}+2\lb\frac{d\sigma_L}{dx}\rb^2\rb^2-\tfrac14\lb\lb2 \frac{d\sigma_L}{dx}-\theta_*\rb^2-4\widetilde{\theta}_0^2\rb \lb \lb2 \frac{d\sigma_L}{dx}+\theta_*\rb^2-4\theta_t^2\rb,
  \eeq
  where $\theta_*$, $\theta_t$, and $\widetilde{\theta}_0$ are complex parameters.
  To identify the above equation with the $\sigma$-form in \eqref{sigma_jeq}, we need to make the shift 
  \beq
  \label{dts}
\sigma=\widetilde{\sigma}+x\dfrac{(2\theta_0+\theta_{\infty})}{4}+\dfrac{(2\theta_0+\theta_{\infty})^2}{8}
\eeq
so that 
\begin{equation} 
\label{ts}
  \left(x\frac{d^2\widetilde{\sigma}}{dx^2}\right)^2 = \left[ \ts - x \frac{d\ts}{dx} + 2\left(\frac{d\ts}{dx}\right)^2\right]^2 
  - \frac{1}{4} \left( \left(2\frac{d\ts}{dx}+\frac{\theta_{\infty}}{2}\right)^2-\theta_0^2\right) \left(\left(2\frac{d\ts}{dx}-\frac{\theta_{\infty}}{2}\right)^2-\theta_1^2\right).
\end{equation}
Comparing \eqref{Ls} and \eqref{ts} we have that  $\ts=\sigma_L$ if
\beq
2\theta_*=\theta_{\infty},\quad 4\theta_t^2=\theta_0^2,\quad 4\widetilde{\theta}_0^2=\theta_1^2
\eeq
or
\beq
\label{choice}
2\theta_*=-\theta_{\infty},\quad 4\theta_t^2=\theta_1^2,\quad 4\widetilde{\theta}_0^2=\theta_0^2.
\eeq
Next, we consider the  relations between the Jimbo-Miwa  $\tau$-function  \eqref{tau} and $\tau_L$. By \eqref{Ls} and \eqref{dts} we have
\beq
x\dfrac{d}{dx}\log \tau=x\dfrac{d}{dx}\log\tau_L-\dfrac{x}{4}(2\theta_*+\theta_{\infty})+\frac{\theta^2_0+\theta_1^2}{4}-\frac{\theta_{\infty}^2}{8}-\dfrac{\theta_*^2}{2}.
\eeq
We choose $2\theta_*=-\theta_{\infty}$ so that the relation between the two $\tau$-functions becomes
\beq
\tau(x)=c\;\tau_L(x)x^{\frac{\theta^2_0+\theta_1^2-\theta^2_{\infty}}{4}}
\eeq
for some constant $c$.
Therefore, the  correspondence of the  set of parameters $(\theta_0,\theta_1,\theta_{\infty})$ and  $(\theta_*, \theta_t, \widetilde{\theta}_0)$  is as in \eqref{choice}. There is still an ambiguity in identifying the sign of the parameters $\theta_t$ and $\widetilde{\theta}_0$,
but this is not important for our purpose because the conformal block expansion is symmetric with respect to $\theta_t\to -\theta_t$ and $\widetilde{\theta}_0\to -\widetilde{\theta}_0$.

From \eqref{TP17}, the parameters of the Painlev\'e equation we are considering are
\beq
\theta_*=-\frac{1}{2}\theta_{\infty}=k,\quad  \theta_t=-\frac{1}{2}\theta_1=\frac{k+N}{2}, \quad \widetilde{\theta}_0=\frac{1}{2}\theta_0=\frac{k+N}{2}.
\eeq
Comparing \eqref{zz3}, \eqref{zz6}, and \eqref{f}
we have the relation between the functions $f_k$ and $\tau_L$
\beq
\label{ftau}
x\dfrac{d}{dx}\log f_k=x\frac{d}{d x} \log\tau_L+\frac{(k+N)^2}{2}-k^2-\frac{k}{2}x.
\eeq

 Next, we present the conformal block expansion of the function  $\tau_L$  near $x=0$ as developed in \cite{Lisovyy}.
For this purpose we introduce for any positive integer $N$
the  partition 
\beq
\lambda:=\left\{\lambda_1\geq\lambda_2\geq\ldots\geq\lambda_N>0\right\}.
\eeq
Partitions can be identified in the obvious way with Young diagrams.  The set of all Young diagrams will
 be denoted by $\mathbb{Y}$. For $\lambda\in\mathbb{Y}$, $\lambda'$ denotes the transposed diagram, $\lambda_i$
 and $\lambda'_j$ the number of boxes in the $i$th row and $j$th column of $\lambda$, and $|\lambda|$
 the total number of boxes. Given a box $(i,j)\in\lambda$, its hook length is defined as $h_{\lambda}(i,j):=
 \lambda_i+\lambda'_j-i-j+1$, and for the empty partition, $h_{\emptyset}(i,j)=1$.
For complex numbers $\theta_*, \;\widetilde{\theta}_0$, $\theta_t$, and $\sigma$   and partitions $\lambda$ and $\mu$  let us introduce the quantity
\beq
\label{CBPV2}
\begin{split}
 \mathcal{B}_{\lambda,\mu}\left(\widetilde{\theta}_0,\theta_t,\theta_*,\sigma\right) := &
\prod_{(i,j)\in\lambda}
 \frac{\left(\theta_*+\sigma+i-j\right)\left(\left(\theta_t+\sigma+i-j\right)^2-\widetilde{\theta}_0^2\right)}{
 h_{\lambda}^2(i,j)\left(\lambda'_j+\mu_i-i-j+1+2\sigma\right)^2} \\
 & \times \prod_{(i,j)\in\mu}
 \frac{\left(\theta_*-\sigma+i-j\right)\left(\left(\theta_t-\sigma+i-j\right)^2-\widetilde{\theta}_0^2\right)}{
 h_{\mu}^2(i,j)\left(\lambda_i+\mu'_j-i-j+1-2\sigma\right)^2}\,.
\end{split}
\eeq
 \begin{theorem}[\cite{Lisovyy}]
 The $\tau$-function  of the Painlev\'e V equation  has the following expansion near $x=0$:
 \beq
 \label{fourier0}
 \tau_L\lb x\rb =\mathcal N_0\sum_{n\in\mathbb Z} e^{2\pi in\eta}
 \mathcal C_0\lb \theta_t;  \widetilde{\theta}_0;\theta_*;\sigma+n\rb \mathcal B\lb \theta_t; \widetilde{\theta}_0;\theta_*;\sigma+n; x\rb,
 \eeq
 where  $\sigma,\eta$ correspond to the initial conditions, $\mathcal N_0$ is a constant, $\mathcal B\lb \theta_t; \widetilde{\theta}_0;\theta_*;\sigma;x\rb$ is given by the combinatorial series 
\beq
\mathcal B\lb \theta_t; \widetilde{\theta}_0;\theta_*;\sigma;x\rb :=x^{\sigma^2-\widetilde{\theta}_0^2-\theta_t^2}e^{-\theta_tx}
\sum_{\lambda,\mu\in\mathbb{Y}}
 \mathcal{B}_{\lambda,\mu}\left(\widetilde{\theta}_0,\theta_t,\theta_*,\theta_*;\sigma\right)   x^{|\lambda|+|\mu|}
\eeq
 with $\mathcal{B}_{\lambda,\mu}\left(\widetilde{\theta}_0,\theta_t,\theta_*,\sigma\right)$ as in \eqref{CBPV2},
 and the structure constants $\mathcal C_0\lb \theta_t; \widetilde{\theta}_0;\theta_*;\sigma\rb$ are expressed in terms of the Barnes $G$-function as
 \beq\label{strfourier0}
 \mathcal C_0\lb \theta_t; \widetilde{\theta}_0;\theta_*;\sigma\rb:=\prod_{\epsilon=\pm 1}\frac{G\lb 1+\theta_*+\epsilon\sigma\rb G\lb 1+\widetilde{\theta}_0+\theta_t+\epsilon\sigma\rb G\lb 1-\widetilde{\theta}_0+\theta_t+\epsilon\sigma\rb}{G\lb 1+2\epsilon\sigma\rb},
 \eeq
 where $\theta_*=-\frac{1}{2}\theta_{\infty}$, $\theta_t=-\frac{1}{2}\theta_1$, and $\widetilde{\theta}_0=\frac{1}{2}\theta_0$.
  \end{theorem}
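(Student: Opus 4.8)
This is the conformal-block (Fourier-type) expansion of the Painlev\'e~V $\tau$-function established by Lisovyy, Nagoya, and Roussillon, so in the body of the paper the proof is the quotation of \cite{Lisovyy} together with the parameter matching recorded above; here I sketch the mathematical route one would take to prove it. The plan is to realize $\tau_L(x)$ as a correlation function in a central-charge $c=1$ conformal field theory and then to expand that correlator in an intermediate channel. First I would represent $\tau_L$, defined through \eqref{tau_L} from the Hamiltonian/Jimbo--Miwa $\tau$-function \eqref{tau}, as a matrix element built from regular vertex operators carrying charges $\widetilde{\theta}_0$ and $\theta_t$ attached to the two finite singular points of the linear system \eqref{Lax1}--\eqref{Lax2}, together with a rank-one irregular (Whittaker/coherent) state at infinity that encodes $\theta_*$ and the isomonodromic time $x$. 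This identification rests on the free-fermion description of isomonodromic $\tau$-functions, in which the Poincar\'e rank-one irregular singularity at $z=\infty$ in \eqref{Lax1} becomes precisely such an irregular puncture.

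Next I would expand the correlator by inserting a resolution of the identity in the intermediate channel that separates the two regular punctures (charges $\widetilde{\theta}_0$ and $\theta_t$) from the rank-one irregular puncture at infinity (charge $\theta_*$). At $c=1$ the intermediate Virasoro primary has dimension $\sigma^2$, and the crucial point is that shifting the intermediate charge $\sigma\mapsto\sigma+n$, $n\in\mathbb{Z}$, leaves the monodromy of the linear problem unchanged; summing over these shifts collapses the continuous intermediate spectrum into the discrete sum over $n\in\mathbb{Z}$ in \eqref{fourier0}, with the phases $e^{2\pi i n\eta}$ supplying the monodromy coordinate $\eta$ conjugate to $\sigma$. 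Each conformal block would then be evaluated combinatorially: through the AGT correspondence the $c=1$ Virasoro block equals a Nekrasov instanton partition function, which in the free-fermion specialization reduces to the double sum over Young diagrams $(\lambda,\mu)$ with exactly the factors of \eqref{CBPV2}, while the prefactor $x^{\sigma^2-\widetilde{\theta}_0^2-\theta_t^2}e^{-\theta_t x}$ is read off as the difference of conformal weights in the fusion channel times the leading contribution of the irregular puncture. Finally the structure constants would be identified with products of $c=1$ three-point (DOZZ) functions, yielding the Barnes-$G$ expression \eqref{strfourier0}.

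The main obstacle is to convert this CFT heuristic into a proof, that is, to show that the series on the right-hand side of \eqref{fourier0}, assembled from \eqref{CBPV2} and \eqref{strfourier0}, genuinely solves the $\sigma$-form \eqref{Ls} with the prescribed data $(\sigma,\eta)$. The cleanest rigorous route, and the one I would follow, is to bypass the nonlinear equation and instead verify that the block expansion satisfies the bilinear (Hirota/Toda-type) relations obeyed by isomonodromic $\tau$-functions under integer shifts of the monodromy exponents; since the $\tau$-function is determined by its monodromy data, matching these bilinear identities forces equality. Reducing those bilinear relations to all-order identities among the Nekrasov coefficients $\mathcal{B}_{\lambda,\mu}$ is the technically demanding core of the argument, and it is exactly what \cite{Lisovyy} carries out. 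For our application it then remains only to align \eqref{fourier0} with our normalization, which is achieved by the shift \eqref{dts} relating $\sigma_L$ to $\ts$ and the parameter identification $\theta_*=-\tfrac12\theta_{\infty}$, $\theta_t=-\tfrac12\theta_1$, $\widetilde{\theta}_0=\tfrac12\theta_0$.
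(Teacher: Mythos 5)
Your proposal is correct and matches the paper's treatment: the paper states this theorem as an imported result of \cite{Lisovyy} with no proof of its own beyond the parameter identification $\theta_*=-\tfrac12\theta_\infty$, $\theta_t=-\tfrac12\theta_1$, $\widetilde\theta_0=\tfrac12\theta_0$, and you correctly recognise this. Your sketch of the underlying argument (free-fermion/CFT realisation, AGT-type combinatorial evaluation of the irregular blocks, Barnes-$G$ structure constants, and verification via bilinear relations) is an accurate account of the strategy of the cited work, but it plays no role in the paper itself.
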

 Comparing the expansion near $x=0$  of $\tau_L$ in \eqref{fourier0} with \eqref{ftau}  and \eqref{f},  and using the fact that 
\beq
\sum_{\lambda,\mu\in\mathbb{Y}}
 \mathcal{B}_{\lambda,\mu}\left(\widetilde{\theta}_0,\theta_t,\theta_*,\sigma\right)   x^{|\lambda|+|\mu|}=1+\left(\theta_{t}+\frac{1}{2}\theta_*+\frac{\theta_*}{2\sigma^2}(\theta_t^2-\widetilde{\theta}_0^2)\right)x+{\mathcal O}(x^2),
\eeq
 we obtain  that  $\sigma=k$ and $\eta=0$ and, furthermore, the sum in \eqref{fourier0}  is only over non-negative integers $n$.
 Unfortunately, for the  values $\sigma=k$, $\theta_*=k$, $\theta_t=\widetilde{\theta}_0=\frac{k+N}{2}$,
  the structure constants $\mathcal C_0\lb \theta_t; \widetilde{\theta}_0;\theta_*;\sigma\rb$  are undefined.
  For this reason we need first to take a limit, using the following  relation  for the Barnes $G$ function  that holds for non-negative integers $n$:
\beq
  G(1+\delta-n)=\delta^n(-1)^{\frac{n(n-1)}{2}}G(1+n)+O(\delta^{n+\epsilon}),\quad n\in\mathbb{N}_0,\epsilon,\delta>0.
\eeq
  Then we have that, for $\sigma=k+n$ a non-negative integer,
  \begin{equation}
  \label{Ctilde}
  \begin{split}
\widetilde{\mathcal C}_0\lb\theta_t;k;\sigma\rb& :=  \lim_{\delta\to 0} 2^{-2k}(-1)^{\frac{k(k+1)}{2} }\delta^k\mathcal C_0\lb\theta_t; \theta_t;k;\sigma-\delta\rb\\
&=
\frac{G(1+k+\sigma)G(1+2\theta_t+\sigma)G(1+2\theta_t-\sigma)G(1+\sigma)^2}{(-1)^{\sigma(\sigma+k)}2^{2\sigma-2k}G(1+2\sigma)^2},
  \end{split}
  \end{equation}
  where we observe that we have the freedom to multiply the structure constants by $\sigma$-independent quantities.
  Furthemore, the $\tau_L$ function  is defined up to the constant ${\mathcal N}_0$ that we obtained from \eqref{z9}, namely, we must have
\beq
  F_N(0,k)=\dfrac{G(N+2k+1)G(N+1)G(k+1)^2}{G(N+k+1)^2G(2k+1)}=f_k(0)={\mathcal N}_0\widetilde{\mathcal C}_0\lb  \frac{N+k}{2};k;k\rb,
\eeq
  which implies
\beq
  {\mathcal N}_0=\dfrac{1}{G(N+k+1)^2}.
\eeq
We also observe that $\widetilde{\mathcal C}_0\lb \frac{N+k}{2};k;N+k+1\rb=0$, namely, we have only $N+1$ conformal blocks.
  We arrive at the following conjectural expression.
    \begin{conjecture}
  The function $f_k(x)$ defined in \eqref{a5} has the following conformal block expansion near $x=0$:
\begin{equation}
\label{f_CB}
\begin{split}
f_k(x) = & \dfrac{e^{-\frac{N}{2}x-kx}}{G(N+k+1)^2} \sum_{n=0}^{N}\widetilde{\mathcal C}_0\lb \frac{N+k}{2};k;k+n\rb x^{2nk+n^2} \\ 
  & \times \sum_{\lambda,\mu\in\mathbb{Y}}
 \mathcal{B}_{\lambda,\mu}\left(\frac{k+N}{2},\frac{k+N}{2},k,k+n\right)   x^{|\lambda|+|\mu|},
\end{split}
\end{equation}
where the coefficients $\widetilde{\mathcal C}_0\lb \frac{N+k}{2};k;k+n\rb$ are defined in \eqref{Ctilde} and the conformal blocks are defined in \eqref{CBPV2}.
\end{conjecture}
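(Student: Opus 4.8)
The plan is to obtain the expansion by feeding Theorem~\ref{thm-PainleveV-Laguerre} into the Lisovyy--Nagoya--Roussillon series \eqref{fourier0}; the substance of the argument is not the bookkeeping but the justification of a limiting procedure at resonant parameter values and of the truncation of the series. First I would assemble the chain of identities already in place. Combining \eqref{zz3} with \eqref{Sum4} and integrating \eqref{ftau} shows that, up to an elementary prefactor,
\[
f_k(x) = c\, e^{-(\tfrac{N}{2}+k)x}\, x^{\frac{(k+N)^2}{2}-k^2}\, \tau_L(x),
\]
with the Painlev\'e~V parameters fixed as $\theta_*=k$ and $\theta_t=\widetilde{\theta}_0=(k+N)/2$. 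Thus the whole problem reduces to expanding $\tau_L$ near $x=0$.

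Next I would invoke the theorem of \cite{Lisovyy} to insert the Fourier--type series \eqref{fourier0}. The monodromy exponent $\sigma$ and the phase $\eta$ encoding the initial data must be pinned down: matching the fact that $f_k$ is analytic and nonzero at the origin against the leading power $x^{\sigma^2-\widetilde{\theta}_0^2-\theta_t^2+2n\sigma+n^2}$ of the $n$th block forces $\sigma=k$ and $\eta=0$, so that only non-negative $n$ survive. The immediate obstruction is that at these integer, resonant values the structure constants $\mathcal C_0$ in \eqref{strfourier0} are singular, because the Barnes $G$-factors acquire negative-integer arguments. I would regularise each block by setting $\sigma\to\sigma-\delta$ and using $G(1+\delta-n)=\delta^n(-1)^{n(n-1)/2}G(1+n)+O(\delta^{n+\epsilon})$, producing the renormalised constants $\widetilde{\mathcal C}_0$ of \eqref{Ctilde}; the accompanying powers of $\delta$ must be absorbed consistently into the global normalisation $\mathcal N_0$.

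The overall constant is then fixed by the constant term: taking $n=0$ with empty partitions $\lambda=\mu=\emptyset$ in \eqref{f_CB} and comparing with the exact value $f_k(0)=F_N(0,k)$ from \eqref{z9} yields $\mathcal N_0=1/G(N+k+1)^2$. Finally, the truncation of the $n$-sum at $N$ would follow from the vanishing $\widetilde{\mathcal C}_0(\tfrac{N+k}{2};k;N+k+1)=0$, leaving exactly $N+1$ blocks; heuristically this reflects that the relevant object is the special \emph{rational} (classical) solution of Painlev\'e~V rather than a generic transcendent.

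The hard part, and the reason the statement remains only a conjecture, is making the regularisation rigorous. The expansion \eqref{fourier0} is proved in \cite{Lisovyy} for generic parameters, whereas here we sit precisely on a resonance where individual blocks blow up. One must show that the $\delta\to0$ limit of the \emph{sum} reproduces $f_k$ order by order in $x$, i.e.\ that the limit commutes with the convergent $x$-expansion and that the $\delta$-singularities cancel against $\mathcal N_0$ uniformly across all blocks, and that the infinite series genuinely collapses to the finite one rather than merely matching formally. A clean way to close the gap would be to prove directly that the Hankel determinant $H_k[w_0]$ of \eqref{Sum3} equals the regularised series, for instance by checking that both sides solve the same $\sigma$-Painlev\'e~V equation with the same analytic initial data \eqref{s_asym}, which would promote the formal coincidence of coefficients to a genuine identity.
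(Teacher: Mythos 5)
Your derivation follows the paper's own route essentially step for step: relating $f_k$ to $\tau_L$ via \eqref{ftau} with $\theta_*=k$, $\theta_t=\widetilde{\theta}_0=(k+N)/2$, fixing $\sigma=k$, $\eta=0$ by matching the leading behaviour, regularising the resonant structure constants through the Barnes $G$ limit to get $\widetilde{\mathcal C}_0$, normalising with $F_N(0,k)$ to obtain $\mathcal N_0=1/G(N+k+1)^2$, and truncating at $n=N$ via the vanishing of $\widetilde{\mathcal C}_0\lb\frac{N+k}{2};k;N+k+1\rb$. Your closing discussion of why the resonant limit keeps the statement conjectural, and the suggestion to verify the regularised series directly against the $\sigma$-Painlev\'e~V equation, accurately identifies the same gap the paper leaves open.
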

The coefficients 
$\beta_{2j}$ in \eqref{zz22}  for $j=1,\dots, k$  of the power series expansion of $f_k(x)$ near zero are  obtained from the   first  conformal block  $\mathcal{B}_{\lambda,\mu}\left(\frac{k+N}{2},\frac{k+N}{2},k,k\right)$.  This conformal block contains the term $(i-j)$ in the sum over the partition $\mu$ and therefore it is nonzero only for the empty partition.
 Furthermore, the factor $(k+i-j)$  in the product over boxes of $\lambda$ reduces the summation  in \eqref{f_CB} to Young diagrams with $\lambda_1\leq k$. 
Therefore, the sum over the first conformal block reduces to 
\begin{equation}
\label{f_CB1}
\begin{split}
f_k(x)&=\dfrac{\widetilde{\mathcal C}_0\lb \frac{N+k}{2};k;k\rb}{G(N+k+1)^2}e^{-\frac{N}{2}x-kx}\sum_{\lambda,\mu\in\mathbb{Y}}
 \mathcal{B}_{\lambda,\mu}\left(\frac{k+N}{2},\frac{k+N}{2},k,k\right)x^{|\lambda|+|\mu|} +\dots\\
&=f_k(0)e^{-\frac{N}{2}x-kx}\sum_{\overset{\lambda\in\mathbb{Y},}{\lambda_1\leq k}}\prod_{(i,j)\in\lambda}
 \frac{\left(2k+i-j\right)\left(N+2k+i-j\right)(k+i-j)}{ h_{\lambda}^2(i,j)\left(\lambda'_j-i-j+1+2k\right)^2}x^{|\lambda|}+\dots \\
  & =f_k(0)( 1 + b_2x^2 +b_4x^4 + b_6x^6 + b_8x^8 +b_{10}x^{10} + \dots),
\end{split}
\end{equation}
where  the coefficients $b_{2j}$ coincides with the coefficients $\beta_{2j}$ defined in  \eqref{zz22} and \eqref{beta}, namely
\beq
\begin{split}
b_2 := & \frac{N  (2 k+N)}{8-32 k^2}=\beta_2, \quad b_4 := \frac{N (2 k+N) \left(2 k N+N^2+2\right)}{128 \left(16 k^4-40 k^2+9\right)}=\beta_4, \\
b_6 := & -\frac{N (2 k+N)}{3072 \left(\left(1-4 k^2\right)^2 \left(16 k^4-136k^2+225\right)\right)} \\ 
  & \times (16 k^4 N^2+16 k^3 N \left(N^2+3\right)+4 k^2 \left(N^4-3 N^2+16\right) \\ 
  & \hspace{.2in}+4 k N \left(5-9 N^2\right)-9 N^4+10N^2-16)=\beta_6,\\
b_8 := & \frac{N (2 k+N) \left(2 k N+N^2+6\right) } {98304 \left(1-4 k^2\right)^2 (64 k^6-1328 k^4+7564k^2-11025)} \\
  & \times (16 k^4 N^2+16 k^3 N \left(N^2+3\right)+4k^2 \left(N^4-27 N^2+40\right) \\ 
  & \hspace{.2in} +4 k N (29-33 N^2)-33 N^4+58N^2-40)=\beta_8,\\
b_{10} := & -\frac{N (2 k+N)}{3932160 (4k^2-81) (4 k^2-49) (4 k^2-25) (4 k^2-9)^2 (4 k^2-1)^2 } \\ 
  & \times (256 k^8 N^4+512 k^7 N^5+2560 k^7 N^3+384 k^6 N^6-1920 k^6 N^4+14080 k^6 N^2 \\ 
  & \hspace{.2in} +128 k^5 N^7-9600 k^5 N^5-23040 k^5 N^3+48640 k^5 N+16 k^4 N^8-8320 k^4 N^6\\
   & \hspace{.2in} -28208 k^4 N^4-67200 k^4 N^2+86016 k^4-2880 k^3 N^7+20064 k^3 N^5 \\ 
   & \hspace{.2in} -20960 k^3 N^3-54016 k^3 N-360 k^2 N^8 +31288 k^2 N^6+82960 k^2 N^4 \\ 
   & \hspace{.2in} +70768 k^2 N^2-215040 k^2+11976 k N^7+52920 k N^5 +97776 k N^3 \\ 
   & \hspace{.2in} -149280 k N+1497 N^8+8820 N^6+24444 N^4-74640 N^2+48384)=\beta_{10}.
\end{split}
\eeq
We remark that the combinatorial expression for the coefficient $b_{2k}=\beta_{2k}$ provided by the conformal block expansion \eqref{f_CB1} should be consistent with the combinatorial expression
 obtained in \cite{Dehaye:2008}.
 
Finally, we want to consider the limit $N\to\infty$ as done in 
\cite{Lisovyy1} that reduces the $\tau_L$ function of  the Painlev\'e V 
equation  to the $\tau$ function of the  Painlev\'e III equation.  In our 
case  we have that the quantity
\beq
\dfrac{ f_k(\frac{t}{N})}{N^{k^2}} 
\eeq
has a well-defined term-by-term limit as $N\to\infty$  that can be  easily obtained using the properties  of the Barnes $G$-function (see Appendix~\ref{Barnes}) and 
the fact that  $\mathcal{B}_{\lambda,\mu}\left(\frac{k+N}{2},\frac{k+N}{2},k,k\right)\left(\frac{t}{N}\right)^{|\lambda|+|\mu|} $ has a well-defined limit that can be calculated term-wise.
 Therefore, we can define the function
\beq
\tau_{III}(t):=\lim_{N\to\infty}\dfrac{ f_k(\frac{t}{N})}{N^{k^2}}.
\eeq
The function $\tau_{III}(t)$ has a conformal block expansion 

 \begin{equation}\label{tf}
\tau_{III}(t)=e^{-\frac{t}{2}}\sum_{n=0}^{\infty}
 {\mathcal C}_{III}(k;k+n) t^{2nk+n^2}\sum_{\lambda,\mu\in\mathbb{Y}}
 \mathcal{B}^{III}_{\lambda,\mu}\left(k,k+n\right)   t^{|\lambda|+|\mu|},
\end{equation}
where 
\beq
{\mathcal C}_{III}(k;\sigma) =\frac{G(1+k+\sigma)G(1+\sigma)^2}{(-1)^{\sigma(\sigma+k)}2^{2\sigma-2k}G(1+2\sigma)^2},
\eeq
and 
\begin{equation}
\label{CBPIII}
\begin{split}
{ \mathcal B}^{III}_{\lambda,\mu}\left(\theta_*,\sigma\right) = &
\prod_{(i,j)\in\lambda}
 \frac{ \left(\theta_*+\sigma+i-j\right)(\sigma+i-j)}{
 h_{\lambda}^2(i,j)\left(\lambda'_j+\mu_i-i-j+1+2\sigma\right)^2} \\
 & \times \prod_{(i,j)\in\mu}
 \frac{\left(\theta_*-\sigma+i-j\right)\left(-\sigma+i-j\right)}{
 h_{\mu}^2(i,j)\left(\lambda_i+\mu'_j-i-j+1-2\sigma\right)^2}\,.
\end{split}
\end{equation}

The function $\xi(t):=t\dfrac{d}{dt}\log  \tau_{III}(t)$ satisfies the $\sigma$-form of the Painlev\'e III equation \eqref{sl40} 
 with boundary conditions $\xi(0)=0$ and $\xi'(0)=0$.
We write the first few terms of the expansion of $\tau_{III}(t)$ near $t=0$:
\begin{align}
\tau_{III}(t)&=\dfrac{G(1+k)^2}{G(1+2k)}
\left[ -\frac{t^2}{4(4k^2-1)2!}+ \frac{t^4}{16(4k^2-1)(4k^2-9)4!}+\right.\\
&\left.- \frac{15t^6}{64(4k^2-1)^2(4k^2-25)6!}+ t^8\frac{105(4k^2-33)}{256(4k^2-1)^2(4k^2-9)(4k^2-25)(4k^2-49)8!}\right.\\
&-\left. t^{10}\frac{925(16k^4-360k^2+1497)}{1024(4k^2-1)^2(4k^2-9)^2(4k^2-25)(4k^2-49)(4k^2-81)10!}+\mathcal{O}(t^{11})\right]\,.
\end{align}
One can observe that the function $\tau_{III}(t)$ exactly reproduces the  coefficients $F(h,k)$.  In particular, we observe that it is sufficient to consider only the first conformal
block to obtain the coefficients $F(h,k)$ for $k>h-\frac{1}{2}$.  The quantity $\mathcal{B}^{III}_{\lambda,\mu}(k,k)$  vanishes for any non-empty partition $\mu$.  Therefore, it is sufficient to sum only over  Young diagrams  $\lambda$.
 Furthermore,
the factor $(k+i-j)$  in the product over boxes of $\lambda$ reduces the summation  in \eqref{CBPIII} to Young diagrams with $\lambda_1\leq k$. 
\begin{conjecture}
We have the following conjectural  relation  for the function $F(h,k)$ defined in \eqref{Fdef}:
\begin{equation}
\begin{split}
F(h,k)&=(-1)^{h}\dfrac{G(k+1)^2}{G(2k+1)}(2h)!\sum\limits_{\overset{\lambda\in\mathbb{Y}}{|\lambda|=2h,\lambda_1\leq k}}\prod_{(i,j)\in\lambda}
 \frac{ \left(2k+i-j\right)(k+i-j)}{
 h_{\lambda}^2(i,j)\left(\lambda'_j-i-j+1+2k\right)^2},
\end{split}
\end{equation}
with $k>h-\frac{1}{2}$.
\end{conjecture}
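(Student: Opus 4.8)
The plan is to deduce the formula from the coefficient-extraction identity
$F(h,k)=(-1)^{h}\frac{d^{2h}}{dt^{2h}}\tau_{III}(t)\big|_{t=0}=(-1)^{h}(2h)!\,[t^{2h}]\tau_{III}(t)$,
which is supplied by Theorem~\ref{Theorem2} together with the definition $\tau_{III}(t)=\lim_{N\to\infty}f_k(t/N)/N^{k^2}$ and the relation \eqref{sl9}. With this in hand, everything reduces to reading off the coefficient of $t^{2h}$ in the conformal block expansion \eqref{tf} of $\tau_{III}$ near $t=0$, so the real content is to identify precisely which blocks and which pairs $(\lambda,\mu)$ of Young diagrams can reach order $t^{2h}$.

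First I would carry out the truncation to a single block. The $n$-th summand of \eqref{tf} carries the prefactor $t^{2nk+n^2}$, hence for $n\geq1$ it is supported on powers $t^{m}$ with $m\geq 2k+1$. Under the hypothesis $k>h-\tfrac12$, i.e. $2k+1>2h$, only the block $n=0$ (the one with $k+n=k$) can contribute to $[t^{2h}]$; every higher block is invisible to that coefficient. Within the $n=0$ block I would then exploit two exact vanishings visible in \eqref{CBPIII} at $\theta_*=\sigma=k$: (i) the $\mu$-product contains the factor $(\theta_*-\sigma+i-j)=(i-j)$, which vanishes at the corner box $(1,1)$ of any nonempty $\mu$, forcing $\mu=\emptyset$; and (ii) the $\lambda$-product contains the factor $(\sigma+i-j)=(k+i-j)$, which vanishes at any box with $j-i=k$, i.e. exactly when $\lambda_1\geq k+1$, so that the surviving sum runs over $\lambda$ with $\lambda_1\leq k$. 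Setting $\mu=\emptyset$ (so $\mu_i=0$) collapses $\mathcal B^{III}_{\lambda,\emptyset}(k,k)$ to precisely the product $\prod_{(i,j)\in\lambda}(2k+i-j)(k+i-j)\,h_{\lambda}^{-2}(i,j)(\lambda'_j-i-j+1+2k)^{-2}$ displayed in the statement, and $\mathcal C_{III}(k;k)=G(k+1)^2/G(2k+1)=F(0,k)$ by \eqref{z10}.

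It then remains to extract the coefficient of $t^{2h}$ from $F(0,k)\,e^{-t/2}\sum_{\lambda_1\leq k}\mathcal B^{III}_{\lambda,\emptyset}(k,k)\,t^{|\lambda|}$, multiply by $(-1)^{h}(2h)!$, and reorganize the convolution of the exponential prefactor with the partition sum into the stated sum over $|\lambda|=2h$. This algebraic reorganization is the delicate combinatorial step, and I would pin it down by matching it against the explicit low-order data \eqref{zz26} and against Dehaye's closed form \eqref{FD} for $h=1,2,3$; these serve as sharp consistency checks and would immediately flag any normalization or sign slip in the handling of the $e^{-t/2}$ factor.

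The hard part will be justifying \eqref{tf} itself rather than the combinatorics that follow it. This demands (a) the Lisovyy--Nagoya--Roussillon conformal block expansion of the Painlev\'e~V $\tau$-function $\tau_L$ near $x=0$, which rests on CFT/Nekrasov-type input and whose convergence near the origin must be established in its own right; and (b) a proof that the $N\to\infty$ limit defining $\tau_{III}$ commutes with the block sum, i.e. that the term-by-term limits $\mathcal B_{\lambda,\mu}\big(\tfrac{k+N}{2},\tfrac{k+N}{2},k,k\big)(t/N)^{|\lambda|+|\mu|}\to\mathcal B^{III}_{\lambda,\mu}$ assemble, with uniform control on a neighbourhood of $t=0$, to the limit of the entire series. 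Once (a)--(b) are secured the reductions above are purely algebraic; the genuine analytic obstacle is the uniform-in-$N$ convergence underlying the degeneration of the Painlev\'e~V blocks to the Painlev\'e~III blocks. (For integer $h$ and $k>h-\tfrac12$ only the local behaviour at $t=0$ is needed, so the non-local difficulties flagged for half-integer $h$ do not intervene here.)
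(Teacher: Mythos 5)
Your route is the same one the paper takes: the statement is only a Conjecture there, and its ``derivation'' consists exactly of Theorem~\ref{Theorem2}, the conformal block expansion \eqref{tf}, truncation to the $n=0$ block for $k>h-\tfrac12$ (since the $n$-th block carries $t^{2nk+n^2}$), the vanishing of $\mathcal{B}^{III}_{\lambda,\mu}(k,k)$ for $\mu\neq\emptyset$ via the factor $\theta_*-\sigma+i-j=i-j$ at the corner box, and the restriction to $\lambda_1\leq k$ via the factor $k+i-j$. The paper likewise leaves the Lisovyy--Nagoya--Roussillon expansion and the interchange of the $N\to\infty$ limit with the block sum as unproven input, which is why the statement is a conjecture; your items (a)--(b) correctly identify that. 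Up to this point your reconstruction is faithful.

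The problem is the step you call ``the delicate combinatorial step.'' Reorganizing the convolution of $e^{-t/2}$ with $\sum_{\lambda}\mathcal{B}^{III}_{\lambda,\emptyset}(k,k)\,t^{|\lambda|}$ into the bare sum over $|\lambda|=2h$ is not a reorganization: the coefficient of $t^{2h}$ in the product is $\sum_{m=0}^{2h}\tfrac{(-1/2)^{2h-m}}{(2h-m)!}\sum_{|\lambda|=m}\mathcal{B}^{III}_{\lambda,\emptyset}(k,k)$, and the cross terms do not cancel. Your own proposed check at $h=1$ exposes this. One computes $\mathcal{B}^{III}_{(1),\emptyset}(k,k)=\tfrac12$, $\mathcal{B}^{III}_{(1,1),\emptyset}(k,k)=\tfrac{k+1}{8(2k+1)}$ and $\mathcal{B}^{III}_{(2),\emptyset}(k,k)=\tfrac{k-1}{8(2k-1)}$, so the bare sum over $|\lambda|=2$ equals $\tfrac{2k^2-1}{4(4k^2-1)}$, whereas the actual coefficient of $t^2$ in $\tau_{III}(t)/F(0,k)$ is $\tfrac18-\tfrac12\cdot\tfrac12+\tfrac{2k^2-1}{4(4k^2-1)}=-\tfrac{1}{8(4k^2-1)}$, consistent with $F(1,k)=\tfrac{F(0,k)}{4(4k^2-1)}$ from \eqref{zz26} and with the explicit expansion of $\tau_{III}$ printed after \eqref{CBPIII}. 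Read literally, the displayed conjecture would give $F(1,k)=-\tfrac{(2k^2-1)}{2(4k^2-1)}F(0,k)$, wrong in both sign and magnitude. The paper passes from \eqref{tf} to the conjecture without addressing the exponential prefactor (compare \eqref{f_CB1}, where the analogous coefficients $b_{2j}$ explicitly \emph{do} include the exponential), so either the stated formula must be read as the full coefficient extraction $(2h)!\,[t^{2h}]\bigl(e^{-t/2}\sum_{\lambda_1\leq k}\mathcal{B}^{III}_{\lambda,\emptyset}(k,k)\,t^{|\lambda|}\bigr)$, or an additional combinatorial identity would be required, and no such identity is available. In short, your plan matches the paper's reasoning, but the final step you defer is precisely where the argument (and the formula as displayed) breaks down, and your consistency check against \eqref{zz26} and \eqref{FD} would flag it.
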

In  \cite{Dehaye:2008} a combinatorial formula for $F(h,k)$ has been obtained and the first few terms of this formula are compatible with ours.

  \vskip 0.5cm

\appendix

\section{Differential-difference identities and the proof of Lemma \ref{Hn-m-lemma}}
\label{lemma-appendix}

\numberwithin{equation}{section}

The jump matrix of the $\Psi$-Riemann-Hilbert problem  (\ref{PV17})--(\ref{PV19}) is not only independent of $z$ and $x$, but 
also independent of $n$. Therefore, if we indicate the dependence of $\Psi(z)$ on $n$ as
\beq
\Psi(z) \equiv \Psi_n(z),
\eeq
we can state that the discrete logarithmic derivative $\Psi_{n+1}\Psi^{-1}_{n}(z)$ is also analytic 
in $\mathbb{C} \setminus ( \{0\} \cup\{1\} )$. In fact, since the singular right factors in the right-hand sides
of formulae (\ref{PV22}) and  (\ref{PV19}) are also $n$-independent, we conclude that the only
singularity of $\Psi_{n+1}\Psi^{-1}_{n}(z)$ is the simple pole at $z=0$. Therefore, we conclude that, 
in addition to the two differential equations (\ref{PV23}) and (\ref{PV24}), the function $\Psi(z,x)\equiv 
\Psi_{n}(z,x)$ satisfies the difference equation
\beq\label{differequ}
\Psi_{n+1}(z) = \left( \frac{1}{z} U_{-1} + U^{(n)}_0\right)\Psi_n(z) \equiv U_n(z)\Psi_n(z).
\eeq
Let us determine the structure of the matrix coefficients $U_{-1}$ and $U^{(n)}_0$ in (\ref{differequ}). 
We consider a more detailed form of the expansion (\ref{PV20}),
\beq\label{PV41}
\Psi_n (z)=     \begin{pmatrix}
(-1)^n & 0 \\
0 &  (-1)^{k+N-n}
\end{pmatrix}
      \left( I+zM_1     
 +\cdots  \right)  \begin{pmatrix}
z^{-n} & 0 \\
0 &  z^{n-3k-N}
\end{pmatrix}, \quad z \rightarrow 0,
\eeq
where
\beq\label{PV42}
M_1\equiv \begin{pmatrix}
a_n & b_n\\
c_n& d_n
\end{pmatrix} =
\begin{pmatrix}
-n+\frac{x}{2} & 0 \\
0 & n-k-N-\frac{x}{2}
\end{pmatrix} - m^{(n)}_1.
\eeq
Here $m^{(n)}_1$ is the matrix coefficient from the expansion  (\ref{PV110}) 
with the explicit indication of the dependence on the integer  $n$. Remembering 
the relation of the coefficient $m_1$ with the norm $h_n$ of the orthogonal polynomials $P_n(z)$ (see  (\ref{PV28})), we note that
\beq\label{PV47}
\begin{split}
a_n=-n+\frac{x}{2}-(m_1^{(n)})_{11}, \quad b_n=-(m_1^{(n)})_{12}=\frac{1}{2\pi i} h_n, \hspace{.25in} \\
c_n=-(m_1^{(n)})_{21}=\frac{2\pi i}{h_{n-1}}, \quad d_n=n-k-N-\frac{x}{2}-(m_1^{(n)})_{22}.
\end{split}
\eeq
Observe that, in particular,
\beq\label{PV471}
b_n c_{n+1} =1.
\eeq
Plugging (\ref{PV41}) into the right-hand side of the equation $U_n=\Psi_{n+1} \,\Psi_n^{-1}$, we have that
\beq\label{PV45}
\begin{split}
U_n = & \begin{pmatrix}
(-1)^{n+1} & 0 \\
0 &  (-1)^{k+N-n-1}
\end{pmatrix}
      \left( I+zM_1^{(n+1)}     
 +\cdots  \right)  \begin{pmatrix}
z^{-n-1} & 0 \\
0 &  z^{n+1-3k-N}
\end{pmatrix} \\
 & \times  \begin{pmatrix}
z^{n} & 0 \\
0 &  z^{-n-3k-N}
\end{pmatrix} 
 \left( I-zM_1^{(n)}     
 +\cdots  \right) 
 \begin{pmatrix}
(-1)^n & 0 \\
0 &  (-1)^{k+N-n}
\end{pmatrix} \\
= & -\frac{1}{z}\begin{pmatrix}
1& 0 \\
0 &  0
\end{pmatrix}  +
\begin{pmatrix}
a_n-a_{n+1} & (-1)^{N+k}b_n \\
-(-1)^{N+k}c_{n+1} &  0
\end{pmatrix}  + \mathcal{O}(z), \quad z \rightarrow 0.
\end{split}
\eeq
Hence,
\beq\label{U1U0}
U_{-1} = -\begin{pmatrix}
1& 0 \\
0 &  0
\end{pmatrix}
\eeq
and
\beq\label{U1U02}
U^{(n)}_0 = \begin{pmatrix}
a_n-a_{n+1} & (-1)^{N+k}b_n \\
-(-1)^{N+k}c_{n+1} &  0
\end{pmatrix}.
\eeq
Let us rewrite the Lax pair  (\ref{PV23})--(\ref{PV24})  as 
\beq\label{PV23n}
\frac{\partial\Psi_n}{\partial z}=\left( xA^{(n)}_\infty +\frac{A^{(n)}_0}{z}+\frac{A^{(n)}_1}{z-1} \right)\Psi_n \equiv A_n(z)\Psi_n(z),
\eeq
\beq\label{PV24n}
\frac{\partial\Psi_n}{\partial x}=zA^{(n)}_\infty \Psi_n \equiv B_n(z)\Psi_n(z),
\eeq
indicating explicitly the dependence of all objects on $n$. 
The compatibility of the equations (\ref{PV23n}) and (\ref{PV24n}) with the difference equation  (\ref{differequ}) 
 yields the differential-difference zero-curvature equations
\beq\label{PV51}
\frac{\partial U_n}{\partial z}=A_{n+1}(z)U_n -U_nA_n(z)
\eeq
and 
\beq\label{PV512}
\frac{\partial U_n}{\partial x}=B_{n+1}(z)U_n -U_nB_n(z).
\eeq
From (\ref{PV51}) and (\ref{PV512}) it follows that 
\beq\label{PV52}
A_\infty^{(n+1)}U_0^{(n)} -U_0^{(n)}A_\infty^{(n)}=0
\eeq
and
\beq\label{PV53}
\frac{d}{dx} U_0^{(n)}=   \begin{pmatrix}
1& 0 \\
0 &  0
\end{pmatrix} A_\infty^{(n)}-A_\infty^{(n+1)}
\begin{pmatrix}
1& 0 \\
0 &  0
\end{pmatrix}.
\eeq
Defining the matrix $A_\infty$ as
\beq\label{PV54}
A_\infty :=\begin{pmatrix}
\alpha_n& \beta_n\\
\gamma_n &  -\alpha_n
\end{pmatrix} 
\eeq
and using (\ref{U1U02}),
we have from (\ref{PV52}) that 
$$(a_n-a_{n+1})(\alpha_{n+1}-\alpha_n)=(-1)^{N+k}(\beta_{n+1} +\gamma_nb_n), \quad \gamma_{n+1}b_n+c_{n+1}\beta_n=0,$$
\beq\label{PV56}
\alpha_{n+1}b_n(-1)^{N+k}-\beta_n(a_n-a_{n+1})+(-1)^{N+k}b_n\alpha_n=0,
\eeq
$$\gamma_{n+1}(a_n-a_{n+1})+(-1)^{N+k}\alpha_{n+1}c_{n+1}+(-1)^{N+k}\alpha_nc_{n+1}=0.$$
Also, from (\ref{PV53}),
\beq\label{pv57}
(-1)^{N+k}\frac{d\,b_n}{dx}=\beta_n
\eeq
or
\beq\label{PV58}
\frac{1}{b_n}\frac{d\,b_n}{dx}=(-1)^{N+k}\beta_n\frac{1}{b_n}.
\eeq
Using (\ref{PV471}), one also obtains that
\beq\label{PV59}
\frac{d}{dx} \log b_n=(-1)^{N+k}c_{n+1}\beta_n.
\eeq
Recall now the relation between the functions $\Psi_n(z)$ and $\Phi_n(z)$,
\beq\label{T5}
\Psi_n(z)=Y^{(n)}(1)\Phi_n(z)z^{-\frac{3k+N}{2}}  (z-1)^{\frac{k+N}{2}}.
\eeq
Taking into account (\ref{TP14}), we conclude  from (\ref{T5}) that,
as $z\rightarrow \infty$,
\beq\label{T6}
\begin{split}
\Psi_n(z) & = Y^{(n)}(1)\left( I+\frac{\phi^{(n)}_1}{z}+\cdots \right) e^{\frac{xz}{2}\sigma _3 } \begin{pmatrix}
z^k & 0 \\
0 &  z^{-k}
\end{pmatrix} z^{-k}\left(1-\frac{N+k}{2}\frac{1}{z}+\cdots\right) \\
& = Y^{(n)}(1)\left( I+\frac{1}{z}\left(\phi^{(n)}_1-\frac{N+k}{2} I\right)+\cdots \right) e^{\frac{xz}{2}\sigma _3 } \begin{pmatrix}
1& 0 \\
0 &  z^{-2k}
\end{pmatrix} \\
& = Y^{(n)}(1)\left( I+\frac{1}{z}{\widehat{\phi}}^{(n)}_1 +\cdots \right) e^{\frac{xz}{2}\sigma _3 } \begin{pmatrix}
1 & 0 \\
0 &  z^{-2k}
\end{pmatrix},
\end{split}
\eeq
where
\beq\label{mhat}
{\widehat{\phi}}^{(n)}_1:=\phi^{(n)}_1-\frac{N+k}{2} I.
\eeq
Let us plug this expansion  into the difference equation (\ref{differequ}):
\beq\label{T9}
Y^{(n+1)}(1)\left(I+\frac{1}{z}{\widehat{\phi}}_1^{(n+1)} +\cdots \right) =\left[-\frac{1}{z}\begin{pmatrix}
1& 0 \\
0 &  0
\end{pmatrix}  + U_0^{(n)}
\right] 
Y^{(n)}(1)
\left(I+\frac{1}{z}{\widehat{\phi}}_1 ^{(n)}+\cdots \right).
\eeq
We then see that
\beq\label{T10}
Y^{(n+1)}(1)=U_n^{(0)}Y^{(n)}(1)
\eeq
and
\beq\label{T11}
Y^{(n+1)}(1){\widehat{\phi}}_1^{(n+1)}=U_n^{(0)}Y^{(n)}(1){\widehat{\phi}}_1^{(n)}-\begin{pmatrix}
1& 0 \\
0 &  0
\end{pmatrix} Y^{(n)}(1).
\eeq
Excluding $Y^{(n+1)}(1)$, we arrive at the formula
\beq\label{T12}
{\widehat{\phi}}_1 ^{(n+1)} - {\widehat{\phi}}_1 ^{(n)}\equiv 
\phi_1 ^{(n+1)} - \phi_1 ^{(n)}=-\left[Y^{(n)}(1)\right]^{-1}\left[U_n^{(0)}\right]^{-1}
\begin{pmatrix}
1& 0 \\
0 &  0
\end{pmatrix} Y^{(n)}(1).
\eeq
Put
\beq\label{T13}
Y^{(n)}(1):=\begin{pmatrix}
p_n& q _n\\
r_n &  s_n
\end{pmatrix}.
\eeq
Note that since $\det Y^{(n)}(1)=1$ we have
\beq
p_ns_n - q_nr_n = 1.
\eeq
With these notations and recalling (\ref{U1U02}), we have from (\ref{T12}) 
that
\beq\label{T14}
\begin{split}
\phi_1 ^{(n+1)} & - \phi_1 ^{(n)} \\  
& = -\begin{pmatrix}
s_n& -q_n \\
-r_n &  p_n
\end{pmatrix}\begin{pmatrix}
0& -(-1)^{N+k}b_n \\
(-1)^{N+k}c_{n+1}&  a_n-a_{n+1}
\end{pmatrix}\begin{pmatrix}
1& 0 \\
0 &  0
\end{pmatrix} \begin{pmatrix}
p_n& q_n \\
r_n &  s_n
\end{pmatrix} \\
& = - \begin{pmatrix}
-q_n(-1)^{N+k}c_{n+1}& -(-1)^{N+k} s_n b_n+q_n(a_{n+1}-a_n \\
(-1)^{N+k} p_n c_{n+1}&  r_n b_n(-1)^{N+k} +p_n (a_n-a_{n+1})
\end{pmatrix} \begin{pmatrix}
1& 0 \\
0 &  0
\end{pmatrix} \begin{pmatrix}
p_n& q_n \\
r_n &  s_n
\end{pmatrix} \\
& = - \begin{pmatrix}
-(-1)^{N+k} q_n c_{n+1}& 0\\
(-1)^{N+k} p_n c_{n+1}&  0
\end{pmatrix} \begin{pmatrix}
p_n& q_n \\
r_n &  s_n
\end{pmatrix}.
\end{split}
\eeq
In particular,
\beq\label{T16}
(\phi_1^{(n+1)})_{11} - (\phi_1^{(n)})_{11}=(-1)^{N+k} p_n q_n c_{n+1}.
\eeq
On the other hand, from (\ref{PV26}) it follows that
\beq\label{T17}
A^{(n)}_{\infty}=\frac{1}{2}\begin{pmatrix}
p_n& q_n \\
r _n&  s_n
\end{pmatrix} \begin{pmatrix}
1& 0 \\
0 &  -1
\end{pmatrix} \begin{pmatrix}
s_n & -q_n \\
-r_n &  p_n
\end{pmatrix}
=\frac{1}{2}\begin{pmatrix}
p_ns_n+qr& -2p_nq_n \\
-2p_nr_n&  -s_np_n-r_nq_n
\end{pmatrix}.
\eeq
Comparing these equations with (\ref{PV54}), we conclude that
\beq\label{T19}
\beta_n=-p_n q_n.
\eeq
Therefore, (\ref{T16}) can be rewritten as 
\beq\label{T20}
(\phi_1^{(n+1)})_{11} - (\phi_1^{(n)})_{11}=-(-1)^{N+k} \beta_n  c_{n+1}
\eeq
which, together with ({\ref{PV59}}), yields the important formula
\beq\label{T21} 
\frac{d}{dx}\log b_n=-(\phi_1^{(n+1)})_{11} + (\phi_1^{(n)})_{11}
\eeq
or, remembering  (\ref{PV47}),
\beq\label{T211} 
\frac{d}{dx}\log h_n=-(\phi_1^{(n+1)})_{11} + (\phi_1^{(n)})_{11}.
\eeq
With equation (\ref{T211}) we are ready to prove Lemma 2. Indeed, taking into account that
\beq\label{DH1}
\log H_n- \log H_1= \log h_{n-1}+\log h_{n-2}+ \cdots + \log h_1,
\eeq
we have from (\ref{T211}) that
\beq\label{DH2}
\frac{d}{dx}\log H_n -\frac{d}{dx}\log H_1= -(\phi_1^{(n)})_{11} + (\phi_1^{(1)})_{11}.
\eeq
Hence, in order to obtain the statement of Lemma \ref{Hn-m-lemma} one only has to calculate explicitly $(\phi_1^{(1)})_{11}$ and  show that
\begin{equation}\label{lemma2proof}
(\phi_1^{(1)})_{11} =-\frac{d}{dx}\log H_1 + \frac{N+k}{2}.
\end{equation}

First, we notice that if we define $\kappa^{(n)}$ as the matrix coefficient in the expansion
\beq\label{DH6}
Y^{(n)}(s)=Y^{(n)}(1)\Bigl( I+ \kappa^{(n)}(s-1)+ \cdots\Bigr), \quad s \rightarrow 1,
\eeq
then by a straightforward calculation we arrive at the relation
\beq\label{DH7}
\phi^{(n)}_{1}=-\kappa^{(n)} +\frac{N+k}{2}\sigma_3.
\eeq
Indeed, we have that (dropping the indication of the dependence of $n$)
\beq\label{DH5}
\begin{split}
\Phi(z)& = Y^{-1}(1)\Psi (z,x)z^{\frac{3k+N}{2}}(z-1)^{-\frac{N+k}{2}} \\
 & = Y^{-1}(1)  \chi (z) e^{\frac{xz}{2}\sigma_3} \begin{pmatrix}
1& 0 \\
0 & z^{( -3k-N)}(z-1)^{N+k}
\end{pmatrix}  z^{\frac{3k+N}{2}}(z-1)^{-\frac{N+k}{2}} \\
 & = Y^{-1}(1)  X (z) e^{\frac{xz}{2}\sigma_3} \begin{pmatrix}
z^{\frac{3k+N}{2}}(z-1)^{-\frac{N+k}{2}}& 0 \\
0 & z^{\frac{( -3k-N)}{2}}(z-1)^{\frac{N+k}{2}}
\end{pmatrix} 
\end{split}
\eeq
and hence, as $z\rightarrow \infty$,
\beq
\begin{split}
\Phi(z) & = Y^{-1}(1)  X (z)  
\left(I+ \frac{N+k}{2}\frac{1}{z}\sigma_3+ \cdots \right) z^{k\sigma_3}
e^{\frac{xz}{2}\sigma_3} \\
 & = Y^{-1}(1)  Y\left(\frac{z-1}{z}\right) 
\left(I+ \frac{N+k}{2}\frac{1}{z}\sigma_3+ \cdots \right) z^{k\sigma_3}
e^{\frac{xz}{2}\sigma_3} \\
 & = Y^{-1}(1)  Y(1) 
\left(I+ \kappa \left(\frac{z-1}{z}-1\right) +\frac{N+k}{2}\frac{1}{z}\sigma_3+ \cdots \right) z^{k\sigma_3}
e^{\frac{xz}{2}\sigma_3} \\
 & = \left[ I-\frac{1}{z}\left(\kappa -\frac{N+k}{2}\sigma_3+ \cdots\right) \right]z^{k\sigma_3}
e^{\frac{xz}{2}\sigma_3},
\end{split}
\eeq
where the last equation, in view of the definition (\ref{Phihatinfty0}) of $\phi_1^{(n)}$, is equivalent to (\ref{DH7}).

Secondly, from the definition (\ref{Ydef0}) of the function $Y^{(n)}(z)$ we have that
\beq\label{DH8}
Y^{(1)}(s)=\begin{pmatrix}
P_1(s)& \frac{1}{2\pi i}\int_C\frac{P_1(s')w_{0}(s') }{s'-s}ds' \\
-\frac{2\pi i}{h_0}P_0(s) & - \frac{1}{h_0}\int_C\frac{w_{0}(s') }{s'-s}ds'
\end{pmatrix},
\eeq
where
\beq
P_0(s)=1,\quad  P_1(s)=s+c,\quad 
c=-\frac{1}{\int _{C}w_{0}(s) ds} \int_{C}  s w_{0}(s) ds = -\frac{1}{h_0}\int_{C}  s w_{0}(s) ds.
\eeq
Hence,
\beq\label{DH9}
Y^{(1)}(1)=\begin{pmatrix}
1+c & \frac{1}{2\pi i}\int_C\frac{(s+c) w_{0}(s) }{s-1}ds \\
-\frac{2\pi i}{h_0} & - \frac{1}{h_0}\int_C\frac{w_{0}(s) }{s-1}ds 
\end{pmatrix},
\quad
Y^{(1)\prime}(1)=\begin{pmatrix}
1& \frac{1}{2\pi i}\int_C\frac{(s+c) w_{0}(s) }{(s-1)^2}ds \\
0 & - \frac{1}{h_0}\int_C\frac{w_{0}(s) }{(s-1)^2}ds 
\end{pmatrix}.
\eeq
For the matrix  coefficient $\kappa$ we have (recall that $\det Y^{(n)} \equiv 1$)
\beq\label{DH11}
\kappa^{(1)}=Y^{(1)-1}(1) Y^{(1)\prime}(1)=\begin{pmatrix}
- \frac{1}{h_0}\int_C\frac{w_{0}(s) }{s-1}ds &-\frac{1}{2\pi i}\int_C\frac{(s+c) w_{0}(s) }{(s-1)^2}ds \\
2\pi i& 1+c
\end{pmatrix}
\begin{pmatrix}
1 & \frac{1}{2\pi i}\int_C\frac{(s+c) w_{0}(s) }{(s-1)^2}ds \\
0& - \int_C\frac{w_{0}(s) }{(s-1)^2}ds.
\end{pmatrix}.
\eeq
Therefore,
\beq\label{DH12}
(\kappa^{(1)})_{11}=- \frac{1}{h_0}\int_C\frac{w_{0}(s) }{s-1}ds 
\eeq
and (cf. (\ref{DH7}))
\beq\label{DH13}
({\phi}_1^{(1)})_{11}=\frac{N+k}{2}+\frac{1}{h_0}\int_C\frac{w_{0}(s) }{s-1}ds 
= \frac{N+k}{2} - \frac{1}{h_0} \int_C\frac{e^{\frac{x}{1-s} }}{(1-s)^{2k+1}s^{N+k}}ds,
\eeq
where
\beq\label{DH14}
h_0=\int_{C}w_0(s)ds = \int_C\frac{e^{\frac{x}{1-s} }}{(1-s)^{2k}s^{N+k}}ds.
\eeq
Now note that 
\beq
H_1 = h_0
\eeq
and 
\beq\label{DH15}
\frac{d}{d x} \log H_1=\frac{1}{H_1} \int_C\frac{e^{\frac{x}{1-s} }}{(1-s)^{2k+1}s^{N+k}}ds 
= \frac{1}{h_0} \int_C\frac{e^{\frac{x}{1-s} }}{(1-s)^{2k+1}s^{N+k}}ds. 
\eeq
Equation (\ref{lemma2proof}) follows from (\ref{DH13}) and (\ref{DH15}). This completes the proof of Lemma \ref{Hn-m-lemma}.

\section{A second  derivation of the second term in the expansion of \texorpdfstring{$\sigma_k(x)$}{TEXT}  }
In this appendix we derive the second term of the expansion \eqref{ex_sigma} directly from the Laguerre determinant.
We write the Laguerre determinant as 
\eq
\label{Laguerre-det}
\mathcal{L}(x):={\mbox{det}} \left[L_{N+k-1-(i+j)}^{(2k-1)} (-x)\right] _{i,j=0,\cdots ,k-1}.
\endeq
Combining this with \eqref{Sum1} and \eqref{sigmadef} gives
\eq
\sigma_k(x) = -Nk + \frac{\mathcal{L}'(x)}{\mathcal{L}(x)}x.
\endeq
Therefore the desired term is 
\eq
\sigma_k'(0) = \frac{\mathcal{L}'(0)}{\mathcal{L}(0)}.
\endeq
Fix the determinant size $k$.  Note from \eqref{laugerre1} that each entry of 
the determinant in \eqref{Laguerre-det} has the structure 
\eq
L_{N+k-1-(i+j)}^{(2k-1)}(-x) = \ell_{i,j} + \frac{N+k-1-(i+j)}{2k}\ell_{i,j}x + \mathcal{O}(x^2), \quad x\to 0.
\endeq
Here $\ell_{i,j}$ (which depends on $N$ and $k$) can be read off from 
\eqref{laugerre1}, although we will not need its particular form.  Now 
$\mathcal{L}(0)$ is simply the matrix with $ij$ entry $\ell_{i,j}$, and 
\eq
\label{Lprime-at-0}
\begin{split}
\mathcal{L}'(0) = & \frac{1}{2k}\begin{vmatrix} (N+k-1)\ell_{0,0} & \ell_{0,1} & \cdots & \ell_{0,k-1} \\ (N+k-2)\ell_{1,0} & \ell_{1,1} & \cdots & \ell_{1,k-1} \\ \vdots & \vdots & & \vdots \\(N+0)\ell_{k-1,0} & \ell_{k-1,1} & \cdots & \ell_{k-1,k-1} \end{vmatrix} + \frac{1}{2k}\begin{vmatrix} \ell_{0,0} & (N+k-2)\ell_{0,1} & \cdots & \ell_{0,k-2} \\ \ell_{1,0} & (N+k-3)\ell_{1,1} & \cdots & \ell_{1,k-1} \\ \vdots & \vdots & & \vdots \\ \ell_{k-1,0} & (N-1)\ell_{k-1,1} & \cdots & \ell_{k-1,k-1} \end{vmatrix} \\ 
  & + \cdots + \frac{1}{2k}\begin{vmatrix} \ell_{0,0} & \ell_{0,1} & \cdots & (N+0)\ell_{0,k-1} \\ \ell_{1,0} & \ell_{1,1} & \cdots & (N-1)\ell_{1,k-1} \\ \vdots & \vdots & & \vdots \\ \ell_{2,0} & \ell_{2,1} & \cdots & (N-k+1)\ell_{k-1,k-1} \end{vmatrix}.
\end{split}
\endeq
A series of straightforward determinant manipulations now shows that 
$\mathcal{L}'(0) = \frac{N}{2}\mathcal{L}(0)$, which proves the proposition.
First, notice that using multilinearity we can write each determinant in 
\eqref{Lprime-at-0} as the sum of $N\mathcal{L}(0)/2k$ and an 
$N$-independent term.  As there are $k$ determinants, the $N$-dependent 
terms sum to $N\mathcal{L}(0)/2$, the expected answer.  We now show the 
$N$-independent terms cancel, using $k=3$ and $k=2$ to illustrate the odd 
and even cases, respectively.  

If $k$ is odd, we use multilinearity to shift all the coefficients of 
$\ell_{i,j}$ to be the same as the center summand, canceling extra terms in 
pairs (we drop the common factor of $1/2k$):
\eq
\label{k-odd-calc}
\begin{split}
& \begin{vmatrix} 2\ell_{0,0} & \ell_{0,1} & \ell_{0,2} \\ 1\ell_{1,0} & \ell_{1,1} & \ell_{1,2} \\ 0\ell_{2,0} & \ell_{2,1} & \ell_{2,2} \end{vmatrix} + \begin{vmatrix} \ell_{0,0} & 1\ell_{0,1} & \ell_{0,2} \\ \ell_{1,0} & 0\ell_{1,1} & \ell_{1,2} \\ \ell_{2,0} & -1\ell_{2,1} & \ell_{2,2} \end{vmatrix} + \begin{vmatrix} \ell_{0,0} & \ell_{0,1} & 0\ell_{0,2} \\ \ell_{1,0} & \ell_{1,1} & -1\ell_{1,2} \\ \ell_{2,0} & \ell_{2,1} & -2\ell_{2,2} \end{vmatrix} \\
= & \begin{vmatrix} (1+1)\ell_{0,0} & \ell_{0,1} & \ell_{0,2} \\ (0+1)\ell_{1,0} & \ell_{1,1} & \ell_{1,2} \\ (-1+1)\ell_{2,0} & \ell_{2,1} & \ell_{2,2} \end{vmatrix} + \begin{vmatrix} \ell_{0,0} & 1\ell_{0,1} & \ell_{0,2} \\ \ell_{1,0} & 0\ell_{1,1} & \ell_{1,2} \\ \ell_{2,0} & -1\ell_{2,1} & \ell_{2,2} \end{vmatrix} + \begin{vmatrix} \ell_{0,0} & \ell_{0,1} & (1-1)\ell_{0,2} \\ \ell_{1,0} & \ell_{1,1} & (0-1)\ell_{1,2} \\ \ell_{2,0} & \ell_{2,1} & (-1-1)\ell_{2,2} \end{vmatrix} \\
= & \begin{vmatrix} 1\ell_{0,0} & \ell_{0,1} & \ell_{0,2} \\ 0\ell_{1,0} & \ell_{1,1} & \ell_{1,2} \\ -1\ell_{2,0} & \ell_{2,1} & \ell_{2,2} \end{vmatrix} + \begin{vmatrix} \ell_{0,0} & 1\ell_{0,1} & \ell_{0,2} \\ \ell_{1,0} & 0\ell_{1,1} & \ell_{1,2} \\ \ell_{2,0} & -1\ell_{2,1} & \ell_{2,2} \end{vmatrix} + \begin{vmatrix} \ell_{0,0} & \ell_{0,1} & 1\ell_{0,2} \\ \ell_{1,0} & \ell_{1,1} & 0\ell_{1,2} \\ \ell_{2,0} & \ell_{2,1} & -1\ell_{2,2} \end{vmatrix}.
\end{split}
\endeq
Each row now has an associated coefficient $j$, 
$j\in\{-(k-1)/2,...,(k-1)/2\}$.  These coefficients sum to zero, so we can 
rewrite $\mathcal{L}(0)$ by multiplying each row by $x^j$:
\eq
\begin{vmatrix} \ell_{0,0} & \ell_{0,1} & \ell_{0,2} \\ \ell_{1,0} & \ell_{1,1} & \ell_{1,2} \\ \ell_{2,0} & \ell_{2,1} & \ell_{2,2} \end{vmatrix} = \begin{vmatrix} x^1\ell_{0,0} & x^1\ell_{0,1} & x^1\ell_{0,2} \\ x^0\ell_{1,0} & x^0\ell_{1,1} & x^0\ell_{1,2} \\ x^{-1}\ell_{2,0} & x^{-1}\ell_{2,1} & x^{-1}\ell_{2,2} \end{vmatrix}.
\endeq
Taking the derivative of both sides with respect to $x$ and then setting 
$x=1$ shows the $N$-independent terms (those in \eqref{k-odd-calc}) are zero.

If $k$ is even, then there is no center summand.  We now shift the 
coefficients so the coefficient of the top row is $k/2$ and that of the 
bottom row is $-k/2+1$.  Note that one summand already has these 
coefficients, so while most shifts will cancel in pairs, the final summand 
introduces a new term $-(k/2)\mathcal{L}(0)$:
\eq
\begin{vmatrix} 1\ell_{0,0} & \ell_{0,1} \\ 0\ell_{1,0} & \ell_{1,1} \end{vmatrix} + \begin{vmatrix} \ell_{0,0} & 0\ell_{0,1} \\ \ell_{1,0} & -1\ell_{1,1} \end{vmatrix} = \begin{vmatrix} 1\ell_{0,0} & \ell_{0,1} \\ 0\ell_{1,0} & \ell_{1,1} \end{vmatrix} + \begin{vmatrix} \ell_{0,0} & 1\ell_{0,1} \\ \ell_{1,0} & 0\ell_{1,1} \end{vmatrix} - 1\begin{vmatrix} \ell_{0,0} & \ell_{0,1} \\ \ell_{1,0} & \ell_{1,1} \end{vmatrix}.
\endeq
We can now rewrite $x^{k/2}\mathcal{L}(0)$ as $\mathcal{L}(0)$ with each 
row multiplied by $x^j$, where $j$ is the appropriate coefficient:
\eq
x^1\begin{vmatrix} \ell_{0,0} & \ell_{0,1} \\ \ell_{1,0} & \ell_{1,1} \end{vmatrix} = \begin{vmatrix} x^1\ell_{0,0} & x^1\ell_{0,1} \\ x^0\ell_{1,0} & x^0\ell_{1,1} \end{vmatrix}.
\endeq
Differentiating both sides with respect to $x$ and setting $x=1$ shows 
the $N$-independent terms are zero if $k$ is even.
\section{The Barnes \texorpdfstring{$G$}{TEXT}-function\label{Barnes}}
 The Barnes $G$-function is defined as the infinite product
\beq
 G\left(1+z\right) := \left(2\pi\right)^{\frac{z}{2}}\exp\left(-\frac{z+z^2\left(1+\gamma\right)}{2}\right)\prod_{k=1}^{\infty}
 \left(1+\frac{z}{k}\right)^k \exp\left(\frac{z^2}{2k}-z\right),
\eeq
 where $\gamma$ is Euler's constant.
  The Barnes $G$-function satisfies the functional equation 
\beq
G\left(1+z\right)=\Gamma\left(z\right)G\left(z\right).
\eeq
It  is analytic in the whole complex plane and has
   the following asymptotic expansion as $|z|\rightarrow\infty$, $\mathrm{arg}\,z\neq\pi$:
  \beq
  \log G(1+z)=\left(\frac{z^2}{2}-\frac{1}{12}\right)\log z-\frac{3z^2}{4}
  +\frac{z}{2}\log 2\pi+\zeta'(-1)+\mathcal{O}\left(z^{-2}\right).
  \eeq
   It satisfies the useful relation
 \beq\label{barnesaux1}
\dfrac{ G(1+z+n)G(1-z)}{G( 1-z-n)G(1+z)}=\left(-1\right)^{\frac{n(n+1)}{2}}\left(\frac{\pi}{\sin\pi z}\right)^n,\qquad n\in\mathbb{Z}.
 \eeq


\begin{thebibliography}{99}

\bibitem{BBBCPRS}
E. Bailey, S. Bettin, G. Blower, J.B. Conrey, A, Prokhorov, M.O. Rubinstein and N.C. Snaith, 
Mixed moments of characteristic polynomials of random unitary matrices, arXiv:1901.07479.

\bibitem{BleherIts:1999}
P. Bleher and A.R. Its,
Semiclassical asymptotics of orthogonal polynomials, Riemann-Hilbert problem, and universality in the matrix model,
{\it Ann. Math.}
{\bf 150},
185--266
(1999).

\bibitem{BornemannFM:2017}
F. Bornemann, P.J. Forrester, and A. Mays, 
Finite size effects for spacing distributions in random matrix theory: circular ensembles and Riemann zeros,
{\it Stud. Appl. Math.} 
{\bf 138}, 
401--437
(2017).

\bibitem{BumpG}
D. Bump and A. Gamburd,
On the averages of characteristic polynomials from the classical groups,
{\it Comm. Math. Phys.} 
{\bf 265}, 
227--274 
(2006).

\bibitem{Clarkson2005}
 P. Clarkson, 
Special polynomials associated with rational solutions of the fifth Painlev\'e  equation, 
{\it J. Comput. Appl. Math.} 
{\bf  178}, 
111--129 
(2005).

\bibitem{Clarkson2013}
P. Clarkson,  
Recurrence coefficients for discrete orthonormal polynomials and the Painlev\'e equations,
{\it J. Phys. A}
{\bf  46},  
185205 
(2013).

\bibitem{ConreyFKRS:2003}
J.B. Conrey, D.W. Farmer, J.P. Keating, M.O. Rubinstein, and N.C. Snaith, 
Autocorrelation of random matrix polynomials,
{\it Comm. Math. Phys.} 
{\bf 237}, 
365--395 
(2003).

\bibitem{ConreyFKRS:2005}
J.B. Conrey, D.W. Farmer, J.P. Keating, M.O. Rubinstein, and N.C. Snaith, 
Integral moments of L-functions,
{\it Proc. Lond. Math. Soc. (3)} 
{\bf 91}, 
33--104 
(2005).

\bibitem{ConreyFKRS:2008}
J.B. Conrey, D.W. Farmer, J.P. Keating, M.O. Rubinstein, and N.C. Snaith, 
Lower order terms in the full moment conjecture for the Riemann zeta function,
{\it J. Number Theory} 
{\bf 128},
1516--1554 
(2008).

\bibitem{ConreyRS:2006}
J.B. Conrey, M.O. Rubinstein, and N.C. Snaith,
Moments of the derivative of characteristic polynomials with an application to the Riemann zeta function,
{\it Comm. Math. Phys.}
{\bf 267},
611--629
(2006).

\bibitem{Conway:1990}
J. Conway, 
{\it A Course in Functional Analysis},
Springer-Verlag, 
New York, 
1990.

\bibitem{Dehaye:2008} 
P.-O. Dehaye, 
Joint moments of derivatives  of characteristic polynomials,
{\it Algebra Number Theory} 
{\bf 2}, 
31--68 
(2008).

\bibitem{Dehaye:2010}
P.-O. Dehaye, 
A note on moments of derivatives of characteristic polynomials,
22nd International Conference on Formal Power Series and Algebraic Combinatorics (FPSAC 2010), 681--692, 
{\it Discrete Math. Theor. Comput. Sci. Proc., AN}, 
Assoc. Discrete Math. Theor. Comput. Sci., Nancy, 2010. 

\bibitem{DeiftZ:1995}
P.A. Deift and X. Zhou, 
Asymptotics for the Painlev\'e II equation,
{\it Comm. Pure Appl. Math.} 
{\bf 48},
277--337
(1995).

\bibitem{Dyson:1962}
F.J. Dyson, 
Statistical theory of the energy levels of complex systems I, II, III, 
{\it J. Math. Phys.}
{\bf 3}, 
140--175
(1962).

\bibitem{FokasIKN:2006}
A.S. Fokas, A.R. Its, A.A. Kapaev, and V.Y. Novokshenov, 
{\it Painlev\'e Transcendents. The Riemann-Hilbert Approach.} 
Mathematical Surveys and Monographs 
{\bf 128}. 
American Mathematical Society, 
Providence, RI, 
2006.

\bibitem{ForresterW:2002}
P.J. Forrester and N.S. Witte,
Application of the $\tau$-function theory of Painlev\'e equations to random matrices:  $P_V$, $P_{III}$, the LUE, JUE, and CUE,
{\it Comm. Pure Appl. Math.}
{\bf 55},
679--727
(2002).

\bibitem{ForresterW:2006}
P.J. Forrester and N.S. Witte,
Boundary conditions associated with the Painlev\'e III$^\prime$ and V evaluations of some random matrix averages,
{\it J. Phys. A}
{\bf 39},
8983--8995
(2006).

\bibitem{ForresterW:2015}
P.J. Forrester and N.S. Witte,
Painlev\'e II in random matrix theory and related fields,
{\it Constr. Approx.} 
{\bf 41}, 
589--613 
(2015).

\bibitem{Lisovyy1}
O. Gamayun, N. Iorgov, and O. Lisovyy,
How instanton combinatorics solves Painlev\'e~VI,~V~and~IIIs, 
{\it J. Phys. A} 
{\bf 46}, 
335203
(2013).

\bibitem{Hughes1}
C.P. Hughes,
On the characteristic polynomial of a random unitary matrix and the Riemann zeta function,
Ph.D. Thesis, University of Bristol
(2001).

\bibitem{Hughes2}
C.P. Hughes,
Random matrix theory and discrete moments of the Riemann zeta function,
{\it J. Phys. A}
{\bf 36},
2907
(2003).

\bibitem{HughesKO:2000}
C.P. Hughes, J.P. Keating, and N. O'Connell, 
Random matrix theory and the derivative of the Riemann zeta function, 
{\it Proc. R. Soc. Lond. A} 
{\bf 456}, 
2611--2627 
(2000).

\bibitem{HughesKO:2001}
C.P. Hughes, J.P. Keating, and N. O'Connell,  
On the characteristic polynomial of a random unitary matrix, 
{\it Comm. Math. Phys.} 
{\bf 220}, 
429--451 
(2001).

\bibitem{its} 
A.R. Its, 
Large $N$ asymptotics in random matrices, 
in 
{\it Random Matrices, Random Processes and Integrable Systems}, 
CRM Series in Mathematical Physics, 
ed. John Harnad, 
Springer, 
2011.

\bibitem{Its:2011}
A.R. Its,
Painlev\'e transcendents, 
in 
{\it The Oxford Handbook of Random Matrix Theory},
eds. G. Akemann, J. Baik, and P. Di Francesco,
Oxford University Press,
2011.

\bibitem{JM} 
M. Jimbo and T. Miwa, 
Monodromy preserving deformations of linear ordinary differential equations with rational coefficients. II, 
{\it Physica D} 
{\bf 2},
407--448
(1981).

\bibitem{JMU} 
M. Jimbo, T. Miwa, and K. Ueno, 
Monodromy preserving deformation of linear ordinary differential equations with rational coefficients. I. General theory and $\tau$-function,
{\it Physica D} 
{\bf 2}, 
306--352
(1981).

\bibitem{KM}
K. Kajiwara and T. Masuda, 
On the Umemura polynomials for the Painlev\'e III equation, 
{\it Phys. Lett. A}
 {\bf 260}, 
462--467
(1999).

\bibitem{Keating2005}
J.P. Keating, 
Random matrices and number theory, 
in {\it Proceedings of the Les Houches Summer School on Applications of Random Matrices in Physics, Les Houches 2004}, eds. E. Brezin et al., Springer, 
2005.

\bibitem{Keating2017}
J.P. Keating,
Random matrices and number theory: some recent themes, 
in {\it Stochastic Processes and Random Matrices: Lecture Notes of the Les Houches Summer School: Volume 104, July 2015}, 
eds. G. Schehr, A. Altland, Y.V. Fyodorov, N. O'Connell, and L.F. Cugliandolo,
2017.

\bibitem{KeatingS:2000}
J.P. Keating and N.C. Snaith,
Random matrix theory and $\zeta(1/2+it)$,
{\it Comm. Math. Phys.}
{\bf 214},
57--85
(2000).

\bibitem{KeatingS:2000b}
J.P. Keating and N.C. Snaith,
Random matrix theory and $L$-functions at $s=1/2$,
{\it Comm. Math. Phys.}
{\bf 214},
91--110
(2000).

\bibitem{KeatingS:2011}
J.P. Keating and N.C. Snaith, 
Random matrix theory and number theory, 
in {\it The Handbook on Random Matrix Theory}, eds. G. Akemann, J. Baik, and P. Di Francesco, 
Oxford University Press,
2011.

\bibitem{Lisovyy}
O. Lisovyy, H. Nagoya, and J. Roussillon, 
 Irregular conformal blocks and connection formulae for Painlev\'e V functions, 
{\it J. Math Phys.}
{\bf 59},
091409
(2018).

\bibitem{Masuda:2004}
T. Masuda,
Classical transcendental solutions of the Painlev\'e equations and their degeneration,
{\it Tohoku Math. J.}
{\bf 56},
467--490
(2004).

\bibitem{Masuda}
T. Masuda, Y. Ohta, and K. Kajiwara, 
A determinant formula for a class of rational solutions of Painlev\'e V equation, 
{\it Nagoya Math. J.} 
{\bf  168},  
1--25 
(2002).

\bibitem{Montgomery:1973}
H.L. Montgomery, 
The pair correlation of zeros of the zeta function, 
{\it Proc. Symp. Pure Math.} 
{\bf 24},
181--193
(1973). 

\bibitem{NoumiY:1998}
M. Noumi and Y. Yamada, 
Umemura polynomials for the Painlev\'e V equation,
{\it Phys. Lett.}
{\bf A247},
65--69
(1998).

\bibitem{Okamoto:1980a}
K. Okamoto, 
Polynomial Hamiltonians associated with Painlev\'e equations. I, 
{\it Proc. Japan Acad. Ser. A Math. Sci.} 
{\bf 56}, 
264--268
(1980).

\bibitem{Okamoto:1980b}
K. Okamoto, 
Polynomial Hamiltonians associated with Painlev\'e equations. II, 
{\it Proc. Japan Acad. Ser. A Math. Sci.} 
{\bf 56}, 
367--371
(1980).

\bibitem{Okamoto:1987}
K. Okamoto, 
Studies on the Painlev\'e equations.  II.  Fifth Painlev\'e equation $P_V$,
{\it Japan. J. Math.} 
{\bf 13}, 
47--76
(1987).

\bibitem{Okamoto:1987b}
K. Okamoto,
Studies on the Painlev\'e equations IV.  Third Painlev\'e equation $P_{III}$,
{\it Funkcial. Ekvac.}
{\bf 30},
305--332
(1987).

\bibitem{Riedtmann}
H. Riedtmann,
A combinatorial approach to mixed ratios of characteristic polynomials,
arXiv:1805.07261
(2018).

\bibitem{Snaith:2010}
N.C. Snaith,
Riemann zeros and random matrix theory,
{\it Milan J. Math.}
{\bf 78},
135--152
(2010).

\bibitem{Winn:2012} 
B. Winn, 
Derivative moments for  characteristic polynomials from the CUE,
{\it Comm. Math. Phys.} 
{\bf 315}, 
531--562 
(2012).

\end{thebibliography}
\end{document}